\newtheorem{theorem}{Theorem}
\newtheorem{lemma}[theorem]{Lemma}
\newtheorem{corollary}[theorem]{Corollary}
\newtheorem{proposition}[theorem]{Proposition}
\newtheorem{example}[theorem]{Example}
\newtheorem{remark}[theorem]{Remark}
\theoremstyle{definition}
\newtheorem{definition}[theorem]{Definition}
\newenvironment{proofthmmodel}{\noindent {\textit{Proof of Theorem \ref{thm_model}.}}}{\quad \hfill $\Box$}
\newcommand{\PP}{{\mathbb P}}
\newcommand{\RR}{{\mathbb R}}
\newcommand{\cE}{\mathcal E}
\newcommand{\cM}{\mathcal M}
\newcommand{\x}{\mathtt{x}}
\newcommand{\tp}{\tilde{p}}
\newcommand{\tx}{\tilde{\x}}
\newcommand{\co}{\operatorname{co}}
\newcommand{\Inc}{\operatorname{Inc}}
\begin{document}

\title{Phylogenetic mixtures and linear invariants for equal input models}
\author{Marta Casanellas and Mike Steel}
\address{ MC: Department of Mathematics, Universitat Polit\`ecnica de Catalunya, Barcelona, Spain}
\address{ MS: Biomathematics Research Centre, University of Canterbury, Christchurch, New Zealand}

\begin{abstract}
The reconstruction of phylogenetic trees from molecular sequence data relies on modelling site substitutions by a
Markov process, or a mixture of such processes. In general, allowing mixed processes can result in different tree topologies becoming indistinguishable from the data, even for infinitely long sequences.
However, when the underlying Markov process supports  linear phylogenetic invariants, then provided these are sufficiently informative, the identifiability of the tree topology can be restored.
In this paper, we investigate a class of processes that support linear invariants once the stationary distribution is fixed, the `equal input model'.
This model generalizes  the `Felsenstein 1981' model (and thereby the Jukes--Cantor model) from four states to an arbitrary number of states (finite or infinite), and it can also be described by a `random cluster' process.
We describe the structure and dimension of the vector spaces of phylogenetic mixtures and of linear invariants for any fixed phylogenetic tree (and for all trees -- the so called `model invariants'), on any number $n$ of leaves. We also provide a precise description of the space of mixtures and linear invariants for the special case of $n=4$ leaves. By combining techniques from discrete random processes and  (multi-) linear algebra, our results build on a classic result that was  first established by James Lake in 1987.
\end{abstract}

\maketitle

\section{Introduction}
\label{intro}
Tree--based  Markov processes on a discrete state space play a central role in molecular systematics. These processes  allow biologists to model the evolution of characters and  thereby to develop techniques for inferring a phylogenetic tree for a group of species  from a sequence of characters (such as the sites at aligned DNA  or amino acid sequences  \citep{fel04}).  Under the assumption that each character evolves independently on the same underlying tree, according to a fixed Markov process, the tree topology  can be inferred in a statistically consistent way (i.e. with an accuracy approaching 1 as the number of characters grows) by methods such as maximum likelihood estimation (MLE) \citep{cha96} and techniques based on phylogenetic invariants \citep{fer15}.   This holds even though one may not know the values of the other (continuous) parameters associated with the model, which typically relate to the  length of the edges, and relative rates of different substitution types.

The assumption that all characters evolve under the same Markov process is a very strong one, and biologists generally allow the underlying process to vary in some way between the characters.  For example, a common strategy is to allow characters to evolve at different rates (i.e. the edge lengths are all scaled up or down in equal proportion at each site by a factor sampled randomly from some simple parameterized distribution). In that case, provided  the rate distribution is sufficiently constrained, the tree topology can still be inferred in a statistically consistent manner  \citep{all12,mat08}, and by using MLE, or related methods.

However,  when this distribution is not tightly constrained, or when edge lengths are free to vary in a more general fashion from character to character then different trees can lead to identical probability distributions on characters \citep{all12,ste94}.  In that case,  it can be impossible to decide which of two (or more) trees generated the given data, even when the number of characters tends to infinity. In statistical terminology, {\em identifiability} of the tree topology parameter is lost. For certain types of Markov models, however, identifiability of the tree topology is possible, even in these general settings. These
are models for which (i) linear relationships (called `linear phylogenetic invariants') exist between the probabilities of different characters, and which hold for all values of the other continuous parameters associated with the model (such as edge lengths) and (ii) these invariants can be used to determine
the tree topology (at least for $n=4$ leaves) \citep{ste11,stef07}.  The first such invariants, which we call \textit{linear topology invariants}, were discovered by James Lake in a landmark paper in 1987 \citep{lak87} for the  Kimura 2ST model, and the Jukes--Cantor submodel.

Linear topology invariants were known to exist for Kimura 2ST and Jukes--Cantor models, and the dimension of the corresponding (quotient) linear space had been computed for the Jukes--Cantor model in \citet{fu95} and \cite{ste95}. It is also known that
more general models such as Kimura 3ST or the general Markov model do not admit linear topology invariants (see for example \citep{sturmfels2004} and \citep{cas11}). Nevertheless, linear topology invariants had not been studied
for evolutionary models with more than 4 states or for models slightly more general than Jukes--Cantor.

In this paper we extend Lake--type invariants to a  more general setting and for another type of process, the `equal input'  model (defined shortly, but it can be regarded as the simplest  Markov process that allows  different states to have different stationary probabilities).
By building also on the approach of \citet{mat08} (which dealt just with  the 2-state setting) we investigate  the vector space of linear invariants,
and  describe the space of phylogenetic mixtures on a tree (or trees) under the equal input model once the stationary distribution is fixed. Note that the space of phylogenetic mixtures is dual to the space of phylogenetic invariants, and hence
studying one of these spaces translates into results for the other space. 
This leads to our main results  (Theorems 1 and 2)  which characterize the space of phylogenetic mixtures across all trees, and on a fixed tree (respectively), along with an algorithm for constructing a basis for the topology invariants.
It is worth pointing out that while linear \emph{topology} invariants are relevant for distinguishing distributions arising from mixtures of distributions on
particular tree topologies,
linear phlylogenetic invariants satisfied by distributions arising from mixtures of distributions on trees evolving under a particular model (\textit{model invariants})
can be used in model selection as in \citet{ked12}.  In brief summary, our  main results describe the vector
space (and its dimension) of the space of phylogenetic  mixtures of the equal input models for any  numbers $n$
of leaves and $\kappa$ of states:
\begin{itemize}
\item across all trees (Theorem~\ref{thm_model}) by providing a spanning set of independent points;
\item for a fixed tree (Theorem~\ref{thm_dimension}); and
\item for an infinite state version of the equal input model, known as Kimura's infinite allele model (Proposition~\ref{proinf}).
\end{itemize}

Using the duality between phylogenetic mixtures and linear invariants, in Corollary \ref{cor_topoinvar} we compute the dimension
of the quotient space of linear topology invariants and describe an algorithm for computing a basis of this space.
Note that the dimension of the space of mixtures had already been computed in \citet{cas12} and in \citet{fu95} for the Jukes--Cantor  model.  
Theorem~\ref{prop_linspace}  and Corollary~\ref{corocor} provide a more detailed description for trees with $n=4$ leaves. The case $n=4$ is of particular interest,
since the existence of a set of  linear phylogenetic invariants for this case and which, collectively, suffice to  identify the tree topology
means that there also exist  informative linear phylogenetic invariants that can identify any fully-resolved (binary) tree topology on   any number of leaves. This follows from the well-known fact that any binary tree topology is fully determined by its induced quartet trees  (for details and references, see  \citet{sem03}).

We also establish various other results along the way, including a  `separability
condition' from which a more general description of Lake--type invariants  follows (Proposition~\ref{lake}). We begin with some standard definitions, first for
Markov processes on trees, and then for the equal input model, which we show is formally equivalent to a random cluster
process on a tree (Proposition~\ref{ch07_equal_is_random}). We then develop a series of preliminary results and lemmas that will lead to the main results described above.

\section{Markov processes on trees}
\label{Markov_processes_on_trees}
Given a tree $T = (V,E)$ with leaf set $X$, a {\em Markov process on $T$} with state space $S$ is a collection of
random variables $(Y_v: v \in V)$ taking values in $S$, and which satisfies the following property.
For each interior vertex $v$ in $T$, if $V_1, \ldots, V_m$ are the sets of vertices in the connected components of $T-v$ then  the $m$ random variables $W_i = (Y_v: v\in V_i)$ are conditionally independent given $Y_v$.

Equivalently, if we were to direct all the edges away from some (root) vertex, $v_0$, then this condition says that
conditional on $Y_v$ (for an interior vertex $v$ of $T$) the states
in the subtrees descended from $v$ are independent of each other, and are also independent of the states
in the rest of the tree.

 A Markov process on $T$ is determined entirely by the
probability distribution $\pi$ at a root vertex $v_0$, and the assignment $e \mapsto  P^{(e)}$, that associates a transition matrix with each edge $e=(u,v)$ of $T$ (the edge is directed away from $v_0$).
Matrix  $P^{(e)}$ has row $\alpha$ and column $\beta$ entry equal to $P^{(e)}_{\alpha\beta} := \PP(Y(v)= \beta|Y(u)=\alpha)$, and so each row sums to 1.  If stochastic vector $\pi$ has the property that $\pi = \pi P^{(e)}$ for every edge $e$ of $T$, then $\pi$ is said to be a {\em stationary distribution} for the process.
A {\em phylogenetic model} is a Markov process on a tree where the transition matrices are required to
belong to a particular class $\cM$.

In this paper we will be concerned with trees in which the set $X$ of leaves are labelled, and all non-leaf (interior) vertices are unlabelled and have degree  at least three; these are called {\em phylogenetic $X$--trees} \citep{sem03}. A tree with a single interior vertex is called a {\em star}, while a tree for which every interior vertex has degree three is said to be {\em binary}. We will write $ab|cd$ for the binary tree on four leaves (a {\em quartet tree}) that has an edge separating leaves $a,b$ from $c,d$.     A function $\chi: X \rightarrow S$ is called a {\em character} and any Markov process on a tree with state space $S$ induces a (marginal) probability distribution on these characters.
An important algebraic feature of this distribution is that the probability of a character $\PP(\chi)$ under a Markov
process on $T$ is a polynomial function of the entries in the transition matrices.

\bigskip

\subsection{The equal input model}
\label{The_equal_input_model}
The {\em equal input model} ($EI$) for a set $S$ of $\kappa$ states is a particular type of Markov process on a tree, defined as follows.   Given a root vertex $v_0$ let  $\pi$ be a distribution of states at $v_0$ and
for each (directed) edge $e = (u,v)$ (directed away from $v_0$). In the $EI$ model, each transition matrix $P^{(e)}$ has the property that for
some value $\theta_e \in [0,1]$ and all states $\alpha, \beta \in S$ with $\alpha \neq \beta$ we have:
\begin{equation}
\label{Peq}
P^{(e)}_{\alpha\beta}= \pi_\beta \cdot \theta_e.
\end{equation}
 We shall assume that the distribution $\pi$ is strictly positive throughout the paper.

This model generalizes the familiar \emph{fully symmetric model} of $\kappa$ states (such as the `Jukes-Cantor model', when $\kappa=4$)
to allow each state to have its own stationary probability. In the case $\kappa =4$ with $S$ equal to the four nucleotide bases, the model is known as the {\em Felsenstein 1981 model}.   The defining property of the model is that the probability of a transition from $\alpha$ to $\beta$ (two distinct states) is the same, regardless of the initial state $\alpha (\neq \beta)$.

\begin{lemma}\label{lem_EI}
The following properties hold for the equal input model.
\begin{itemize}
\item[(i)] $P^{(e)}_{\alpha\alpha} = 1-\theta_e + \pi_\alpha \theta_e$.
\item[(ii)] $\pi$ is a stationary distribution for each vertex $v$ of the $T$ (i.e. $\PP(Y(v)=\alpha) = \pi_\alpha$).
\item[(iii)] The process is time-reversible (i.e. for each edge $e$, $\pi_\alpha P^{(e)}_{\alpha\beta} = \pi_\beta P^{(e)}_{\beta \alpha}$).
\item[(iv)] If $p$ is the probability that the ends of $e$ receive different states under the $EI$ model, then $p = (1- \sum_\alpha \pi_\alpha^2)\theta_e$.
\item[(v)] The process is multiplicatively closed. In other words, $(P^{(e)}P^{(e')})_{\alpha\beta} = \pi_\beta \theta$, where $\theta = 1-(1-\theta_e)(1-\theta_{e'})$.
\end{itemize}
\end{lemma}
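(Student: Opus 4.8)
The plan is to verify each of the five properties directly from the defining equation~\eqref{Peq}, $P^{(e)}_{\alpha\beta}=\pi_\beta\theta_e$ for $\alpha\neq\beta$, since each is a short computation once the right quantity is expanded. First I would establish (i): the rows of $P^{(e)}$ sum to $1$, so $P^{(e)}_{\alpha\alpha}=1-\sum_{\beta\neq\alpha}P^{(e)}_{\alpha\beta}=1-\sum_{\beta\neq\alpha}\pi_\beta\theta_e=1-\theta_e(1-\pi_\alpha)=1-\theta_e+\pi_\alpha\theta_e$, using $\sum_\beta\pi_\beta=1$. This identity makes the diagonal entries explicit and will be reused throughout.

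Next I would prove (iii) before (ii), since stationarity is cleanest to deduce from reversibility. For (iii) I would check $\pi_\alpha P^{(e)}_{\alpha\beta}=\pi_\beta P^{(e)}_{\beta\alpha}$: when $\alpha\neq\beta$ both sides equal $\pi_\alpha\pi_\beta\theta_e$ by~\eqref{Peq}, and when $\alpha=\beta$ the identity is trivial, so reversibility holds. To get (ii), I would first note that a reversible process with respect to $\pi$ has $\pi$ as a stationary distribution for each edge, i.e. $\pi P^{(e)}=\pi$: indeed $(\pi P^{(e)})_\beta=\sum_\alpha\pi_\alpha P^{(e)}_{\alpha\beta}=\sum_\alpha\pi_\beta P^{(e)}_{\beta\alpha}=\pi_\beta\sum_\alpha P^{(e)}_{\beta\alpha}=\pi_\beta$, using the detailed balance from (iii) and that rows sum to $1$. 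Since $\pi$ is the root distribution and is preserved across every edge, an induction along the (rooted) tree gives $\PP(Y(v)=\alpha)=\pi_\alpha$ for every vertex $v$, establishing (ii).

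For (iv), the probability that the two ends of $e$ receive different states is $\sum_\alpha\PP(Y(u)=\alpha)\sum_{\beta\neq\alpha}P^{(e)}_{\alpha\beta}$; by (ii) $\PP(Y(u)=\alpha)=\pi_\alpha$, and the inner sum is $\theta_e(1-\pi_\alpha)$ from the computation in (i), so $p=\sum_\alpha\pi_\alpha\theta_e(1-\pi_\alpha)=\theta_e\bigl(1-\sum_\alpha\pi_\alpha^2\bigr)$. Finally, for (v) I would compute the off-diagonal entries of the product: for $\alpha\neq\beta$, $(P^{(e)}P^{(e')})_{\alpha\beta}=\sum_\gamma P^{(e)}_{\alpha\gamma}P^{(e')}_{\gamma\beta}$, which I would split according to whether $\gamma$ equals $\alpha$, equals $\beta$, or neither, substituting~\eqref{Peq} and the diagonal formula from (i) in each case. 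Collecting terms should yield $\pi_\beta\bigl(1-(1-\theta_e)(1-\theta_{e'})\bigr)$, confirming that the product again has equal-input form with the stated composite parameter $\theta$. The only mildly delicate point is this last bookkeeping in (v), where several contributions must cancel correctly; I expect the main obstacle to be keeping that case-split clean, but it is a routine finite sum once the diagonal and off-diagonal formulas from (i) are in hand.
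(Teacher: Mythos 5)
Your proposal is correct and follows essentially the same direct verifications from the defining equation that the paper uses; the only differences are cosmetic (you derive (ii) from detailed balance in (iii) rather than by the paper's direct one-edge computation, and you sketch the case-split for (v), which the paper leaves as an exercise). Your outlined computation for (v) does close up correctly: the three contributions sum to $\pi_\beta(\theta_e+\theta_{e'}-\theta_e\theta_{e'})=\pi_\beta\bigl(1-(1-\theta_e)(1-\theta_{e'})\bigr)$.
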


\begin{proof}
For (i), $P^{(e)}_{\alpha\alpha}  = 1-\sum_{\beta \neq \alpha} P^{(e)}_{\alpha\beta}  = 1- \theta_e \sum_{\beta \neq \alpha} \pi_\beta = 1-\theta_e(1-\pi_\alpha)$.
For (ii), it suffices to show that if $(u,v)$ is a directed edge and $u$ has stationary distribution $\pi$ then $v$ does too.
But
$$\PP(Y(v)= \beta) = \sum_{\gamma}\pi_\gamma P^{(e)}_{\gamma\beta} = \pi_{\beta}P^{(e)}_{\beta\beta} + \sum_{\gamma \neq \beta} \pi_\gamma P^{(e)}_{\gamma\beta} = \pi_\beta.$$
For (iii), the result clearly holds if $\alpha = \beta$ so suppose $\alpha \neq \beta$. Then
$$\pi_\alpha P^{(e)}_{\alpha\beta} = \pi_\alpha(\pi_\beta\theta_e) = \pi_\beta (\pi_\alpha \theta_e) = \pi_{\beta}P^{(e)}_{\beta\alpha} .$$
For (iv),
$$p = \sum_\alpha \pi_\alpha \sum_{\beta \neq \alpha} P^{(e)}_{\alpha\beta} = \sum_\alpha  \pi_\alpha \sum_{\beta \neq \alpha} \pi_\beta \theta_e,$$
which simplifies for the expression in (iv).
Property (v) is left as an exercise.
\end{proof}

For an equal input model, the transition matrix $P^{(e)}$ has eigenvalue  $1-\theta_e$ with multiplicity $k-1$ (and eigenvalue 1 with multiplicity 1).
Also, for fixed $\pi$ the matrices $P^{(e)}$ commute, as they can be simultaneously diagonalized by a fixed matrix (which depends on $\pi$).
Equal input models with also have a continuous realisation with rate matrix $Q$ defined by its off-diagonal entries as follows:
$$Q _{\alpha\beta} = \pi_\beta, \mbox{ for all } \alpha, \beta\in S, \alpha \neq \beta$$
(the diagonal entries are determined by the requirement that each row of $Q$ sums to 0).
Then $P^{(e)} = \exp(Qt)$ for  $t= -\ln(1-\theta_e)$, and so $\theta_e = 1-e^{-t}$.
In the case where  $\pi$ is uniform, the $EI$ model reduces to the fully symmetric model in which all substitution events have equal probability.

One feature of the $EI$ model, that fails for most other Markov processes on trees, is the following. Let $\sigma$ be any partition of the state space $S$, and for a state $s \in S$ let $[s]$ denote the corresponding block of $\sigma$ containing $s$. Then for an $EI$ process $Y$ on the set $V$ of vertices of a phylogenetic tree $T$, let $\tilde{Y}$ be the induced stochastic process on $V$, defined by
$\tilde{Y}(v)=  [Y(v)]$ for all vertices $v$ of $T$.

\begin{proposition}
For any $EI$ model with parameters $\pi$ and $\{\theta_e\}$, and any partition $\sigma$ of $S$,
$\tilde{Y}$ is also an $EI$ Markov process on $T$, with parameters $\tilde{\pi}$ and $\{\theta_e\}$,
where for each block $B$ of $\sigma$, $\tilde{\pi}_B := \sum_{\beta \in B} \pi_\beta$.
\end{proposition}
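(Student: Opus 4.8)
The plan is to reduce the statement to a \emph{lumpability} computation and then invoke the factorization of a tree Markov process. Since an $EI$ process $Y$ on $T$ (rooted at $v_0$, edges directed away from $v_0$) has joint distribution $\PP(Y=y) = \pi_{y(v_0)} \prod_{e=(u,v)} P^{(e)}_{y(u)y(v)}$, and since conversely any distribution on state-assignments to $V$ that factorizes in this way—a root marginal times stochastic transition matrices along the directed edges—is a Markov process on $T$, it suffices to show that the aggregated distribution $\PP(\tilde Y = \tilde y) = \sum_{y:\,[y]=\tilde y} \PP(Y=y)$ factorizes as $\tilde\pi_{\tilde y(v_0)}\prod_{e=(u,v)} \tilde P^{(e)}_{\tilde y(u)\tilde y(v)}$, where $\tilde P^{(e)}$ is the $EI$ transition matrix determined by $\tilde\pi$ and $\theta_e$. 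Here the sum runs over all assignments $y$ of states to $V$ with $[y(w)] = \tilde y(w)$ for every vertex $w$.

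The key local step is to check that $P^{(e)}$ is strongly lumpable with respect to $\sigma$, with aggregated matrix of $EI$ type. For $\alpha \in B$ and any block $B'$ of $\sigma$, a short computation (using Lemma~\ref{lem_EI}(i) when $B'=B$) gives
$$\sum_{\beta \in B'} P^{(e)}_{\alpha\beta} = \begin{cases} \theta_e\,\tilde\pi_{B'}, & B'\neq B,\\ 1-\theta_e(1-\tilde\pi_B), & B'=B.\end{cases}$$
The point is that the term $\pi_\alpha\theta_e$ coming from the diagonal entry cancels against the missing summand in $\sum_{\beta\in B\setminus\{\alpha\}}\pi_\beta\theta_e$, so the right-hand side depends on $\alpha$ only through its block $B=[\alpha]$. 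Moreover the two cases are exactly the off-diagonal and diagonal entries of an $EI$ transition matrix with stationary distribution $\tilde\pi$ and parameter $\theta_e$; and $\tilde\pi$ is a genuine distribution since $\sum_B \tilde\pi_B = \sum_\beta \pi_\beta = 1$.

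With this in hand I would prove the factorization by peeling leaves. Summing $\PP(Y=y)$ over the states $y(\ell)\in\tilde y(\ell)$ of a leaf $\ell$ with parent $p$ replaces the factor $P^{(e)}_{y(p)y(\ell)}$ by $\sum_{\beta\in\tilde y(\ell)}P^{(e)}_{y(p)\beta}$, which by the displayed identity equals $\tilde P^{(e)}_{\tilde y(p)\tilde y(\ell)}$ and hence depends on $y(p)$ only through its fixed block $\tilde y(p)$. This factor therefore pulls out of the remaining summation as a constant, leaving a sum of the same product form over the tree $T-\ell$. Iterating removes all leaves and eventually all edges, and the final sum over the root, $\sum_{y(v_0)\in\tilde y(v_0)}\pi_{y(v_0)} = \tilde\pi_{\tilde y(v_0)}$, supplies the root factor. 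This yields the claimed factorization and identifies $\tilde Y$ as the $EI$ process with parameters $\tilde\pi$ and $\{\theta_e\}$.

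The main obstacle is precisely the interchange in the last paragraph: one must be sure that after aggregating a leaf the pulled-out factor is genuinely constant with respect to the remaining variables, i.e.\ that it cannot depend on the unaggregated value $y(p)$ of the parent. This is exactly what the lumpability identity guarantees, and it is the one place where the special $EI$ structure (the cancellation of $\pi_\alpha\theta_e$) is used; for a generic Markov matrix this constancy fails and the aggregated process need not be Markov. I would also remark that the choice of root is immaterial, since $\pi$ is stationary for every edge (Lemma~\ref{lem_EI}(ii)); alternatively, the whole statement is transparent from the random-cluster description of the $EI$ model, under which aggregating states simply replaces the i.i.d.\ cluster labels drawn from $\pi$ by their $\sigma$-blocks, drawn from $\tilde\pi$, leaving the edge-cut probabilities $\theta_e$ untouched.
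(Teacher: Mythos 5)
Your proof is correct, and the key local computation is the same as the paper's: the identity $\sum_{\beta\in B'}P^{(e)}_{\alpha\beta}=\tilde\pi_{B'}\theta_e$ for $B'\neq[\alpha]$ (with the diagonal case handled by complementation, exactly as in the published argument) is precisely the lumpability criterion and simultaneously identifies the aggregated matrix as $EI$ with parameters $\tilde\pi$ and $\theta_e$. Where you diverge is in how you pass from this local identity to the conclusion that $\tilde Y$ is a Markov process on $T$: the paper simply cites Theorem 6.3.2 of Kemeny and Snell (1976) as the lumpability criterion, whereas you prove the implication from scratch by writing the joint law as $\pi_{y(v_0)}\prod_{e=(u,v)}P^{(e)}_{y(u)y(v)}$ and peeling leaves, checking at each step that the aggregated factor depends on the parent only through its block and hence pulls out of the remaining sum. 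Your route is longer but self-contained, and it has the minor advantage of working directly with the tree-indexed process rather than importing a theorem stated for Markov chains; the paper's route is shorter at the cost of leaning on an external reference (and on the reader accepting its applicability edge-by-edge on a tree). Your closing observation that the statement is transparent from the random-cluster description is also correct, though in the paper's logical order that equivalence (Proposition~\ref{ch07_equal_is_random}) is established only afterwards, so it is best kept as a remark rather than as the proof.
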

\begin{proof}
By Theorem 6.3.2 of  \citet{kem76}, the condition for $\tilde{Y}$ to be a Markov process is that it satisfies a `lumpability' criterion that for any two choices $\alpha, \alpha' \in A \in \sigma$, and block $B \in \sigma$,
$$\PP(Y(v)  \in B|Y(u) = \alpha) = \PP(Y(v)  \in B|Y(u) = \alpha').$$
For each $B \neq A$, this last equality is clear from (\ref{Peq}), and since $\PP(Y(v)  \in A|Y(v) = \alpha) =1- \sum_{B \in \sigma, B \neq A}\PP(Y(v) \in B|Y(u) = \alpha)$ the criterion also holds
for the case $B=A$.
Finally, for $B \neq A$, $\PP(\tilde{Y}(v)=B|\tilde{Y}(u) = A) = \sum_{\beta \in B} (\pi_\beta \theta_e) = \tilde{\pi}_B\theta_e$.
\end{proof}

\bigskip

\subsection{A useful lemma}
\label{A_useful_lemma}
For results to come the following lemma, and its corollary  will be helpful.


\begin{lemma}\label{lem1}
For variables $x_1, x_2, \ldots, x_r$, consider polynomials $f_0({\bf x}),$ $\dots,$ $f_M({\bf x})$
 $\in \RR[x_1, \ldots, x_r]$ of the form
$$ f_i({\bf x}) = \sum_{A \subseteq [r]} c_A^{(i)} \prod_{j \in A} x_j, \quad c_A^{(i)}\in \RR.$$
\begin{itemize}
\item[(i)] Then $f_0\equiv 0$ (i.e. $c_A ^{(0)}= 0$ for all $A \subseteq [r]$) if and only if for any $t \neq 0$,
$f_0({\bf x}) =0$ for all ${\bf x} \in \{0, t\}^r.$
\item[(ii)]  Let $f=(f_1,\dots,f_M): \RR^r \rightarrow \RR^M$ and let $L:\RR^M\rightarrow \RR$ be a linear map.
Define an equivalence relation among the elements of $\{0, 1\}^r$ by $\x \sim \x'$ if $f(\x)=f(\x')$, and let $\x_1, \dots, \x_s$
be representatives of these equivalence classes. We call $q_i=f(\x_i)$, $i=1,\dots,s$.
Then $L(f(\x))=0$ for all $\x\in \RR^{r}$ if and only if $L(q_j)=0$ for  $j=1,\dots, s$.
\end{itemize}
\end{lemma}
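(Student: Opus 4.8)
The plan is to treat both parts through the single observation that each $f_i$ is \emph{multilinear}---every monomial $\prod_{j\in A}x_j$ is squarefree---so that such a polynomial is completely determined by its values on any product grid $\{0,t\}^r$.

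For part (i) the forward implication is immediate, so I would focus on the converse. I would index the points of $\{0,t\}^r$ by subsets: to $S\subseteq[r]$ associate the point $\mathbf{x}_S$ whose $j$-th coordinate is $t$ when $j\in S$ and $0$ otherwise. Since $\prod_{j\in A}(\mathbf{x}_S)_j$ equals $t^{|A|}$ when $A\subseteq S$ and vanishes otherwise, the evaluation collapses to $f_0(\mathbf{x}_S)=\sum_{A\subseteq S}c_A^{(0)}t^{|A|}$, so the hypothesis reads $\sum_{A\subseteq S}c_A^{(0)}t^{|A|}=0$ for every $S\subseteq[r]$. This is a triangular system for the inclusion order, and I would induct on $|S|$: the case $S=\emptyset$ forces $c_\emptyset^{(0)}=0$, and at a general $S$ every proper subset has already been eliminated, leaving $c_S^{(0)}t^{|S|}=0$, hence $c_S^{(0)}=0$ because $t\neq0$. (Equivalently, this is M\"obius inversion on the Boolean lattice.)

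For part (ii) the key step is to observe that $g:=L\circ f=\sum_i L_i f_i$ is itself a multilinear polynomial in $\x$, being a linear combination of the $f_i$. Since $q_j=f(\x_j)$ we have $L(q_j)=g(\x_j)$, and $g$ is constant on each equivalence class: if $f(\x)=f(\x')$ then $g(\x)=L(f(\x))=L(f(\x'))=g(\x')$. Thus ``$L(q_j)=0$ for all representatives $\x_j$'' is equivalent to ``$g$ vanishes on all of $\{0,1\}^r$''. Applying part (i) to $g$ with $t=1$ promotes this to $g\equiv0$ on $\RR^r$, which is precisely the assertion $L(f(\x))=0$ for all $\x\in\RR^r$; the reverse implication is trivial since each $\x_j$ lies in $\RR^r$.

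I do not anticipate a genuine obstacle, since everything reduces to multilinearity and the triangularity of the subset evaluation. The only points needing care are the bookkeeping in the induction of part (i)---verifying that the evaluation really decouples along the inclusion order so that exactly one coefficient is isolated at each step---and, in part (ii), being explicit that forming $g$ reduces the entire statement to part (i) using nothing beyond the linearity of $L$ and the definition of the equivalence classes.
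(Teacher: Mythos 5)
Your proposal is correct and follows essentially the same route as the paper: the paper also reduces (ii) to (i) via $h = L\circ f$, and proves (i) by evaluating at the subset-indexed points $\mathbf{x}^B$ to get the triangular system $\sum_{A\subseteq B}c_A t^{|A|}=0$, which it inverts by inclusion--exclusion (M\"obius inversion) rather than by your induction on $|S|$ --- a cosmetic difference you yourself flag as equivalent.
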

\begin{proof}

\noindent (i) The `only if' part holds automatically; for the `if' direction, given any subset $B$ of $[r]$, let $h(B) = h({\bf x}^B)$ where
$x^B_i=t$ if $i \in B$ and $x^B_i=0$ otherwise. Then $h(B)=0$ by hypothesis, and $h(B) = \sum_{A\subseteq B} c_At^{|A|}$, by definition.
Applying the (generalized) principle of inclusion and exclusion it follows that, for each $A \subseteq [n]$,
$c_At^{|A|} = \sum_{B\subseteq A} (-1)^{|A-B|} h(B)=0$, so $c_A=0$.

\noindent (ii) The map $h=L\circ f$ satisfies the hypotheses of (i), hence $L( f(\x))=0$ for all $\x$ if and only if $L( f(\x))=0$
for all $\x \in \{0, 1\}^r.$ Then the statement follows immediately due to the definition of the equivalence relation.
\end{proof}

In what follows we will use this lemma to check linear relations among the character probabilities.

In the $EI$ model, once we fix $\pi$, the probability $\PP_T(\chi|\Theta)$  of observing a character at the leaves of $T$ satisfies
the hypotheses of the corollary with  $r$ equal to the number of edges and variables in $\Theta=\{\theta_e\}_{e\in E(T)}$.
Indeed, by Lemma \ref{lem_EI} (i), any entry of the transition matrix $P^{(e)}$ is a linear function of $\theta_e$ and hence the expression
\begin{equation}
\label{ppsum}
\PP_T(\chi|\Theta) =\sum_{(s_v)_v \in S^{{\rm Int}(T)}}
\pi_{s_{v_0}}\prod_{(u,w)\in E(T)}
P^{(e)}_{s_u,s_w}
\end{equation}
(where the sum is over the states at the set ${\rm Int}(T)$ of interior vertices of $T$ and subject to the
convention that $s_w=\chi(l)$ if $w$ is the leaf $l$) satisfies the hypotheses of Lemma \ref{lem1}.

\begin{remark}\rm Lemma \ref{lem1} can be be slightly modified to accommodate  substitution matrices with more parameters as it was done in \citet{fu95}.
\end{remark}

\bigskip

\subsection{The equal input model as a random cluster model}
\label{The_equal_input_model_as_a_random_cluster_model}
Our alternative description of the $EI$ model is as an instance of the (finite) {\em random cluster model} (briefly \emph{RC}) on trees
(this phrase is also used  to study processes on graphs, such as the `Ising model' in physics).
For an unrooted phylogenetic tree with leaf set $[n]$, each edge $e$ of $T$ is cut independently with probability $p_e$.
The leaves in each connected component of the resulting disconnected graph  are then all assigned the same state $s$ with probability $\pi_s$,
independently of assignments to the other components (see Fig.~\ref{fig_randomcluster}). More precisely, for any binary function
$g:E(T)\rightarrow \{0,1\}$, define $C(g)$ to be the set of connected components in $T \setminus \{e\in E(T) |g(e)=1\}.$
Then the probability $\PP_T(\chi|\{p_e\}_e)$ of observing a character $\chi$ at the leaves of $T$ under the
$RC$ model is
\begin{equation}
\label{param$RC$}
\sum_{g:E(T)\rightarrow \{0,1\}} \PP(\chi|g)p_e^{g(e)}(1-p_e)^{1-g(e)}
\end{equation}
where $\PP(\chi|g)$ is 0 if $\chi(i)\neq \chi(j)$ for some leaves $i,j$ in the same connected component in $C(g)$ and is equal to
$\prod_{c \in C(g)}\pi_{\chi_c}$ otherwise (where $\chi_c$ denotes the value of $\chi$ at the leaves of $T$ that are in $c$).
In particular, the $RC$ model also satisfies the hypotheses of Lemma \ref{lem1}.

\begin{figure}[htb]
\centering
\includegraphics[scale=0.75]{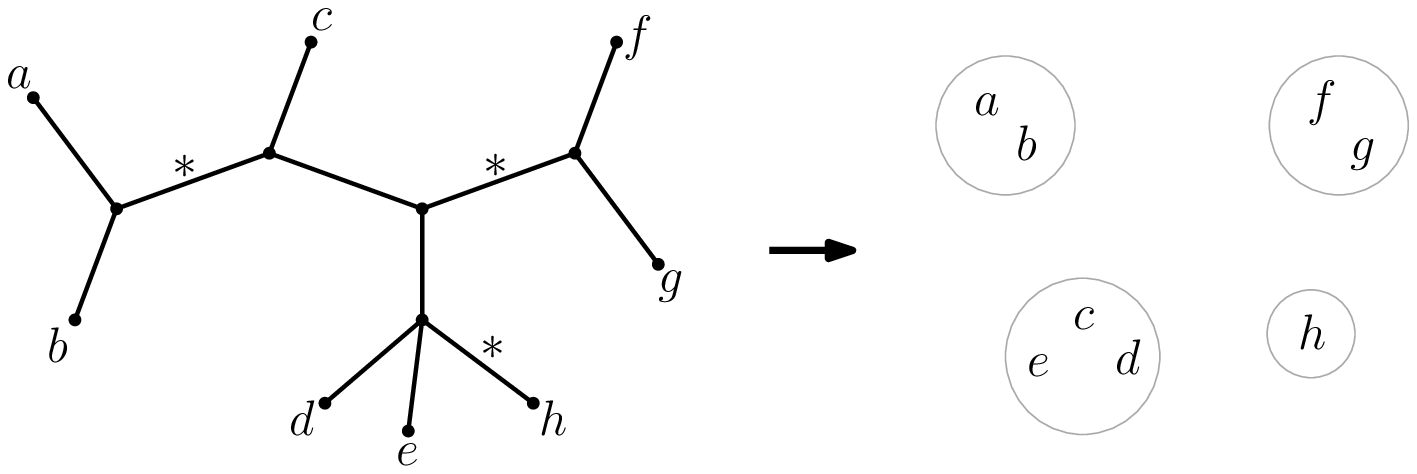}
\caption{Cutting the three edges marked * in the tree on the left leads to the partition of $X$ shown at right. Under the random cluster model
these four blocks are independently assigned states from the distribution $\pi$.}
\label{fig_randomcluster}
\end{figure}

\begin{proposition}
\label{ch07_equal_is_random}
The $EI$ model with parameters $\pi$ and $\{\theta_e\}$ produces an identical probability distribution on characters as the
random cluster model in which $p_e=\theta_e$ for each edge $e$ of $T$.
\end{proposition}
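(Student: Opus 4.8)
The plan is to exhibit a single augmented probabilistic model that realises both processes simultaneously, thereby avoiding any direct manipulation of the polynomial (\ref{ppsum}). The starting point is to rewrite the $EI$ transition matrix as a mixture of two elementary operations. Writing $\Pi$ for the rank-one stochastic matrix with $\Pi_{\alpha\beta}=\pi_\beta$ and $I$ for the identity, Lemma~\ref{lem_EI}(i) together with (\ref{Peq}) gives
\begin{equation}
\label{mix}
P^{(e)} = (1-\theta_e)\,I + \theta_e\,\Pi,
\end{equation}
since the right-hand side has $(\alpha,\beta)$-entry $(1-\theta_e)\delta_{\alpha\beta}+\theta_e\pi_\beta$, which equals $\pi_\beta\theta_e$ for $\beta\neq\alpha$ and $1-\theta_e+\pi_\alpha\theta_e$ for $\beta=\alpha$.

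First I would root $T$ at an arbitrary vertex $v_0$ and attach to each edge $e$ an independent Bernoulli variable $Z_e$ with $\PP(Z_e=1)=\theta_e$ (read as ``$e$ is cut''). I then realise the $EI$ process by sampling $Y(v_0)\sim\pi$ and, processing edges $e=(u,v)$ away from the root, setting $Y(v)=Y(u)$ when $Z_e=0$ and resampling $Y(v)\sim\pi$ independently when $Z_e=1$. By (\ref{mix}) this reproduces exactly the conditional law $\PP(Y(v)=\beta\mid Y(u)=\alpha)=P^{(e)}_{\alpha\beta}$, so the leaf distribution of this augmented process, after marginalising over the $Z_e$ and the interior states, is precisely $\PP_T(\chi\mid\Theta)$.

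The heart of the argument is then to condition on a fixed cut pattern $g=(Z_e)_{e}$ and identify the resulting leaf distribution with the random cluster weight $\PP(\chi\mid g)$. Removing the cut edges decomposes $T$ into the components $C(g)$. Along every retained (uncut) edge the state is copied, so $Y$ is constant on each component. Moreover each component receives an independent $\pi$-distributed state: the component containing $v_0$ inherits the root sample $Y(v_0)\sim\pi$, while every other component has a unique edge joining it to the root side, that edge is cut, and so the state at its endpoint inside the component was freshly resampled from $\pi$ and then copied throughout. Hence, conditional on $g$, the probability of observing $\chi$ is $\prod_{c\in C(g)}\pi_{\chi_c}$ when $\chi$ is constant on each block of $C(g)$ and $0$ otherwise -- exactly $\PP(\chi\mid g)$. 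Summing over $g$ and using the independence of the $Z_e$, whose joint law assigns $g$ the probability $\prod_e\theta_e^{g(e)}(1-\theta_e)^{1-g(e)}$, reproduces (\ref{param$RC$}) with $p_e=\theta_e$.

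I expect the main obstacle to be the step just described: verifying rigorously that a cut genuinely decouples a subtree, so that each component -- including the one containing the root -- carries a single state distributed independently as $\pi$. This is where the special ``resample from $\pi$'' structure of the $EI$ matrix in (\ref{mix}) is essential, and where one must be slightly careful about the distinction between the rooted realisation and the unrooted $RC$ description; the resulting leaf distribution is manifestly independent of the choice of $v_0$ (consistent with Lemma~\ref{lem_EI}(ii)--(iii)), so it indeed matches the root-free random cluster model. A purely algebraic alternative would substitute (\ref{mix}) into (\ref{ppsum}) and expand the product over edges, indexing the resulting terms by the function $g$: the $I$-factors collapse states across uncut edges while the $\Pi$-factors supply the $\pi$-weights, recovering the same formula, but with more bookkeeping than the coupling argument.
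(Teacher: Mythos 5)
Your proof is correct, but it takes a genuinely different route from the paper. The paper's proof is algebraic: it observes that both $\PP_T(\chi\mid\Theta)$ and the random cluster expression are multilinear in the edge parameters, invokes Lemma~\ref{lem1}(ii) to reduce the identity to the vertex parameter values $\theta_e\in\{0,1\}$, and then checks agreement there (where $\theta_e=1$ makes $P^{(e)}$ the rank-one matrix $\Pi$, i.e.\ a cut edge, and $\theta_e=0$ makes it the identity). You instead build a single coupling valid for \emph{all} $\Theta\in[0,1]^{E(T)}$ at once, via the decomposition $P^{(e)}=(1-\theta_e)I+\theta_e\Pi$ and auxiliary Bernoulli cut variables, and then condition on the cut pattern. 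Your argument is more self-contained (it never needs the interpolation lemma) and arguably more illuminating, since it exhibits the random cluster partition as a genuine latent variable of the $EI$ process rather than as an algebraic coincidence at boundary parameter values; the one point you rightly flag---that each non-root component is entered through exactly one cut edge, so its state is a fresh independent $\pi$-sample---is exactly the crux, and your treatment of it is sound. What the paper's approach buys in exchange is economy: Lemma~\ref{lem1} is developed anyway and reused heavily later (e.g.\ Lemma~\ref{lem_forests}), so reducing to the subforest parameters $\Theta_F$ costs nothing there and keeps the whole paper running on one mechanism. One small point common to both write-ups: components of $T$ minus the cut edges that contain no leaves contribute a factor $\sum_s\pi_s=1$ after marginalisation, which is why they can be ignored; it is worth a sentence if you want the identification with $\PP(\chi\mid g)$ to be airtight.
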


\begin{proof}
For the two models the probability  of a given character (given by Eqns. (\ref{ppsum}) and (\ref{param$RC$}))
satisfies the conditions required  by Lemma \ref{lem1} (ii), and so we can use
it with $M=2$ and $L$ the difference between the probability of a given character by the two models. Therefore,
it suffices to show that the two models produce the same probability distribution on characters  whenever  $\theta_e =1$ for all $e \in F$ and $\theta_e =0$ of all edges $e$ of $T$ not in $F$ (for all possible choices of subset $F\in E(T)$).
Given $F$, notice that if $\theta_e=1$ for a directed edge $e=(u,v)$ of $T$ in the $EI$ model, then
$P^{(e)}_{\alpha\beta} = \pi_\beta$ for \underline{all} $\beta \in S$, including $\beta=\alpha$. In other words, when $\theta_e=1$ for $e=(u,v)$, the state at $v$ is completely independent of the state at $u$. This is equivalent to cutting the edge and assigning a random state according to the distribution $\pi$ to $v$, and thereby to all the other vertices of $T$ for which there is a path to $v$ that does not cross another edge in $F$ (since $P^{(e)}$ is the identity matrix on those edges); this is just the process described by the random cluster model.
\end{proof}

\bigskip

\section{Linear phylogenetic invariants in phylogenetics}
\label{Linear_phylogenetic_invariants_in_phylogenetics}

\begin{definition}
Consider a phylogenetic model $\mathcal{M}$ with state space $S$ on a phylogenetic tree $T$ with $n$ leaves.
A \emph{phylogenetic invariant} of a tree $T$ under the model $\mathcal{M}$ is a polynomial $f$ in $S^n$ indeterminates
that vanishes on any distribution $\PP_{T,\Theta}$
that arises under the phylogenetic model $\mathcal{M}$ (that is, $f(p)=0$ if $p=\PP_{T,\Theta}$, for any set $\Theta$
of transition matrices and distribution at the root vertex).

We say that a polynomial in $S^n$ coordinates is a \emph{model invariant} if it is a phylogenetic invariant for any tree on $n$ leaves
under the phylogenetic model $\mathcal{M}$. A phylogenetic invariant of a tree $T$ that is not a model invariant is called a
\emph{topology invariant}.
\end{definition}

A phylogenetic invariant is a \emph{linear phylogenetic invariant}
(resp. \emph{linear model invariant}, \emph{linear topology invariant}) if each monomial involves exactly one indeterminate and has degree 1.
Note that this implies that the polynomial is homogeneous (the independent term is 0). There are phylogenetic invariants of degree 1 that are not homogeneous,
for example the \emph{trivial} phylogenetic invariant that arises from the observation that in a distribution all
coordinates must sum to one. However, taking this trivial invariant into account, any other phylogenetic invariant of degree 1 can be rewritten
as a homogeneous phylogenetic invariant of degree 1 (indeed, $\sum_i a_i \x_i+a$ is a phylogenetic invariant if and only if $\sum_i (a_i+a) \x_i$ is a phylogenetic invariant).
This is why we only call \emph{linear} phylogenetic invariants those that are homogeneous of degree 1. The sets of  linear model invariants and
linear phylogenetic invariants of a tree $T$ are vector spaces.

Linear phylogenetic invariants are of particular interest since they hold even if the process changes from
character to character (provide it stays within the model for which the invariant is valid).
An important early example of linear phylogenetic invariants were discovered by James Lake in 1987 \citep{lak87}.
In this paper, we first provide a new and more general version of Lake's invariants. It is the first time that linear topology invariants are given for non-uniform stationary distributions and for models on any number of states,
provided that they satisfy what we call the Partial Separability condition (see below). It is worth noting that in our Lake--type invariants the stationary distribution is assumed to be known.

For any phylogenetic $X$--tree, $T$ (not necessarily binary), and an interior vertex $v$ of $T$ consider the disconnected graph $T-v$.  Let $t$ and $t'$ be two of the trees incident with $v$.

Suppose that a Markov process $Y$ on  $T$ takes values in state space $S$. For any state $s$ of $S$
 write $Y(t)=s$ if all the leaves of $T$ that are in $t$ are in state $s$ (similarly for $t'$).
Consider the following property.

\begin{itemize}
\item[(PS)] {\em Partial separability.}
For some  interior vertex $v$, and for some subset $\{a_1, a_2,$ $b_1, b_2\}$ of four distinct elements of $S$ one has
$$\PP(Y(t)=a_i|Y(v)=s) = \pi(a_i)c, \mbox{ when } s \in S-\{a_1, a_2\}, \, i=1,2;$$
and
$$\PP(Y(t')=b_j|Y(v)=s) = \pi(b_j)d, \mbox{ when }  s \in S-\{b_1, b_2\}, \, j=1,2.$$

\end{itemize}
Here $c$ and $d$ are arbitrary functions dependent on the tree and associated parameters (but not
the states) and $\pi$ is an arbitrary function of the states such that $\pi(a_i)\neq 0$, $\pi(b_i)\neq 0$, $i=1,2$
(for various models with $\pi$ given by the stationary distribution).

Partial separability is satisfied by various models. For example, when  $|S|=4$,
it holds for the Kimura 2-ST model (and hence the Jukes-Cantor model) by taking $\{a_1, a_2\} = \{A, G\}$ (purines) and
$\{b_1, b_2\} = \{C, T\}$ (pyrimidines), in which case $\pi(a_i)=\pi(b_i) = \frac{1}{4}$ for $i=1,2$.
The property also holds for the fully symmetric model on any number of states. Moreover, the property holds for the $EI/RC$ model
on any number of states if $t$ and $t'$ are single leaves.
The partial separability condition should be viewed as an algebraic constraint rather than as a natural condition that one might expect to hold for  most evolutionary models. For instance it, is not a natural property satisfied by evolutionary models and, for instance, it is not satisfied for the $EI/RC$ model if $t$ or $t'$ are not single leaves.

Let $\cE$ be \emph{any} event that involves the states at  the leaves of $T$ not in $t$ or $t'$.
For example, if $y$ and $y'$ are leaves of $T$ not in $t$ or $t'$ then  $\cE$ might be the event that
$Y(y)=s$ and $Y(y')  = s'$ for some particular $s, s'\in S$.

Let us write  $p_{\cE ij}$ for the probability of the three--way conjunction $\cE\wedge \{Y(t)=a_i\} \wedge\{ Y(t')=b_j\}$.
Notice that $p_{\cE ij}$ is a sum of probabilities of various characters (i.e. a marginal distribution).
Let $$\tp_{\cE ij} = \frac{1}{\pi(a_i)\pi(b_j)} \cdot p_{\cE ij} \mbox{ and  let }
\Delta:= \tp_{\cE 11}+ \tp_{\cE 22} - \tp_{\cE 12}- \tp_{\cE 21}. $$

\begin{proposition}[Lake--type invariants]
\label{lake}
If a Markov process on $T$ satisfies the partial separability condition (PS), then
$\Delta=0.$
\end{proposition}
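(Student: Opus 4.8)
The plan is to condition everything on the state at the interior vertex $v$ and to exploit the conditional independence built into a Markov process on a tree. Since $t$, $t'$, and all the leaves involved in the event $\cE$ lie in distinct connected components of $T-v$, the defining property of the process tells us that, given $Y(v)=s$, the three events $\{Y(t)=a_i\}$, $\{Y(t')=b_j\}$, and $\cE$ are mutually independent. First I would therefore expand
\begin{equation*}
p_{\cE ij} = \sum_{s\in S} \PP(Y(v)=s)\,\PP(\cE\mid Y(v)=s)\,\PP(Y(t)=a_i\mid Y(v)=s)\,\PP(Y(t')=b_j\mid Y(v)=s),
\end{equation*}
abbreviating the last two conditional probabilities as $A_{i,s}$ and $B_{j,s}$, and writing $w_s := \PP(Y(v)=s)\,\PP(\cE\mid Y(v)=s)$ for the product of the first two factors, which does not depend on $i$ or $j$.

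Next I would divide by $\pi(a_i)\pi(b_j)$ to pass to $\tp_{\cE ij}$ and substitute into the definition of $\Delta$. Setting $\tilde A_{i,s} := A_{i,s}/\pi(a_i)$ and $\tilde B_{j,s} := B_{j,s}/\pi(b_j)$, the four-term alternating combination defining $\Delta$ factors termwise:
\begin{equation*}
\Delta = \sum_{s\in S} w_s\big(\tilde A_{1,s}\tilde B_{1,s} + \tilde A_{2,s}\tilde B_{2,s} - \tilde A_{1,s}\tilde B_{2,s} - \tilde A_{2,s}\tilde B_{1,s}\big) = \sum_{s\in S} w_s\,(\tilde A_{1,s}-\tilde A_{2,s})(\tilde B_{1,s}-\tilde B_{2,s}).
\end{equation*}
This reduction of the alternating sum to a single product is the algebraic heart of the argument, and it explains why the normalization by the $\pi$ values was chosen as it was.

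The crux — and the step I expect to carry the real content, rather than the routine bookkeeping above — is to show that every summand vanishes. Here I would use that $a_1,a_2,b_1,b_2$ are four \emph{distinct} states, so the sets $\{a_1,a_2\}$ and $\{b_1,b_2\}$ are disjoint; consequently every state $s\in S$ fails to lie in at least one of them. If $s\in S-\{a_1,a_2\}$, then partial separability gives $A_{i,s}=\pi(a_i)c$ for $i=1,2$, so $\tilde A_{1,s}=\tilde A_{2,s}=c$ and the first factor is zero; if instead $s\in\{a_1,a_2\}$, then disjointness forces $s\in S-\{b_1,b_2\}$, so $B_{j,s}=\pi(b_j)d$ for $j=1,2$ and the second factor is zero. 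In either case the product vanishes, whence $\Delta=0$. The only point demanding any care is verifying that the conditional independence genuinely applies to the particular grouping of components used here (one component for $t$, one for $t'$, and the remaining components for $\cE$), which is precisely what the tree Markov property guarantees.
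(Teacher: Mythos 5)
Your proof is correct and follows the same overall strategy as the paper: expand $p_{\cE ij}$ by conditioning on $Y(v)$ and using the tree Markov property, normalize by $\pi(a_i)\pi(b_j)$, and reduce to showing that each summand (indexed by the state $s$ at $v$) vanishes. The only real difference is in how that last step is carried out: the paper computes the per-state quantity $\Delta_s$ explicitly and checks three cases ($s\in\{a_1,a_2\}$, $s\in\{b_1,b_2\}$, and $s$ outside both), whereas you factor the alternating combination as $(\tilde A_{1,s}-\tilde A_{2,s})(\tilde B_{1,s}-\tilde B_{2,s})$ and observe that, since $\{a_1,a_2\}$ and $\{b_1,b_2\}$ are disjoint, every $s$ lies outside at least one of these sets, so (PS) forces at least one factor to be constant in the index and hence zero. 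This factorization is a slightly cleaner and more illuminating way to organize the identical computation --- it makes transparent why the specific $\pi$-normalization in the definition of $\tp_{\cE ij}$ is the right one --- but it buys no additional generality; both arguments rest on exactly the same use of conditional independence and of the hypothesis that the four states are distinct.
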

\begin{proof}
By the Markov property,
$$p_{\cE ij} = \sum_{s}\PP(Y(v)=s)\cdot  \PP(\cE|Y(v)=s)\cdot \PP(Y(t)={a_i}|Y(v)=s)\cdot \PP(Y(t')={ b_j}|Y(v)=s).$$
Let
$r_{ij} = \pi(a_i)\cdot \pi(b_j)$, and let
$$\Delta_s =  r_{22}p_1p'_1 + r_{11}p_2p'_2 - r_{12}p_2p'_1 - r_{21}p_1p'_2,$$
where $p_i = \PP(Y(t)={a_i}|Y(v)=s)$, and $p'_j = \PP(Y(t')={ b_j}|Y(v)=s)$.
Then we can write
$$\Delta ={ \frac{1}{\pi(a_1)\pi(a_2)\pi(b_1)\pi(b_2)}}\sum_s \PP(Y(v)=s) \cdot \PP(\cE|Y(v)=s) \cdot \Delta_s.$$
Thus it suffices to show that $\Delta_s=0$ for all $s$.

We consider three cases:
(i): $s\in \{a_1, a_2\}$, (ii) $s \in \{b_1, b_2\}$ and (iii) $s \in S-\{a_1, a_2, b_1, b_2\}$.

In Case (i), suppose $s=a_i$. Then
$p'_1 = \pi(b_1)d$ and $p'_2 = \pi(b_2)d$, and so
$$\Delta_s = d[p_1r_{22}\pi(b_1) + p_2r_{11}\pi(b_2) - p_2r_{12}\pi(b_1) - p_1 r_{21}\pi(b_2))].$$
$$= dp_1[r_{22}\pi(b_1)- r_{21}\pi(b_2))] + dp_2[r_{11}\pi(b_2) -r_{12}\pi(b_1)] = 0+0 = 0.$$

Case (ii) is similar.  In Case (iii), $p_ip_j' = r_{ij}cd$ and so $$\Delta_s = cd [r_{22}r_{11}+ r_{11}r_{22} -r_{12}r_{21} - r_{21}r_{12}] = 0.$$
\end{proof}

\begin{example}\label{ex_lake}
\rm
When we take $t$ and $t'$ single leaves, the  $EI/RC$ model satisfies the (PS) property and Lemma~\ref{lake} can be applied.
If the stationary distribution $\pi$ is fixed, then $\Delta$ gives rise to two types of linear phylogenetic invariants for the quartet tree $12|34$,
$$H_1: \quad \frac{\x_{xyxy}}{\pi(x)\pi(y)}+\frac{\x_{xyzw}}{\pi(z)\pi(w)}-\frac{\x_{xyzy}}{\pi(z)\pi(y)}-\frac{\x_{xyxw}}{\pi(x)\pi(w)}$$
$$H_2: \quad \frac{\x_{xyyx}}{\pi(x)\pi(y)}+\frac{\x_{xywz}}{\pi(z)\pi(w)}-\frac{\x_{xyyz}}{\pi(z)\pi(y)}-\frac{\x_{xywx}}{\pi(x)\pi(y)}$$
(here $\x_{\chi_1\chi_2\chi_3\chi_4}$ is the coordinate that corresponds to $\PP_T(\chi_1\chi_2\chi_3\chi_4)$).
To see how these follow from Proposition~\ref{lake}, for $H_1$ take $x=a_1, y=b_1, z=a_2, w=b_2$ and let $\cE$ be the event that $Y(1) = a_1$
and $Y(2)=b_1$; for  $H_2$ take  $x=b_1, y=a_1, z=b_2, w=a_2$ and let $\cE$ be the event that $Y(1) = b_1$ and $Y(2)=a_1$.
Note that these are topology invariants because
the first is not a phylogenetic invariant for the quartet $13|24$ while the second is not a phylogenetic invariant for $14|23$.
\end{example}

\section{Generating linear invariants for the $RC/EI$ model on $\kappa$ states}
\label{Generating_linear_invariants}

\subsection{Combinatorial concepts and terminology}
\label{Combinatorial_concepts_and_terminology}
Let $T$ be a phylogenetic $X$--tree, $X=[n]$, and consider a Random Cluster model (or Equal Input model) on $T$,
with stationary distribution  $\pi$ on a set $S$ of $\kappa$ states.
Henceforth we assume that $\pi$ is fixed and it is positive, that is, $\pi_s \neq 0$  $\forall s \in S$.
We denote by $e_i$ the pendant edge incident with leaf $i$. 
A character $\chi:[n]\rightarrow S$ shall be denoted as $\chi=\chi_1\dots\chi_n$ if $\chi_i=\chi(i)$ for $i=1,\dots,n$.
We let $Ch(n,\kappa)$ to be the set of characters on $[n]$ for a fixed state space ($S$) of size $\kappa$ and denote by $N$ its cardinality ($N=\kappa^n$).
We  think of a distribution $\PP_{T,\Theta}$ on the set of characters under the $RC$ model  on $T$ as a vector of $Ch(n,\kappa)$ coordinates and therefore lying in the real vector
space with coordinates $\x_{\chi}$, $\chi\in Ch(n,\kappa)$ (the point $\PP_{T,\theta}$ has coordinate $\x_{\chi}$ equal to $\PP_{T}(\chi|\Theta)$).

Let $F$ be a \emph{subforest} of  $T$, that is, a subgraph comprised of a collection of vertex disjoint trees $\{T_1,\dots,T_r\}$ such that the only nodes of degree
$\leq 1$ in  $T_i$ are leaves of $T$ (we allow $T_i$ to be formed by only one leaf and we allow $F=\{T\}$ also).
We say that a subforest $F=\{T_1,\dots,T_r\}$ is a \emph{full} subforest if $\cup_i\mathcal{L}(T_i)=X$; we let  $\mathcal{F}_T$ be the set of
full subforests of $T$. For a full subforest $F$, we define
$\Theta_F$ to be the following collection of edge parameters under the $RC$ model: $\theta_{e}=0$ if $e\in E(T_i)$ for some $T_i \in F$ and $\theta_e=1$ for all other edges $e$.
We denote by $\sigma(F)$ the partition that $F$ describes on $[n]$, that is, two leaves are in the same block of $\sigma(F)$ if they lie in the
same subtree of $F$.
The full subforest formed by singletons will be called the \emph{trivial} subforest.

Given a character $\chi$, we define $\sigma(\chi)$ to be the partition $\{S_1,\dots ,S_l\}$ of the set of leaves defined according to
``two leaves $i,j$ are in the same block of the partition if $\chi_i=\chi_j$". Note that given a full subforest $F=\{T_1,\dots, T_r\}$ of $T$ and a character
$\chi$, $\PP_T(\chi|\Theta_F)$ is zero if $\sigma(F)$ does not refine $\sigma(\chi)$ and is equal to
$\prod_{i=1}^r \pi_{s_i}$ otherwise (here $s_i$ stands for the value of $\chi$ at the leaves of $T_i$).

For any partition $\sigma$ of $[n]$, and any phylogenetic tree $T$ on $[n]$, we say that
$\sigma$ is {\em convex} on $T$ (or {\em compatible} with $T$)  if the collection of induced subtrees
$\{T[B]: B \in \sigma\}$ are vertex disjoint  (here $T[B]$ is the minimal connected subgraph (subtree) of $T$ containing the leaves in $B$).
Let ${\rm co}(T)$ be the set of partitions of $[n]$ that are convex on $T$.
There is a natural correspondence between full subforests of $T$ and convex partitions on $T$ that associates to each partition $\sigma\in \co(T)$
the full subforest $F_T(\sigma)=\{T[B]: B \in \sigma\}$. Therefore, the number of full subforests of a tree $T$ is equal to $|\co(T)|$, $|\mathcal{F}_T|=|\co(T)|$.
When $T$ is a binary tree, $|{\rm co}(T)|= F_{2n-1}$
where $F_k$ is the $k$--th Fibonacci number, starting with $F_1=F_2=1$ (see \citet{ste95}). By contrast, for a star tree on $[n]$ we have
$|{\rm co}(T)| = 2^n - n$.  A   partition $\sigma=\{B_1, \ldots, B_k\}$ of $[n]$ is
{\em incompatible}  with $T$ if it is not convex on $T$, that is, there exist two blocks $B_i$ and $B_j$ from $\sigma$ for which $T[B_i]$ and $T[B_j]$
share at least one vertex.
A {\em singleton block} $B$ of $\sigma$ is a block of size 1. The number of partitions of $[n]$ is known as the \emph{Bell number} $B_n$.

Finally, let ${\rm Inc}(T)$ be the set of partitions of $[n]$ that are not convex on $T$ (i.e. they are `incompatible' with $T$).
Thus $|{\rm Inc}(T)| = B_n - |{\rm co}(T)|$.

\subsection{Results}
\label{Results}

\begin{lemma}\label{lem_forests} (a) Let $\Theta$ be a collection of parameters $(\theta_e)_{e\in E(T)}$ such that $\theta_e$ is either
$0$ or $1$ for all $e\in E(T)$. Then there exists a unique full subforest $F\in \mathcal{F}_T$ such that  $\PP_{T,\Theta}=\PP_{T,\Theta_F}$.

(b) A degree 1 polynomial $\sum_{\chi}\lambda_{\chi}\x_{\chi}$ is a linear phylogenetic invariant for a tree $T$  if and only if
$$\sum_{\chi}\lambda_{\chi}\PP_T(\chi|\Theta_F)=0$$
for any full subforest $F\in \mathcal{F}_T$.
\end{lemma}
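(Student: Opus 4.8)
The plan is to establish (a) first and then obtain (b) as a direct application of Lemma~\ref{lem1}(ii), with part (a) supplying the exact identification of the relevant evaluation points. Throughout I will freely pass between the $EI$ and $RC$ descriptions via Proposition~\ref{ch07_equal_is_random}.

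For the existence half of (a), I would read $\PP_{T,\Theta}$ as the random cluster distribution with $p_e=\theta_e\in\{0,1\}$. With $0/1$ cut probabilities every edge with $\theta_e=1$ is cut deterministically and every edge with $\theta_e=0$ is retained, and the retained edges partition the leaves of $T$ according to the connected components they span. Call this partition $\sigma$ of $[n]$. Since each block is spanned by a connected subtree and distinct components are vertex disjoint, $\sigma$ is convex on $T$, so it corresponds to the full subforest $F=F_T(\sigma)=\{T[B]:B\in\sigma\}$. I would then check that $\Theta_F$ induces the very same leaf partition $\sigma$ (the edges retained by $\Theta_F$ are exactly those internal to the vertex-disjoint subtrees $T[B]$), and recall from the random cluster formula that the character distribution depends only on this induced partition of the leaves, since a component carrying no leaf contributes a factor $\sum_s\pi_s=1$. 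Hence $\PP_{T,\Theta}=\PP_{T,\Theta_F}$.

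For uniqueness I would use the bijection between full subforests and convex partitions, so that it suffices to show the assignment $F\mapsto\PP_{T,\Theta_F}$ separates any two distinct convex partitions $\sigma\neq\sigma'$. As they are distinct, one fails to refine the other; say $\sigma'$ does not refine $\sigma$, so there are leaves $k,l$ lying in a common block of $\sigma'$ but in different blocks of $\sigma$. Let $\chi$ be the two-state character equal to a state $a$ on the $\sigma$-block containing $k$ and equal to $b\neq a$ on every other leaf. Then $\sigma$ refines $\sigma(\chi)$ whereas $\sigma'$ does not, because the $\sigma'$-block of $k,l$ is split by $\chi$. By the rule that $\PP_T(\chi\mid\Theta_F)$ equals $\prod_i\pi_{s_i}>0$ when $\sigma(F)$ refines $\sigma(\chi)$ and is $0$ otherwise, this character has positive probability under $F_T(\sigma)$ and probability $0$ under $F_T(\sigma')$, so the two distributions differ. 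Only $\kappa\geq 2$ is used, so the conclusion is valid for every number of states.

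For (b), a homogeneous degree-$1$ polynomial $L=\sum_\chi\lambda_\chi\x_\chi$ is a phylogenetic invariant exactly when $L(\PP_{T,\Theta})=0$ for all admissible $\Theta\in[0,1]^{|E(T)|}$. Since $L(\PP_{T,\Theta})$ is a polynomial in $\Theta$ (by equation~(\ref{ppsum}) together with Lemma~\ref{lem_EI}(i)) and $[0,1]^{|E(T)|}$ is full dimensional, this is equivalent to the polynomial identity $L(\PP_{T,\Theta})=0$ for all $\Theta\in\RR^{|E(T)|}$. I would then invoke Lemma~\ref{lem1}(ii) with $f=(\PP_T(\chi\mid\cdot))_\chi$ and the linear map $L$: the identity holds iff $L$ vanishes on every value of $f$ attained on $\{0,1\}^{|E(T)|}$. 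By part (a) these attained values are precisely the vectors $\PP_{T,\Theta_F}$ as $F$ ranges over $\mathcal{F}_T$ (existence shows each $0/1$ evaluation equals such a vector, and uniqueness shows these are the genuinely distinct values $q_j$). Hence the invariant condition is equivalent to $\sum_\chi\lambda_\chi\PP_T(\chi\mid\Theta_F)=0$ for every $F\in\mathcal{F}_T$, as claimed. The main obstacle is the uniqueness in (a): existence and (b) are largely bookkeeping once the random cluster picture and Lemma~\ref{lem1}(ii) are in place, whereas separating two convex partitions requires producing an explicit distinguishing character and verifying that $\kappa=2$ states already suffice, so that the result holds for every state space size.
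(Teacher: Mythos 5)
Your proof is correct and follows essentially the same route as the paper: existence via the convex partition induced by deleting the edges with $\theta_e=1$, uniqueness via a two-state character that has positive probability under one full subforest and zero under the other, and part (b) by combining (a) with Lemma~\ref{lem1}(ii). The only cosmetic difference is that you build the distinguishing character from a block of one partition that is split by the other, whereas the paper selects an edge $e_0$ compatible with one partition but not the other and uses the bipartition of $T-e_0$; both yield the same separation.
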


\begin{proof}
(a) We first prove that two full subforests $F$ and $G$ satisfy  $ \PP(\chi|\Theta_G)\neq  \PP(\chi|\Theta_F)$ for some $\chi$ if $F\neq G$.
As $F,G$ are full subforests, they are different if and only if they induce different partitions $\sigma(F)$, $\sigma(G)$ on the set of leaves.
Then there exists an  edge $e_0$ such that $e_0$ is compatible with $\sigma(F)$ (i.e, $\sigma(F)$ refines the bipartition induced by $e_0$) but is not compatible with $\sigma( G)$ (or the other way around).
If $\chi$ is the character that assigns state $x$ at the leaves of one connected component of $T-e_0$ and state $y\neq x$ at the leaves
of the other component, then $\PP(\chi|\Theta_G)=0$ while $\PP(\chi|\Theta_F)$ is not zero.


Given $\Theta$, let $A$ be the set of edges $e$ in $T$ such that $\theta_e=1$. Let $\sigma(T\setminus A)$ be the partition induced on $X$ when
removing all edges in $A$ (if an edge in $A$ is a pendant edge, then removing it means that we separate the corresponding leaf).
If $F$ is the subforest $F_T(\sigma(T\setminus A))$, then we have $\PP_{T,\Theta}=\PP_{T,\Theta_F}.$

(b) This follows from part (a) and Lemma \ref{lem1} (ii).
\end{proof}

Let $\Theta$  be a collection of edge parameters on a tree $T$ evolving under the $RC$ model. For a site character $\chi$, we define
$$\tp_{\chi}^T(\Theta)=\frac{\PP_T(\chi|\Theta)}{\pi_{\chi_1}\pi_{\chi_2}\dots\pi_{\chi_n}}.$$

We call $\tilde{\x}_{\chi}$ the corresponding coordinates: $\tilde{\x}_{\chi}=\frac{\x_{\chi}}{\pi_{\chi_1}\pi_{\chi_2}\dots\pi_{\chi_n}}.$

\begin{lemma}\label{lem_equal}  We say that two characters $\chi$ and $\chi'$ are \emph{equivalent}, $\chi \equiv \chi'$, if $\sigma(\chi)=\sigma(\chi')$ and $\chi_i=\chi'_i$ for any leaf $i$ that belongs to a block of the partition of cardinality greater than or equal to 2.
Let $\chi$, $\chi'$ be two characters on the set $X=[n]$.
\begin{enumerate}
\item[(a)] If $\chi \equiv \chi'$ then 
$\tilde{\x}_{\chi}-\tilde{\x}_{\chi'}$ is a linear model invariant.
\item[(b)] If $\pi$ is not invariant by any permutation of the set of states, then for any tree $T$ the equality $\tp_{\chi}^T(\Theta)=
\tp_{\chi'}^T(\Theta)$ for every $\Theta$ implies that $\chi \equiv \chi'$ (i.e. in this case every linear phylogenetic invariant of
type $\tilde{\x}_{\chi}-\tilde{\x}_{\chi'}$ satisfies $\chi \equiv \chi'$). 
\end{enumerate}
\end{lemma}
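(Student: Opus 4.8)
The plan is to reduce both parts to the discrete parameter choices $\Theta_F$ and then to exploit the explicit product formula for $\tp^T_\chi(\Theta_F)$ that comes from the random-cluster description.

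For part (a), by Lemma~\ref{lem_forests}(b) the linear form $\tx_\chi-\tx_{\chi'}$ is a phylogenetic invariant for a tree $T$ exactly when $\tp^T_\chi(\Theta_F)=\tp^T_{\chi'}(\Theta_F)$ for every full subforest $F\in\mathcal{F}_T$, and it is a model invariant when this holds for every $T$. So I would fix $T$ and $F$ and compare both sides. Since $\chi\equiv\chi'$ forces $\sigma(\chi)=\sigma(\chi')$, the partition $\sigma(F)$ refines $\sigma(\chi)$ if and only if it refines $\sigma(\chi')$; in the non-refining case both probabilities, hence both normalized quantities, vanish. In the refining case I would combine $\PP_T(\chi|\Theta_F)=\prod_{B\in\sigma(F)}\pi_{s_B}$ (with $s_B$ the common $\chi$-value on block $B$) with $\prod_j\pi_{\chi_j}=\prod_{B\in\sigma(F)}\pi_{s_B}^{|B|}$ to obtain the clean identity
\[
\tp^T_\chi(\Theta_F)=\prod_{B\in\sigma(F)}\pi_{s_B}^{\,1-|B|}.
\]
The crucial observation is that a block $B$ with $|B|=1$ contributes a factor $\pi_{s_B}^{0}=1$, so only the non-singleton blocks of $\sigma(F)$ matter; and every non-singleton block of $\sigma(F)$ lies inside a block of $\sigma(\chi)$ of cardinality $\geq 2$, on which $\chi$ and $\chi'$ agree by the definition of equivalence. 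Hence the two products coincide factor by factor, giving $\tp^T_\chi(\Theta_F)=\tp^T_{\chi'}(\Theta_F)$ for all $F$ and all $T$, as required.

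For part (b) I would argue the contrapositive, first noting that the hypothesis on $\pi$ is equivalent to the values $\pi_s$ being pairwise distinct (if $\pi_a=\pi_b$ for some $a\neq b$, the transposition of $a$ and $b$ fixes $\pi$). Assuming $\chi\not\equiv\chi'$, it then suffices to exhibit a single subforest $F$ with $\tp^T_\chi(\Theta_F)\neq\tp^T_{\chi'}(\Theta_F)$. The gadget I would use is, for any two leaves $i,j$, the full subforest $F_{ij}$ consisting of the path $T[\{i,j\}]$ together with every remaining leaf taken as a singleton; this partition is convex, so $F_{ij}\in\mathcal{F}_T$. If $\sigma(\chi)\neq\sigma(\chi')$, I would pick $i,j$ in a common block of one partition but not the other, say $\chi_i=\chi_j$ while $\chi'_i\neq\chi'_j$. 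Then $\sigma(F_{ij})$ refines $\sigma(\chi)$ but not $\sigma(\chi')$, so the product formula gives $\tp^T_\chi(\Theta_{F_{ij}})=1/\pi_{\chi_i}\neq 0=\tp^T_{\chi'}(\Theta_{F_{ij}})$, a contradiction.

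In the remaining case $\sigma(\chi)=\sigma(\chi')$ but $\chi$ and $\chi'$ differ on some block $D$ with $|D|\geq 2$, where $\chi$ is constantly $a$ and $\chi'$ is constantly $b$ with $a\neq b$. Choosing $i,j\in D$ and using the same $F_{ij}$, now $\sigma(F_{ij})$ refines both partitions, and the formula yields $\tp^T_\chi(\Theta_{F_{ij}})=1/\pi_a$ and $\tp^T_{\chi'}(\Theta_{F_{ij}})=1/\pi_b$; equality would force $\pi_a=\pi_b$, contradicting distinctness. I expect the only real care to be bookkeeping rather than conceptual: checking that $F_{ij}$ is genuinely a full subforest (the path $T[\{i,j\}]$ may run through interior vertices, but these have degree $2$ in the path, which is permitted, and the singleton leaves are automatically vertex-disjoint from it), and keeping straight the refinement condition that governs when $\PP_T(\cdot\,|\Theta_F)$ vanishes. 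The single genuinely indispensable hypothesis is the distinctness of the $\pi_s$, and it is used only in this last case.
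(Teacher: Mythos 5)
Your proof is correct and follows essentially the same route as the paper: part (a) is the identical reduction via Lemma~\ref{lem_forests}(b) to the product formula $\tp^{T}_{\chi}(\Theta_F)=\prod_{B\in\sigma(F)}\pi_{s_B}^{1-|B|}$, and part (b) distinguishes non-equivalent characters by evaluating at subforests made of one connected piece plus singletons. The only (cosmetic) difference is that in (b) you use the two-leaf subtree $T[\{i,j\}]$ where the paper uses the whole block $T_{B_i}$, which lets you compare $\pi_a$ with $\pi_b$ directly instead of through the exponent $b_i-1$.
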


\begin{proof} (a) Let $\chi$ and $\chi'$ be two equivalent characters, let $\sigma$ be $\sigma(\chi)=\sigma(\chi')$, and let $T$ be any $X$--tree.
According to Lemma \ref{lem_forests} (b) we need to check that
$\tp_{\chi}(\Theta_F)=\tp_{\chi'}(\Theta_F)$ for any $F=\{T_1,\dots,T_r\}\in \mathcal{F}_T$.

If $\sigma(F)$ does not refine $\sigma$, then $\PP_T(\chi|\Theta_F)$ and $\PP_T(\chi'|\Theta_F)$ are zero and we are done.

If $\sigma(F)$ does refine $\sigma$, then $\PP_T(\chi|\Theta_F)=\pi_{s_1}\dots\pi_{s_r}$ where $s_i$ is the value of $\chi$ at the leaves of $T_i$ (note that we may have $s_i=s_j$). Therefore $\tp_{\chi}^T(\Theta_F)=\frac{1}{\pi_{s_1}^{n_1-1}\dots\pi_{s_r}^{n_r-1}}$ where $n_i=|\mathcal{L}(T_i)|$. As $\sigma(F)$ refines $\sigma(\chi)=\sigma(\chi')$ and the states of $\chi$ and $\chi'$ coincide for any block of $\sigma$ of size $\geq 2$, the states of $\chi$ and $\chi'$ also coincide at the  leaves of $T_i$ if $n_i\geq2$. Therefore, $\tp_{\chi}^T(\Theta_F)=\tp_{\chi'}^T(\Theta_F)$.

As for (b), assume that $\pi$ is not invariant by any permutation of the set of states (i.e. $\pi_s=\pi_t$ if and only if $s=t$). Assume that for a tree $T$ we have $\tp_{\chi}^T(\Theta_T)=\tp_{\chi'}^T(\Theta_T)$ for any  collection of edge parameters $\Theta_T$. Then, for each block $B_i$ of $\sigma(\chi)$ of size $b_i$ greater or equal than 2  consider the forest $F_i=\{T_{B_i},\cup_{l \notin B_i}\{l\}\}$, where $T_{B_i}$ is the smallest subtree of $T$ joining the leaves in $B_i$. Then $\tp_{\chi}^T(\Theta_{F_i})=\frac{1}{\pi_{s_i}^{b_i-1}}$ if $s_i$ is the state of $\chi$ at the leaves of $B_i$. By hypothesis this is equal to $\tp_{\chi'}^T(\Theta_{F_i})$. But $\tp_{\chi'}^T(\Theta_{F_i})$ is zero if $\sigma(\chi')$ does not contain the block $B_i$. Performing the same argument for any block $B_i$ of size $b_i\geq 2$ we obtain  $\sigma(\chi)=\sigma(\chi')$. Now for each such block $B_i$ we have $\tp_{\chi}^T(\Theta_{F_i})=\tp_{\chi'}^T(\Theta_{F_i})$ and hence $\frac{1}{\pi_{s_i}^{b_i-1}}=\frac{1}{\pi_{s'_i}^{b_i-1}}$ if $s'_i$ is the state of $\chi'$ at the leaves of $B_i$. As $b_i\geq 2$, the assumption on $\pi$ implies $s_i=s'_i$. Thus, $\chi$ and $\chi'$ are equivalent characters.
\end{proof}

\begin{remark}\label{rem_symmetric}\rm
If $\pi$ is the uniform distribution  (i.e we consider the $\kappa$-state fully symmetric model), then we have
$\PP_T(\chi |\Theta)=\PP_T(\chi' |\Theta)$ if and only if $\sigma(\chi)=\sigma(\chi')$. Indeed, in this case
if we consider any permutation $g$ of the set of of states $S$, the polynomials
$\x_{\chi}-\x_{g\cdot\chi}$ are linear phylogenetic invariants for any tree (see \citet{cas12}),
where $g\cdot\chi$ stands for the corresponding permutation of states at the leaves. But these polynomials
can also be rewritten as $\x_{\chi}-\x_{\chi'}$ for $\sigma(\chi)=\sigma(\chi')$.
\end{remark}

\bigskip

\noindent
{\bf Examples: $n=3$ and $n=4$}
\begin{itemize}
\item
For $n=3$, Lemma~\ref{lem_equal}  gives the following. If $\kappa \geq 3$ and we consider three different states $x,y,z$ and another set of three different states $x',y',z'$, the linear invariants obtained in Lemma~\ref{lem_equal} are:
 $$ \tx_{xyz} - \tx_{x'y'z'},\, \tx_{xxy}  - \tx_{xxz},\, \tx_{xyx} - \tx_{xzx},\, \tx_{yxx}  - \tx_{zxx}.$$
\item
For $n=4$, Lemma~\ref{lem_equal}  gives the following.
If $\kappa \geq 4$ and we consider four different states $x,y,z,w$ and another set of four different states $x',y',z',w'$, the linear phylogenetic invariants of Lemma~\ref{lem_equal} are:
$$ \tx_{xyzw}- \tx_{x'y'z'w'},\,  \tx_{xxyz}  - \tx_{xxy'z'}, \, \tx_{xxxy}  - \tx_{xxxy'}, $$
and the analogous invariants obtained for the other partitions of $[4]$ involving singletons.
\end{itemize}
\hfill$\Box$

\bigskip

There are several ways to construct linear invariants from smaller trees and a systematic way to find model invariants for certain models with stationary distribution has been described in \citet{fu91}.
The most immediate one, used already in the quoted paper, uses the following marginalization lemma.
If $i$ is a leaf of $T$, we call $T_i$ the tree obtained by removing leaf $i$ and its incident edge, and suppressing the resulting degree--2 vertex if the interior node had degree 3.

\begin{lemma}\label{lem_margin}  Let $i$ be a leaf of a phylogenetic $[n]$--tree $T$ and let $T_i$ be the corresponding tree.
Let $l$ be a linear homogeneous map $l:\RR^{\kappa^ {n-1}}\rightarrow \RR$ and let $M_i:\RR^{\kappa^ {n}}\rightarrow \RR^{\kappa^ {n-1}}$ be the marginalization map at leaf $i$. If $l(p_i)=0$ for any distribution $p_i$ from a Markov process on the tree $T_i$, then $(l\circ M_i)(p) =0$ for any distribution $p$ that comes from a Markov process on the tree $T$.
\end{lemma}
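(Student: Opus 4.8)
The plan is to reduce the statement to a single structural fact about Markov processes on trees: \emph{marginalizing out a leaf sends distributions on $T$ to distributions on $T_i$}. Precisely, I would first show that for every distribution $p = \PP_{T,\Theta}$ arising from a Markov process on $T$, the marginal $M_i(p)$ is itself a distribution $\PP_{T_i,\Theta'}$ arising from some Markov process on $T_i$. Granting this, the lemma is immediate: by hypothesis $l$ vanishes on every distribution coming from a Markov process on $T_i$, so in particular $l(M_i(p)) = 0$, i.e. $(l \circ M_i)(p) = 0$, for every $p$ from a Markov process on $T$. Since $l$ and $M_i$ are linear and homogeneous, $l\circ M_i$ is then automatically a linear phylogenetic invariant of $T$.

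To establish the structural fact, let $v$ be the unique interior vertex of $T$ incident with the pendant edge $e_i$ of leaf $i$, and direct all edges away from $v$ (equivalently, condition on $Y(v)$). By the defining conditional-independence property of a Markov process on a tree, $Y(i)$ is conditionally independent of the states at all other vertices given $Y(v)$. Hence, when we sum the joint leaf distribution over the state at leaf $i$, the factor $\sum_{s} P^{(e_i)}_{Y(v)\,s} = 1$ drops out (each row of a transition matrix sums to $1$), and what remains is exactly the joint distribution of the leaves of $[n]\setminus\{i\}$ under the Markov process obtained by deleting $i$ and $e_i$ from $T$, with the root distribution and the transition matrices on the surviving edges unchanged.

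It then remains to treat the one combinatorial subtlety, namely the case where $v$ had degree $3$ in $T$, so that after deleting $i$ the vertex $v$ has degree $2$ and must be suppressed to yield the phylogenetic tree $T_i$. Writing $u, w$ for the two remaining neighbours of $v$, I would verify that marginalizing out the internal variable $Y(v)$ replaces the two transition matrices $P^{(u,v)}$ and $P^{(v,w)}$ by their product $P^{(u,v)}P^{(v,w)}$ on the single edge $(u,w)$ of $T_i$; this is the usual Chapman--Kolmogorov composition, and since a product of stochastic matrices is again stochastic, the result is a bona fide Markov process on $T_i$ reproducing the marginal $M_i(p)$. No special feature of the $EI$ model is needed here, only that transition matrices are stochastic. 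The main point requiring care is precisely this book-keeping at the suppressed degree-$2$ vertex and the verification that the composed process yields the marginal distribution; everything else follows directly from the Markov property and from the hypothesis that $l$ annihilates all distributions on $T_i$.
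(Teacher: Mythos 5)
Your proof is correct and follows essentially the same route as the paper, which simply observes that $M_i(p)$ is a distribution on $T_i$ arising from the Markov process with the same transition matrices on the surviving edges. Your additional care with the suppressed degree-$2$ vertex (replacing the two incident matrices by their Chapman--Kolmogorov product) fills in a detail the paper leaves implicit, but it is the same argument.
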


\begin{proof}
To prove this lemma one just needs to observe that for any distribution $p$ coming form a Markov process on $T$, $M_i(p)$ is a distribution on $T_i$ that comes from the Markov process that at each edge $e$ has the same transition matrices as $e$ had on the tree $T$.
\end{proof}

Another construction, which is new, and particular to the  $RC$/$EI$ model is described  in the following lemma. This lemma shall be used in section 6 where we provide specific linear invariants for quartet trees.

\begin{lemma}\label{lem_ext}\rm{[Extension lemma]}
Let $\Delta = \sum_{\chi} a_{\chi} \x_{\chi}$ be a linear invariant for an $[n]$--tree $T$ evolving under the $RC$ model.
\begin{enumerate}
\item[(a)] Let $T'$ be the tree obtained by subdividing an edge of $T$ and attaching a new pendant edge at the newly introduced node.
Let $s$ be a new state not involved in $\Delta$ (that is, $a_{\chi}=0$ if $\chi$ contains $s$). Then,
\begin{equation}\label{eqbig}
\sum_{\chi} a_{\chi} \x_{\chi s}
\end{equation}
is a linear invariant for $T'$ (where the new leaf is labelled as leaf $n+1$).
\item[(b)] Let $T'$ be the tree obtained by subdividing an edge of $T$ and attaching a tree $\tilde{T}$ of $m+1$ leaves to the newly introduced node (so that $T'$ has $n+m$ leaves and the newly introduced leaves are labelled from $n+1$ to $n+m$).
Let  $\mu$ be a character on $m$ leaves for which $a_{\chi}=0$ if $\chi$ contains some state in $\mu$
(that is, $\Delta$ does not involve the states of $\mu$ at any leaf). Then
$\sum_{\chi} a_{\chi} \x_{\chi \mu }$ is  a linear phylogenetic invariant for $T'$ (where $\chi \mu$ stands for states $\chi$ at the first $n$ leaves and states $\mu$ at the other $m$ leaves).
\item[(c)] Suppose $T$ is the star tree, and let $\mu$ be a character on $m$ leaves  for which $a_{\chi}=0$ if $\chi$ contains some state in $\mu$.
Then, for the star tree $T'$ with $n+m$ leaves evolving under the $RC$ model, $\sum_{\chi} a_{\chi} \x_{\chi \mu }$ is a linear phylogenetic invariant.

\end{enumerate}
\end{lemma}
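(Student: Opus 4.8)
The plan is to verify each claim through Lemma~\ref{lem_forests}(b): a degree-$1$ polynomial is a linear invariant for $T'$ exactly when it annihilates $\PP_{T'}(\,\cdot\,|\Theta_{F'})$ for every full subforest $F'$ of $T'$. Since $\PP_{T'}(\chi\mu\,|\,\Theta_{F'})$ depends only on the convex partition $\sigma(F')$ (it is the product of the $\pi$-factors over the blocks when $\sigma(F')$ refines $\sigma(\chi\mu)$, and $0$ otherwise), it suffices to range over $\tau\in\co(T')$ and show $\sum_\chi a_\chi\,\PP_{T'}(\chi\mu\,|\,\tau)=0$ (with $\mu=s$ in (a)). The first observation I would make is that (a) is the special case $m=1$ of (b): attaching the $2$-leaf tree $\tilde T$ at $w$ amounts to adding a single pendant edge, and $\mu$ is the one-letter character $s$. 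So it is enough to treat (b) and (c).

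For (b), I fix $\tau\in\co(T')$ and split into two cases according to whether $\tau$ keeps the ``old'' leaves $[n]$ separated from the ``new'' leaves $\{n+1,\dots,n+m\}$. The crucial input is the disjointness hypothesis: whenever $a_\chi\neq0$ the states of $\chi$ are disjoint from those of $\mu$, so in $\sigma(\chi\mu)$ no block contains both an old and a new leaf. Hence if some block of $\tau$ contains an old and a new leaf simultaneously, then $\tau$ fails to refine $\sigma(\chi\mu)$ for every such $\chi$, so $\PP_{T'}(\chi\mu\,|\,\tau)=0$ and the sum vanishes term by term. In the complementary case $\tau$ induces a partition $\tau_{\mathrm{old}}$ of $[n]$ and a partition $\tau_{\mathrm{new}}$ of the new leaves, and the product structure of the random cluster probability factorizes across the cut as $\PP_{T'}(\chi\mu\,|\,\tau)=\PP(\chi\,|\,\tau_{\mathrm{old}})\cdot\PP(\mu\,|\,\tau_{\mathrm{new}})$; pulling out the $\mu$-factor leaves $\PP(\mu\,|\,\tau_{\mathrm{new}})\sum_\chi a_\chi\,\PP(\chi\,|\,\tau_{\mathrm{old}})$.

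It then remains to see that $\tau_{\mathrm{old}}\in\co(T)$, after which $\PP(\chi\,|\,\tau_{\mathrm{old}})=\PP_T(\chi\,|\,\Theta_{F_T(\tau_{\mathrm{old}})})$ and $\sum_\chi a_\chi\,\PP_T(\chi\,|\,\Theta_{F_T(\tau_{\mathrm{old}})})=0$ by Lemma~\ref{lem_forests}(b) and the hypothesis that $\Delta$ is an invariant for $T$. For this I would use that restricting $T'$ to the leaf set $[n]$ (deleting the new leaves and suppressing the resulting degree-$2$ vertices, which collapses $\tilde T$ and the subdivision node $w$ back to the original edge) recovers $T$, together with the standard fact that convexity of a partition is preserved under deletion of leaves and suppression of degree-$2$ vertices. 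Part (c) runs along identical lines, using that the convex partitions of a star have at most one non-singleton block: either that block mixes old and new leaves (sum vanishes by disjointness) or it lies entirely among the old, or entirely among the new, leaves, in which case the same factorization and appeal to the invariance of $\Delta$ on the star $T$ applies; note that (c) must be stated separately because subdividing a star edge and attaching yields a non-star, so (b) does not produce the larger star.

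The main obstacle is the bookkeeping in the ``separated'' case: justifying the factorization of the random cluster probability across the old/new cut (taking care that an internal node such as $w$, which carries no observed state, contributes trivially) and, above all, verifying that the restricted partition $\tau_{\mathrm{old}}$ is genuinely convex on $T$. Everything else is a direct application of Lemma~\ref{lem_forests} together with the state-disjointness hypothesis.
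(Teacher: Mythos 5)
Your proof is correct, but it follows a genuinely different route from the paper's. The paper proves (a) directly by splitting the full subforests $F$ of $T'$ into those containing the new pendant edge (where $\PP_{T'}(\chi s|\Theta_F)=0$ because $s$ does not occur in $\Delta$) and those in which the new leaf is a singleton (where $\PP_{T'}(\chi s|\Theta_F)=\pi_s\PP_T(\chi|\Theta_{F|T})$); it then obtains (b) by iterating (a), attaching $\tilde T$ cherry by cherry when $\tilde T$ is binary and passing to general $\tilde T$ by edge contraction, and derives (c) from (b) by further contraction. You instead treat (b) (and hence (a), which you correctly identify as the case $m=1$) in a single uniform case analysis over the convex partitions $\tau\in\co(T')$: either some block of $\tau$ mixes old and new leaves, in which case the state-disjointness hypothesis forces every term $\PP_{T'}(\chi\mu|\tau)$ to vanish, or $\tau$ splits as $\tau_{\mathrm{old}}\sqcup\tau_{\mathrm{new}}$ and the random-cluster probability factorizes, reducing the claim to the invariance of $\Delta$ on $T$ once one checks that $\tau_{\mathrm{old}}\in\co(T)$ (which holds since restricting a convex partition to a subset of leaves stays convex on the induced subtree, and $T'$ restricted to $[n]$ is $T$). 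Both arguments rest on the same pillars, Lemma~\ref{lem_forests}(b) and the disjointness of the states of $\mu$ from those appearing in $\Delta$, and both are sound; yours buys a cleaner, induction-free treatment that handles (a), (b) and (c) on the same footing and makes explicit exactly where the factorization of $\PP_{T'}(\chi\mu|\Theta_{F'})$ across the old/new cut is used, while the paper's cherry-by-cherry construction keeps each step elementary at the cost of a separate contraction argument for non-binary $\tilde T$ and for the star-tree case.
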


\begin{proof} (a) 
By Lemma  \ref{lem_forests}, we only need to check that (\ref{eqbig})  vanishes for the distributions generated with $\Theta=\Theta_F$ where
$F$ is a full subforest of $T'$.
We denote by $\Theta_{F|T}$ the corresponding probabilities at the edges of $T$  and we denote by $\Delta(\Theta_{F|T})$ the value of $\Delta$ evaluated at $\PP_{T,\Theta_{F|T}}$.

If $F$ contains a tree with the new edge $e'$ on it, then, for all $\chi$ involved in $\Delta$, we have $\PP_{T'}(\chi s |\Theta_F )=0$
(because $s$ is a state not involved in $\Delta$) and then \eqref{eqbig} trivially vanishes.
If $F$ does not contain the edge $e'$, then the new leaf is a singleton in $F$. In this case we have $\PP_{T'}(\chi s |\Theta_F)=
\pi_s \PP_{T}(\chi |\Theta_{F|T})$. Therefore \eqref{eqbig} evaluated at $\PP_{T',\Theta_F}$ is $\Delta(\theta_{F|T})$ multiplied by $\pi_s$,
so it vanishes as well.

(b) If $\tilde{T}$ is binary, then the addition of $\tilde{T}$ can be obtained by successively adding cherries to $T$.
So, assume that we have added one cherry as in (a), so that we have assigned state $s$ to the new leaf $l_{n+1}$, and now we add a new cherry
to the edge leading to $l_{n+1}$. Now the new state $s'$ that we consider for the new leaf now can be allowed to be equal to the state $s$ as long as $s'$ differs from the states that appear in $\Delta$.
Indeed, if $s'=s$, there might be forests containing the new cherry, but all of them give probability zero for the states appearing in the
polynomial except if the forest is formed by the new cherry and other trees.  For such a forest $F$ we have $\PP(\chi s s |\Theta_F)=
\pi_s \PP(\chi|\Theta_{F|T})$ and hence the polynomial evaluated at the parameters of this forest is $\Delta(\Theta_{F|T})$ multiplied by $\pi_s$ which vanishes again.

If $\tilde{T}$ is not a binary tree, then it can be also constructed from a binary tree by contracting edges. As for binary tree the polynomial is a phylogenetic invariant,
so it is when we contract edges (note that if a polynomial is a phylogenetic invariant for a tree, then it is also a phylogenetic invariant
for the tree $T_0$ obtained by contracting one edge $e_0$ because any collection of edge parameters at $T_0$ gives a collection of edge parameters
for $T$ by assigning $\theta_{e_0}=0$).

(c) This follows from (b) by contracting edges.
\end{proof}

\section{Phylogenetic mixtures}
\label{Phylogenetic_mixtures}
So far, we have found some linear polynomials that turn out to be either model invariants or topology invariants.
But we were not able to say whether these invariants actually generate the space of linear phylogenetic invariants for a tree $T$.
On the other hand, it would be interesting to know whether a distribution where all these linear invariants vanish
is actually a linear combination from distributions on a tree or a mixture of trees.
To this end, one defines the space of \emph{mixtures} on a tree \citep{stef07}.

\begin{definition} Fix a distribution $\pi$ on the set of states. Given a particular tree $T$, we denote by $\PP_{T,\Theta}$
the distribution of a $RC$ model with parameters $\pi,\Theta$ on $T$. We define the \emph{space of mixtures on $T$} as
$$\mathcal{D}^{\pi}_T=\left\{p=\sum_{i}\lambda_i \PP_{T,\Theta_i} \,\Big| \, \sum_i \lambda_i=1\right\}.$$
If $\mathcal{T}$ is the set of phylogenetic trees on $[n]$, we define \emph{the space of phylogenetic mixtures} on $[n]$ as
$$\mathcal{D}^{\pi}=\left\{p=\sum_{i}\lambda_i \PP_{T_i,\Theta_i} \,\Big| \, \sum_i \lambda_i=1 \, , \, T_i \in \mathcal{T} \right\}$$
\end{definition}

When $\{p_i\}_{i\in I}$ is a set of points in an affine linear space, we denote by $\langle p_i |\, i \in I\rangle_a$
the linear span of these points, that is, the set of points $q=\sum_{i}\lambda_ip_i$ with $\sum_i \lambda_i=1$
(we put the subindex $a$ in order to distinguish this affine linear span from the usual linear span of vectors).
Note that the spaces of phylogenetic mixtures are affine linear varieties,
$$\mathcal{D}^{\pi}_T=\Big\langle p \,\Big|\, p=\PP_{T,\Theta}\Big\rangle_a \, , \quad
\mathcal{D}^{\pi}=\Big\langle p \, \Big|\, p=\PP_{T,\Theta}, \, T \in \mathcal{T} \Big\rangle_a \, ,$$
 and both lie inside the hyperplane
$$H=\left\{\x=(\x_{\chi})_{\chi} \in \RR^{N}\sum_{\chi\in Ch(n,\kappa)} \x_{\chi}=1\right\}.$$

Strictly speaking, for applications in phylogenetics it is only relevant to consider points in $\mathcal{D}^{\pi}$ (or $\mathcal{D}^{\pi}_{T}$) that are actually
distributions. In other words, one should be mainly interested in convex combinations of the points $\PP_{T,\Theta}$:
$$\left\{p=\sum_{i}\lambda_i \PP_{T,\Theta_i} \,\Big| \, \lambda_i \geq 0 , \sum_i \lambda_i=1\right\} \quad \textrm{and}$$
$$\left\{p=\sum_{i}\lambda_i \PP_{T_i,\Theta_i} \,\Big| \, \lambda_i \geq 0 , \sum_i \lambda_i=1 \, , \, T_i \in \mathcal{T} \right\}.$$
However, as the dimension of a polyhedron is the dimension of its affine hull, we focus on computing the dimension of $\mathcal{D}^{\pi}$ and $\mathcal{D}^{\pi}_{T}$.

For any distribution $\pi$, we denote by $L^{\pi}$  the vector space of linear model invariants and by $L^{\pi}_T$
the  space of all linear phylogenetic invariants for a tree $T$.
The orthogonal subspace of $L^{\pi}$ (respectively $L^{\pi}_T$) shall be denoted by $E^{\pi}$ (respectively $E^{\pi}_T$), that is,
$E^{\pi}$ is the set of vectors in $\RR^N$ where all the linear model invariants vanish and $E^{\pi}_T$ the set of vectors where
all the linear phylogenetic invariants for $T$ vanish (by identifying dual and orthogonal spaces).
In other words, $E^{\pi}_T$ and $E^{\pi}$ are spanned by the following vectors of distributions:
$$E^{\pi}_T=\Big\langle  \vec{p}\, \Big|\,  \vec{p}=\PP_{T,\Theta}\Big\rangle \, ,\quad
E^{\pi}_T=\Big\langle  \vec{p}\, \Big|\,  \vec{p}=\PP_{T,\Theta} , \, T \in \mathcal{T}\Big\rangle .$$
Note that when we use $p\in \RR^N$ as a vector, we use the notation $ \vec{p}$ to distinguish it from its use as an affine point in $\RR^N$.
Then the following equalities are clear
$$\mathcal{D}^{\pi}_T =E^{\pi}_T \cap H \, , \quad \mathcal{D}^{\pi}=E^{\pi}\cap H .$$
Therefore, studying phylogenetic mixtures (on $[n]$ or on a tree) is equivalent to studying linear phylogenetic invariants (only model invariants or together with topology invariants). Note that due to Lemma \ref{lem_forests}, it is clear that
$$E^{\pi}_T =\langle  \vec{p}=\PP_{T,\Theta_F} \, | F \in \mathcal{F}_T \rangle \, , \quad E^{\pi} =\langle
 \vec{p}=\PP_{T,\Theta_F} | T \in \mathcal{T} , F \in \mathcal{F}_T \rangle $$
(see also \citet{mat08} Prop. 10).

In this section we compute the dimension of the spaces of phylogenetic mixtures.

\subsection{Model invariants and phylogenetic mixtures}

We fix $n\geq 4$ throughout this section.
We call $\Sigma_{\kappa}$ the set of partitions of $[n]$ of size at most  $\kappa$ (note that if $\kappa \geq n$, this is the whole set of partitions of $[n]$).
{If $\sigma$ is a partition of $[n]$ compatible with trees $T$ and $T'$, and we consider $F=F_{T}(\sigma)$ and $F'=F_{T'}(\sigma)$, then one has $\PP_{T,\Theta_F}=\PP_{T',\Theta_{F'}}.$ This point will be briefly denoted as $q_{\sigma}$ (because it does not depend on the chosen tree compatible with $\sigma$). We give the coordinates of the points $q_{\sigma}$ for $n=4$ shortly, see Example \ref{ex_points}.
Note that $\mathcal{D}^{\pi}=\langle q_{\sigma} \, \mid \, \sigma \in \Sigma_{n} \rangle_a$, but this spanning set of points are not affine linearly independent if $\kappa \geq n$:}

\begin{theorem}\label{thm_model} {If $\pi$ is a distribution on $\kappa$ states with positive entries, then $\left\{ q_{\sigma} \, \mid \, \sigma \in  \Sigma_{\kappa} \right\}$ are affine linearly independent points.
Moreover, if $\pi$ is the uniform distribution or a generic distribution, or if $\kappa \geq n$, then
$\mathcal{D}^{\pi}$ coincides with $\langle q_{\sigma} \, \mid \, \sigma \in \Sigma_{\kappa} \rangle_a$ and has dimension $|\Sigma_{\kappa}|-1$ (which equals $B_{n}-1$ if $\kappa\geq n$).}
\end{theorem}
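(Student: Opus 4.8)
The plan is to prove the affine independence for an arbitrary positive $\pi$ first, and then to settle the spanning and dimension claim regime by regime. For the independence, note that every $q_{\sigma}$ is a probability distribution, so all of them lie on the hyperplane $H=\{\sum_{\chi}\x_{\chi}=1\}$; applying the functional $\sum_{\chi}\x_{\chi}$ to any linear relation $\sum_{\sigma}\lambda_{\sigma}q_{\sigma}=0$ forces $\sum_{\sigma}\lambda_{\sigma}=0$, so for points on $H$ affine independence is \emph{equivalent} to linear independence, and I would prove the latter. To each $\sigma\in\Sigma_{\kappa}$ (which has at most $\kappa$ blocks) I attach a \emph{rainbow} character $\chi_{\sigma}$ assigning pairwise distinct states to the blocks of $\sigma$; this is possible exactly because $|\sigma|\le\kappa$, and it forces $\sigma(\chi_{\sigma})=\sigma$. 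Since the $\chi$-coordinate of $q_{\tau}$ equals $\prod_{B\in\tau}\pi_{\chi(B)}$ when $\tau$ refines $\sigma(\chi)$ and is $0$ otherwise, the square matrix $M_{\sigma\tau}:=(q_{\tau})_{\chi_{\sigma}}$ is nonzero only when $\tau$ refines $\sigma$, while its diagonal entry $\prod_{B\in\sigma}\pi_{\chi_{\sigma}(B)}$ is strictly positive. Ordering $\Sigma_{\kappa}$ by a linear extension of the refinement order (finer first) makes $M$ triangular with nonzero diagonal, hence invertible, and independence follows.

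By Lemma~\ref{lem_forests}, together with the fact that every partition of $[n]$ is convex on some tree, $\mathcal{D}^{\pi}=\langle q_{\sigma}\mid \sigma\text{ a partition of }[n]\rangle_{a}$. The affinely independent family $\{q_{\sigma}\mid\sigma\in\Sigma_{\kappa}\}$ already spans an affine subspace of dimension $|\Sigma_{\kappa}|-1$ inside $\mathcal{D}^{\pi}$, so everything reduces to deciding whether the surplus points $q_{\sigma}$ with $|\sigma|>\kappa$ lie in $\langle q_{\tau}\mid\tau\in\Sigma_{\kappa}\rangle_{a}$. When $\kappa\ge n$ there are no such $\sigma$: $\Sigma_{\kappa}$ is the set of all partitions, $|\Sigma_{\kappa}|=B_{n}$, and $\dim\mathcal{D}^{\pi}=B_{n}-1$ is immediate.

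For uniform $\pi$ one has $(q_{\sigma})_{\chi}=\kappa^{-|\sigma|}$ whenever $\sigma$ refines $\sigma(\chi)$ and $0$ otherwise, so this coordinate depends on $\chi$ only through $\sigma(\chi)$. Hence \emph{every} $q_{\sigma}$, including those with more than $\kappa$ blocks, lies in the subspace $W\subseteq\RR^{N}$ of vectors constant on the fibres of $\chi\mapsto\sigma(\chi)$. These fibres are indexed by the partitions realizable as $\sigma(\chi)$, namely those of size at most $\kappa$, so $\dim W=|\Sigma_{\kappa}|$. The $|\Sigma_{\kappa}|$ independent vectors from the first step then form a basis of $W$, every surplus $q_{\sigma}$ is an affine combination of them, and $\mathcal{D}^{\pi}=\langle q_{\sigma}\mid\sigma\in\Sigma_{\kappa}\rangle_{a}$ has dimension $|\Sigma_{\kappa}|-1$; this is the collapse already recorded in Remark~\ref{rem_symmetric}.

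The generic case is where I expect the real work, because this fibrewise collapse is unavailable: for non-uniform $\pi$ the coordinates of $q_{\sigma}$ genuinely separate characters sharing the same $\sigma(\chi)$, so the $q_{\sigma}$ need no longer lie in a small ambient subspace. Here I would squeeze $\dim E^{\pi}=\dim\mathcal{D}^{\pi}+1$ between the lower bound from the independent points above and an upper bound from explicit model invariants --- the relations $\tx_{\chi}-\tx_{\chi'}$ for equivalent characters of Lemma~\ref{lem_equal}(a), with Lemma~\ref{lem_equal}(b) certifying, when $\pi$ is fixed by no nontrivial permutation of states, that no further relations of this shape exist. Since the rank of the polynomial family $\{q_{\sigma}(\pi)\}_{\sigma}$ is lower semicontinuous, its generic value is the maximum over positive $\pi$, so the task reduces to identifying that maximal rank and exhibiting one distribution attaining it (for instance one whose state-probabilities have widely separated magnitudes, forcing a refined triangular pattern across \emph{all} partitions rather than only those in $\Sigma_{\kappa}$). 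Carrying out this count exactly --- and thereby deciding whether the surplus points contribute new dimensions for a non-uniform $\pi$ --- is the crux of the argument and the step I expect to demand the most care.
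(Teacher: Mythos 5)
Your first three steps are sound and essentially reproduce the paper's Lemma~\ref{lem_pointsmodel}: the rainbow--character/triangularity argument for independence is a repackaging of the paper's induction on $\min\{n,\kappa\}$ (both rest on the fact that the $\x_\chi$--coordinate of $q_\tau$ is nonzero only when $\tau$ refines $\sigma(\chi)$), your treatment of the uniform case via the subspace of vectors constant on the fibres of $\chi\mapsto\sigma(\chi)$ is exactly the paper's identification of $E^{\pi_U}$ with the common zero set of the relations $\x_{\chi}-\x_{\chi'}$ for $\sigma(\chi)=\sigma(\chi')$ (Remark~\ref{rem_symmetric}), and the case $\kappa\ge n$ is indeed immediate.

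The generic case, however, is not proved: you state a plan and defer ``the crux'', and that is a genuine gap, not a routine verification. What is needed is the upper bound $\dim E^{\pi}\le|\Sigma_\kappa|$ for generic $\pi$, i.e.\ that every surplus point $q_\sigma$ with $|\sigma|>\kappa$ falls into the span of $\{q_\tau\mid\tau\in\Sigma_\kappa\}$. Your own semicontinuity remark works against the plan: the generic rank of the family $\{q_\sigma(\pi)\}_\sigma$ over all $B_n$ partitions is the \emph{maximum} over positive $\pi$, so exhibiting one $\pi$ of large rank can only push the generic dimension up; to conclude you would have to show that \emph{no} positive $\pi$ gives rank exceeding $|\Sigma_\kappa|$. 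The paper instead proves a ``Claim'' asserting $\dim E^{\pi}\le\dim E^{\pi_0}$ for generic $\pi$ and any particular $\pi_0$, by passing to invariants with coefficients in $\mathbb{R}(\pi_1,\dots,\pi_\kappa)$ and specializing at $\pi_0=\pi_U$; note that this is the reverse of the inequality that semicontinuity of rank supplies, so the two arguments are in direct tension and you are right to treat this step with suspicion. A concrete coordinate check bears this out: for $n=4$, $\kappa=3$ and $\sigma=1|2|3|4$, the only $\tau\in\Sigma_3$ with a nonzero coordinate at the character $xyzx$ (respectively $yxzy$) is the partition $14|2|3$, and matching those two coordinates forces its coefficient in any expansion of $q_\sigma$ to equal both $\pi_x$ and $\pi_y$. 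Hence for non-uniform $\pi$ the surplus point escapes the span and $\dim\mathcal{D}^{\pi}>|\Sigma_\kappa|-1$. So the programme you sketch cannot close the generic case when $\kappa<n$ as the statement is written; only the independence claim, the uniform case and the case $\kappa\ge n$ are secured by these methods.
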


The inclusion $\langle q_{\sigma} \, \mid \, \sigma \in \Sigma_{\kappa} \rangle_a \subseteq \mathcal{D}^{\pi}$ clearly holds (and if $\kappa\geq n$,
the other inclusion is trivial). The idea for the proof of the other inclusion is to use $\mathcal{D}^{\pi}= E^{\pi}\cap H$,
 bound the dimension of $E^{\pi}$ from above by a quantity $d$ and prove that the set of points $q_{\sigma}$ span an
 affine linear variety of dimension $d-1$. We first need the following lemma.

\begin{lemma}\label{lem_pointsmodel}
\begin{enumerate}
\item[(a)] For any $\kappa$, the set $\left\{ q_{\sigma} \, \mid \, \sigma \in  \Sigma_{\kappa} \right\}$ is formed by affine linearly independent points
for any distribution $\pi$ (with positive entries).
\item[(b)] If $\pi_U$ is the uniform distribution, then the set of linear model invariants is spanned by
the set of polynomials $\x_{\chi}-\x_{\chi'}$ for $\sigma(\chi)=\sigma(\chi')$. In particular, the set of vectors $E^{\pi_U}$
where the model invariants vanish has dimension equal to $|\Sigma_{\kappa}|$.
\end{enumerate}
\end{lemma}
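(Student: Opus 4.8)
For part (a), the plan is to note first that since every $q_\sigma$ lies in the hyperplane $H$, affine linear independence of the family $\{q_\sigma\}$ is equivalent to ordinary linear independence of these vectors: any linear relation $\sum_\sigma \lambda_\sigma q_\sigma = 0$ automatically satisfies $\sum_\sigma \lambda_\sigma = 0$ upon summing coordinates, and conversely any affine relation is in particular a linear one. To prove linear independence I would exploit the restriction $|\sigma| \le \kappa$ directly. For each $\sigma \in \Sigma_\kappa$ choose a character $\chi_\sigma$ that assigns pairwise distinct states to the distinct blocks of $\sigma$ (possible precisely because $\sigma$ has at most $\kappa$ blocks), so that $\sigma(\chi_\sigma) = \sigma$. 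Using the formula for $\PP_T(\chi\mid\Theta_F)$ recalled just before Lemma~\ref{lem_forests}, the $(\tau,\sigma)$-entry $(q_\tau)_{\chi_\sigma}$ is nonzero exactly when $\tau$ refines $\sigma$, and on the diagonal $\tau=\sigma$ it equals $\prod_{B\in\sigma}\pi_{\chi_\sigma(B)}\neq 0$ since $\pi$ is strictly positive. Ordering $\Sigma_\kappa$ by any linear extension of the refinement partial order then makes this square matrix triangular with nonzero diagonal, hence invertible, which forces the $q_\sigma$ to be linearly independent.

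For part (b) I would argue by duality together with the count from (a). Write $V$ for the span of the polynomials $\x_\chi - \x_{\chi'}$ with $\sigma(\chi)=\sigma(\chi')$; by Remark~\ref{rem_symmetric} these are genuine linear model invariants under $\pi_U$, so $V \subseteq L^{\pi_U}$ and, dualising, $E^{\pi_U} \subseteq V^\perp$. The orthogonal complement $V^\perp$ consists exactly of the vectors that are constant on each class of characters sharing a common induced partition. Since the partitions realised by characters over $\kappa$ states are precisely those with at most $\kappa$ blocks, there are $|\Sigma_\kappa|$ such classes, whence $\dim V^\perp = |\Sigma_\kappa|$. On the other hand, each $q_\sigma$ with $\sigma \in \Sigma_\kappa$ lies in $E^{\pi_U}$ (it is $\PP_{T,\Theta_F}$ for a tree $T$ on which $\sigma$ is convex), so part (a) gives $\dim E^{\pi_U} \ge |\Sigma_\kappa|$. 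Combining the two bounds yields $\dim E^{\pi_U} = |\Sigma_\kappa|$ and $E^{\pi_U} = V^\perp$; taking orthogonal complements once more gives $L^{\pi_U} = V$, which is the asserted spanning statement, and the dimension claim is then immediate.

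The technical heart is the triangularity in (a): the whole argument hinges on being able to realise \emph{every} $\sigma \in \Sigma_\kappa$ as $\sigma(\chi_\sigma)$ for some character, which is exactly what the hypothesis $|\sigma| \le \kappa$ buys, and which also explains why $\Sigma_\kappa$ rather than the full lattice of partitions is the correct index set. Once the evaluation matrix is recognised as triangular with nonvanishing diagonal, the remaining steps are routine linear algebra. In part (b) the only point demanding care is the bookkeeping with orthogonal complements and the identification of $V^\perp$ with the functions constant on partition classes; I expect no genuine obstacle there, since it reduces to the same combinatorial count of realisable partitions already established in (a).
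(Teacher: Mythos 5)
Your proof is correct and takes essentially the same route as the paper: part (a) rests on the same key vanishing pattern, namely that the coordinate of $q_{\tau}$ at a character $\chi$ is nonzero exactly when $\tau$ refines $\sigma(\chi)$, which the paper exploits by induction on the maximal block size (peeling off the finest partitions first) and you exploit by exhibiting a triangular evaluation matrix at characters $\chi_{\sigma}$ realising each $\sigma\in\Sigma_{\kappa}$ --- the same argument in different packaging. Part (b) is the paper's duality and dimension-count argument, with the identification of $V^{\perp}$ as the vectors constant on partition classes matching the paper's bound $\dim E^{\pi_U}\leq|\Sigma_{\kappa}|$.
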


\begin{proof} (a) We need to prove that 
if we have a  linear combination
\begin{equation}\label{eq_li}\sum_{\sigma \in \, \Sigma_{\kappa}}\lambda_{\sigma} q_{\sigma}=0 \end{equation}
with $\sum_{\sigma}\lambda_{\sigma}=0$, then we need to prove that the coefficients $\lambda_{\sigma}$ are zero.
We proceed by induction on $m=\min\{n,\kappa\}$. Note that as all partitions of $[n]$ are of size at most $n$,
$\Sigma_{\kappa}$ equals the set $\Sigma_m$ of partitions of size at most $m$.

If $m=1$, then $\Sigma_{\kappa}$ contains a single element and there is nothing to prove.
Assume that $m\geq 2$ and consider a linear combination as in Eqn.~(\ref{eq_li}).

Note that the coordinate $\tx_{\chi}$ of $q_{\sigma}$ is zero if $\sigma$ does not refine $\sigma(\chi)$.
Let $\tx_{\chi}$ be a coordinate such that $\sigma(\chi)$ has the maximum size $m$.
Then  $\tx_{\chi}$ is different from zero only for $q_{\sigma(\chi)}$
(because the other points $q_{\sigma}$ correspond to partitions that do not refine $\sigma(\chi)$).
Thus, $\lambda_{\sigma(\chi)}=0$ and hence in \eqref{eq_li} we have $\lambda_{\sigma}=0$ for all $\sigma$ of size $m$.
Thus, we are left with a linear combination such as
{$$\sum_{\sigma \in \, \Sigma_{m-1}}\lambda_{\sigma} q_{\sigma}=0 \, , \quad \sum_{\sigma \in \, \Sigma_{m-1}}\lambda_{\sigma} =0.$$}
The result follows by the induction hypothesis.

\noindent
(b) For the uniform distribution, each polynomial $\x_{\chi}-\x_{\chi'}$ for $\sigma(\chi)=\sigma(\chi')$
is clearly a  model invariant (see Remark \ref{rem_symmetric}).
Thus the set of vectors $E^{\pi_U}$ where these polynomials vanish has dimension less than or equal to $|\Sigma_{\kappa}|$.
The set of points considered in (a) for $\pi_U$ is contained in $E^{\pi_U}\cap H$, and hence
(as $H$ is an equation linearly independent with the previous polynomials), the dimension of $E^{\pi_U}$ is $|\Sigma_{\kappa}|$.
It follows that the inclusion $E^{\pi_U}\subseteq \{\x \in \mathbb{R}^{N} | \x_{\chi}=\x_{\chi'} \textrm{ if } \sigma(\chi)=\sigma(\chi')\}$
is actually an equality and the set of model invariants is spanned by the polynomials $\x_{\chi}-\x_{\chi'}$ for $\sigma(\chi)=\sigma(\chi')$.
\end{proof}


%

Now we are ready to prove the theorem.

\begin{proofthmmodel}
We claim that the dimension of $E^{\pi}$ can be bounded from above by the dimension of $E^{\pi_U}$:

\underline{Claim:} For a generic distribution $\pi$, the dimension of $E^{\pi}$ is less than or equal to the dimension $E^{\pi_0}$
for a particular distribution $\pi_0$.

\underline{Proof of Claim:} We think first of the coordinates of $\pi$ as parameters, so that we consider model invariants as linear polynomials
in the variables $\x_{\chi}$ with coefficients in the field of rational functions $\mathbb{R}(\pi)$ (i.e. the field of fractions of the
ring of polynomials $\mathbb{R}[\pi_1,\dots,\pi_{\kappa}]$). The set of all model invariants is a $\mathbb{R}(\pi_1,\dots,\pi_{\kappa})$-vector
space. Consider a basis $l_1,\dots,l_t$ of this space and let $E$ be its orthogonal subspace,
$E=\{\x \in \mathbb{R}^{N}| l_i(\x)=0, i=1,\dots, t\}$ so that $\dim E=N-t$.
When we substitute $\pi$ by a particular value $\pi_0$, $l_1,\dots,l_t$ may not be linearly independent any more,
and the corresponding space $E^{\pi_0}$ may have dimension $\geq \dim E$. But for a generic $\pi$, the dimension of the corresponding space
coincides with dimension of $E$ (because $\pi$ moves in an irreducible space). Therefore,
for a generic $\pi$ we have $\dim E^{\pi}=\dim E \leq \dim E^{\pi_0}$ and the claim is proved. 

By the Claim, for a generic $\pi$, the dimension of $E^{\pi}$ is less than or equal to $\dim E^{\pi_U}$ for the uniform distribution $\pi_U$ and the dimension of this vector space is $|\Sigma_{\kappa}|$ (by Lemma \ref{lem_pointsmodel}(b)). Thus, $\dim E^{\pi} \leq |\Sigma_{\kappa}|$. On the other hand, the dimension of $\langle q_{\sigma} \, \mid \, \sigma \in \Sigma_{\kappa}$ is $|\Sigma_{\kappa}|-1$ by Lemma \ref{lem_pointsmodel}(a). The inclusion
 $$\langle q_{\sigma} \, \mid \, \sigma \in \Sigma_{\kappa} \rangle \subseteq \mathcal{D}^{\pi} = E^{\pi}\cap H$$
finishes the proof. Note that if $\kappa \geq n$ one immediately has $\mathcal{D}^{\pi}=\langle q_{\sigma} \, \mid \, \sigma \in \Sigma_{n}$ for any $\pi$, and its dimension follows from Lemma \ref{lem_pointsmodel}(a).
\end{proofthmmodel}


\begin{remark}\rm{
In Theorem \ref{thm_model} we give a  set of affine independent points that span $\mathcal{D}^{\pi}$ for almost any distribution $\pi$. From this set of points (vectors) it easy to compute a basis of the space of linear invariants $L^{\pi}$ as its orthogonal space.
}
\end{remark}

\begin{example}\label{ex_points}\rm
We give here the coordinates of the points that span the spaces of mixtures on trees with $n=4$ and $\kappa=4$ or $\kappa=3$.

For $\kappa=4$ we have $|\Sigma_4|=B_4=15$ and $\mathcal{D}^{\pi}=\langle q_{\sigma} \, \mid \, \sigma \in \Sigma_{\kappa} \rangle$.
We start with 12 partitions $\sigma$ that correspond to forests in the star tree $T_{*}$.
We call $q_{\bullet}$ the point corresponding to the  trivial subforest of $T_*$ (formed by singletons).
We call $q_{ij}$ the points corresponding to the full subforest of $T_*$ formed by the tree $T[i,j]$ and singletons
(this gives six points, $q_{ij}$, $i< j$).
Then we consider the forests formed by  a subtree of three leaves $i,j,k$ and a singleton, which gives four points
$q_{123}$, $q_{124}$, $q_{134}$, $q_{234}$. Finally, we denote by  $q_{1234}$ the point corresponding to the forest $F=\{T_*\}$.
To simplify notation we write the normalized coordinates $\tx_{\chi_1\dots\chi_4}$ instead of $\x_{\chi_1\dots\chi_4}$.
Let the space of states $S$ be $\{x,y,z,w\}$.
In order to prove that the 15 points we provide are affine linearly independent, it is enough to look at the following
15 coordinates of these points: $$\tx_{xxxx}, \tx_{xxxy},\tx_{xxyx},\tx_{xyxx},\tx_{yxxx},\tx_{ xxyy},\tx_{xyxy},\tx_{xyyx},$$
$$\tx_{ xxyz},\tx_{xyxz},\tx_{xyzx},\tx_{yxzx},\tx_{yxxz},\tx_{yzxx}, \tx_{xyzw}.$$

In Table \ref{table_star}
we write the coordinates of the first 12 points considered above.


\begin{table}
\caption{\label{table_star}Linearly independent points for $\mathcal{D}_{T_*}$ for $n=4$ in coordinates $\tx's$}
\resizebox{12cm}{!}{
\begin{tabular}{l||c|c|c|c|c|c|c|c|c|c|c|c|c|c|c|}

               & xxxx & xxxy & xxyx & xyxx & yxxx& xxyy & xyxy & xyyx & xxyz & xyxz & xyzx & yxxz & yxzx & yzxx & xyzw\\

                \hline
               $q_{\bullet}$& 1 & 1 & 1 & 1 & 1 & 1 & 1 & 1 & 1 & 1 & 1 & 1 & 1 & 1 & 1 \\
               $q_{12}$&$\frac{1}{\pi_x}$ & $\frac{1}{\pi_x}$ & $\frac{1}{\pi_x}$ & 0 &0 & $\frac{1}{\pi_x}$ & 0 & 0 &$\frac{1}{\pi_x}$&0 & 0 & 0 & 0 & 0 & 0 \\
               $q_{13}$&$\frac{1}{\pi_x}$ & $\frac{1}{\pi_x}$ & 0 & $\frac{1}{\pi_x}$ &0 & 0 & $\frac{1}{\pi_x}$ & 0 & 0 & $\frac{1}{\pi_x}$  & 0 & 0 & 0 & 0 & 0 \\
               $q_{14}$&$\frac{1}{\pi_x}$ & 0& $\frac{1}{\pi_x}$ & $\frac{1}{\pi_x}$ &0 & 0 & 0 & $\frac{1}{\pi_x}$ & 0 & 0 & $\frac{1}{\pi_x}$  & 0 & 0 & 0 & 0 \\
               $q_{23}$&$\frac{1}{\pi_x}$ & $\frac{1}{\pi_x}$ & 0 & 0& $\frac{1}{\pi_x}$ &0 & 0 & $\frac{1}{\pi_y}$ & 0 & 0  & 0 & $\frac{1}{\pi_x}$  & 0 & 0 & 0 \\
               $q_{24}$&$\frac{1}{\pi_x}$ & 0 & $\frac{1}{\pi_x}$ & 0 & $\frac{1}{\pi_x}$ & 0 & $\frac{1}{\pi_y}$ & 0 & 0 & 0 & 0 &0 & $\frac{1}{\pi_x}$  & 0 & 0 \\
               $q_{34}$&$\frac{1}{\pi_x}$ & 0 & 0 & $\frac{1}{\pi_x}$ & $\frac{1}{\pi_x}$ & $\frac{1}{\pi_y}$ & 0 & 0 & 0 & 0 & 0 & 0 & 0 & $\frac{1}{\pi_x}$  & 0 \\
               $q_{123}$&$\frac{1}{\pi_x^2}$ & $\frac{1}{\pi_x^2}$ & 0 & 0 & 0 & 0 & 0 & 0 & 0 & 0 & 0 & 0 & 0 & 0 & 0 \\
               $q_{124}$&$\frac{1}{\pi_x^2}$ & 0 & $\frac{1}{\pi_x^2}$ & 0  & 0 & 0 & 0 & 0 & 0 & 0 & 0 & 0 & 0 & 0 & 0 \\
               $q_{134}$&$\frac{1}{\pi_x^2}$ & 0 & 0 & $\frac{1}{\pi_x^2}$ & 0 & 0 & 0 & 0 & 0 & 0 & 0 & 0 & 0 & 0 & 0 \\
               $q_{234}$&$\frac{1}{\pi_x^2}$ & 0 & 0 & 0 & $\frac{1}{\pi_x^2}$ & 0 & 0 & 0 & 0 & 0 & 0 & 0 & 0 & 0 & 0 \\
               $q_{1234}$&$\frac{1}{\pi_x^3}$ & 0 & 0 & 0 & 0 & 0 & 0 & 0 & 0 & 0 & 0 & 0 & 0 & 0 & 0 \\
               \hline
             \end{tabular}
						 }
\end{table}
If we consider the previous points plus the point $q_{12|34}$  that corresponds to the forest $\{T[1,2], T[3,4]\}$ on the tree $T_{12|34}$, then we obtain a set of linearly independent points that span $\mathcal{D}^{\pi}_{12|34}$. In Table \ref{table_1234} we show the coordinates of this new point.
\begin{table}
\caption{\label{table_1234} The new point added for tree $12|34$}
\resizebox{12cm}{!}{
\begin{tabular}{l||c|c|c|c|c|c|c|c|c|c|c|c|c|c|c|}
& xxxx & xxxy & xxyx & xyxx & yxxx& xxyy & xyxy & xyyx & xxyz & xyxz & xyzx & yxxz & yxzx & yzxx & xyzw\\
\hline
$q_{12|34}$& $\frac{1}{\pi_x^2}$& 0 & 0& 0& 0&$\frac{1}{\pi_x\pi_y}$ &0&0&0&0&0&0&0&0&0
\end{tabular}
}
\end{table}

Now we consider the points corresponding to the forests compatible  for the remaining quartets, $q_{13|24}$, $q_{14|23}$
(their coordinates are shown in Table \ref{table_other}). The previous points together with these two points span the space of mixtures $\mathcal{D}^{\pi}$.

\begin{table}
\caption{\label{table_other} The two points added when considering the quartets $13|24$ and $14|23$}
\resizebox{12cm}{!}{
\begin{tabular}{l||c|c|c|c|c|c|c|c|c|c|c|c|c|c|c|}
& xxxx & xxxy & xxyx & xyxx & yxxx& xxyy & xyxy & xyyx & xxyz & xyxz & xyzx & yxxz & yxzx & yzxx & xyzw\\
\hline
$q_{13|24}$& $\frac{1}{\pi_x^2}$& 0 & 0& 0& 0& 0&$\frac{1}{\pi_x\pi_y}$ &0&0&0&0&0&0&0&0\\
$q_{14|23}$& $\frac{1}{\pi_x^2}$& 0 & 0& 0& 0&  0 &  0&$\frac{1}{\pi_x\pi_y}$ &0&0&0&0&0&0&0
\end{tabular}
}
\end{table}

\noindent

Consider now the case $\kappa=3$. Then, according to Theorem \ref{thm_model}, $\mathcal{D}^{\pi}$ has dimension 13 for generic $\pi$.
Indeed, if we consider the  15 points above, then they are no longer linearly independent
when the last column of the table is removed. The last 14 points suffice to span $\mathcal{D}^{\pi}$ in this case.

\end{example}

\section{Phylogenetic mixtures on a  fixed tree}
\label{Phylogenetic_mixtures_on_a_fixed_tree}





In this section we compute the dimension of the space of phylogenetic mixtures on a tree, give an algorithm to compute a basis of the space of liner topology invariants and
we explain whether Lake--type invariants of Proposition \ref{lake} suffice to describe the space of phylogenetic invariants.
For $\kappa=2$ there are known to be no linear topology invariants \citep{mat08};  these arise for $\kappa \geq 3$ (see Lemma \ref{lem_3states} below, though  Lake--type invariants only appear when $\kappa \geq 4$).
Moreover, even when $\kappa=4$ for certain models there exist other linear topology invariants beyond the Lake--type ones \citep{fu95}.
By considering the $EI/RC$ model we show how it is possible to characterize the quotient space of linear topology invariants
for any number of states and taxa,
and provide an explicit algorithm for constructing a basis for  the (quotient) space of topological invariants.
As explained in the introduction, linear topology invariants are of interest because they provide a way to distinguish distributions coming from mixtures on
a particular topology from distributions arising as mixtures on another topology.

Recall that $E_T^{\pi}$ is the space of vectors where the linear phylogenetic invariants vanish.
We know by Lemma \ref{lem_forests}(b) that a homogeneous linear polynomial vanishes on
all distributions $\PP_{T,\Theta}$ if and only if it vanishes on all distributions of type $\PP_{T,\Theta_F}$ for $F$ a full subforest of $T$.
Therefore we have
 $$E_T^{\pi}=\langle  { \vec{q}}_F | \,  \, F \in \mathcal{F}_T \rangle.$$

\begin{example}\rm\label{casen=3}
 Let $n=3$, let $T$ be the tripod tree and assume that $\kappa\geq 3$. We prove here that the vectors ${ \vec{q}}_F$, for $F \in \mathcal{F}_T$ are linearly independent.
 These vectors are: ${ \vec{q}}_{\bullet}$ corresponding to the trivial subforest, ${ \vec{q}}_{12|3}$, ${ \vec{q}}_{13|2}$, ${ \vec{q}}_{23|1}$ corresponding
 to full sub forests with one singleton, and  ${ \vec{q}}_{123}$ corresponding to the  tree itself.
 We choose three states $x,y,z$ and we provide in Table \ref{table_casen=3} the submatrix corresponding to the coordinates $\x_{xxx}$, $\x_{xxy}$, $\x_{xxy}$,
 $\x_{xyx}$, $\x_{yxx}$, $\x_{xyz}$. It is clear that this submatrix has nonvanishing determinant if $\pi$ is positive.

\begin{table}
\caption{\label{table_casen=3} Table of example \ref{casen=3}.}
\begin{tabular}{l||c|c|c|c|c|}
$\x_{xxx}$ & $\x_{xxy}$ & $\x_{xxy}$ &
 $\x_{xyx}$ & $\x_{yxx}$ & $\x_{xyz}$\\
\hline
${ \vec{q}}_{\bullet}$ &$\pi_x^3$ &$\pi_x^2\pi_y$& $\pi_x^2\pi_y$ & $\pi_x^2\pi_y$ & $\pi_x\pi_y\pi_z$\\
${ \vec{q}}_{12|3}$ &$\pi_x^2$ & $\pi_x\pi_y$ & 0&0&0\\
${ \vec{q}}_{13|2}$ &$\pi_x^2$& 0&$\pi_x\pi_y$ &0&0\\
${ \vec{q}}_{23|1}$ &$\pi_x^2$ &0 &0 &$\pi_x\pi_y$ &0\\
${ \vec{q}}_{123}$ & $\pi_x$ & 0 &0&0&0
\end{tabular}
\end{table}

\end{example}



Let $T$  be a binary tree on $[n]$, $n\geq 4$, and assume that leaves $n$ and $n-1$ form a cherry $c$.
Let $u$ be the interior node of this cherry, and let  $e$ be the edge adjacent to $u$ and not to $n,n-1$.
Let $T'$ be the subtree $T-\{e_n,e_{n-1}\}$.
We denote by $\mathcal{F}_c$ the set of full subforests of $T$ that contain a tree with the cherry $c=\{e_n,e_{n-1}\}$.
For any leaf $l$ we let $\mathcal{F}_l$ be the set of full subforests of $T$ that contain $l$ as a singleton
and we call $T_l$ the tree obtained by replacing the two edges adjacent to $e_l$ by a single edge. 
Then $\mathcal{F}_{T}$ is the disjoint union of $\mathcal{F}_{c}$ and $\mathcal{F}_{n-1}\cup \mathcal{F}_{n}$.

\begin{lemma}\label{lem_iso} For a binary tree on $n \geq 4$ leaves we have isomorphisms of vector spaces:
$$\langle  { \vec{q}}_F | \,  \, F \in \mathcal{F}_l \rangle \cong \langle  { \vec{q}}_G | \,  \, G \in \mathcal{F}_{T_l} \rangle \, , \quad \langle  { \vec{q}}_F | \,  \, F \in \mathcal{F}_c \rangle \cong \langle  { \vec{q}}_G | \,  \, G \in \mathcal{F}_{T'} \rangle .$$
\end{lemma}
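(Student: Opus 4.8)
The plan is to realize each of the two isomorphisms concretely, by exhibiting in each case a pair of mutually inverse linear maps between the ambient coordinate spaces $\RR^{\kappa^n}$ and $\RR^{\kappa^{n-1}}$ that carry the spanning vectors $\vec q_F$ to the spanning vectors $\vec q_G$ and back. Everything rests on the random-cluster factorization of Eqn.~(\ref{param$RC$}): when the parameters are $\Theta_F$ (so all $\theta_e\in\{0,1\}$), the value $\PP_T(\chi|\Theta_F)$ equals $\prod_{c}\pi_{s_c}$ over the components $c$ of $F$ if $\sigma(F)$ refines $\sigma(\chi)$, and is $0$ otherwise.

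First I would record the two combinatorial correspondences. A forest $F\in\mathcal F_l$ is exactly $F_T(\sigma)$ for a convex partition $\sigma$ of $[n]$ in which $\{l\}$ is a block; deleting this block and passing to $T_l$ gives a convex partition of $T_l$, hence a $G\in\mathcal F_{T_l}$, and $F\mapsto G$ is a bijection. Likewise $F\in\mathcal F_c$ is $F_T(\sigma)$ for a convex $\sigma$ in which $n$, $n-1$ and the cherry vertex $u$ lie in one block; replacing the pair $\{n,n-1\}$ in that block by the single leaf $u$ of $T'$ gives a bijection $F\mapsto G$ between $\mathcal F_c$ and $\mathcal F_{T'}$. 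I would note, but not belabour, that convexity transfers in both directions.

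For the first isomorphism I use the marginalization map $M_l:\RR^{\kappa^n}\to\RR^{\kappa^{n-1}}$ of Lemma~\ref{lem_margin}. Since $l$ is a singleton of $F$, the factorization reads $\PP_T(\chi|\Theta_F)=\pi_{\chi_l}\,\PP_{T_l}(\chi'|\Theta_G)$, where $\chi'$ is $\chi$ with its $l$-th entry deleted; summing over $\chi_l$ and using $\sum_s\pi_s=1$ gives $M_l(\vec q_F)=\vec q_G$. I then define a section $\Psi:\RR^{\kappa^{n-1}}\to\RR^{\kappa^n}$ by $\Psi(\vec v)_\chi=\pi_{\chi_l}\,\vec v_{\chi'}$, so that $\Psi(\vec q_G)=\vec q_F$; here $M_l\circ\Psi=\mathrm{id}$ (again by $\sum_s\pi_s=1$), while $\Psi\circ M_l=\mathrm{id}$ on $\langle \vec q_F\mid F\in\mathcal F_l\rangle$. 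Hence $M_l$ restricts to the claimed isomorphism.

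For the second isomorphism the intact cherry forces $\chi_n=\chi_{n-1}$: if $\chi_n\neq\chi_{n-1}$ then $\sigma(F)$ cannot refine $\sigma(\chi)$ and $\PP_T(\chi|\Theta_F)=0$, whereas if $\chi_n=\chi_{n-1}=s$ the component carrying $u,n,n-1$ contributes a single factor $\pi_s$, giving $\PP_T(\chi|\Theta_F)=\PP_{T'}(\chi''|\Theta_G)$, where $\chi''$ is the character on the leaves of $T'$ with $\chi''_i=\chi_i$ for $i\le n-2$ and $\chi''_u=s$. Thus every $\vec q_F$ with $F\in\mathcal F_c$ is supported on the diagonal $\{\chi:\chi_n=\chi_{n-1}\}$. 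I would define the diagonal restriction $\Phi:\RR^{\kappa^n}\to\RR^{\kappa^{n-1}}$ by $\Phi(\vec v)_{\chi''}=\vec v_\chi$ (with $\chi$ the lift setting $\chi_n=\chi_{n-1}=\chi''_u$) and its section $\Psi'$ placing a vector back on the diagonal and zero off it; then $\Phi(\vec q_F)=\vec q_G$, $\Psi'(\vec q_G)=\vec q_F$, $\Phi\circ\Psi'=\mathrm{id}$, and $\Psi'\circ\Phi=\mathrm{id}$ on the diagonal-supported subspace containing $\langle \vec q_F\mid F\in\mathcal F_c\rangle$, which yields the isomorphism. All steps are elementary once the factorization is in hand; the only point requiring care—the minor main obstacle—is checking that the correspondences $F\mapsto G$ really are bijections onto all of $\mathcal F_{T_l}$ (resp.\ $\mathcal F_{T'}$) with convexity preserved, so that $M_l$ and $\Phi$ land in the correct spans and the sections recover exactly the generators; the normalization $\sum_s\pi_s=1$ is precisely what makes marginalization and its section mutually inverse.
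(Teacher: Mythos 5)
Your proof is correct and follows essentially the same route as the paper's: both realize each isomorphism by a linear map onto a $\kappa^{n-1}$-dimensional coordinate space that carries the generators $\vec{q}_F$ bijectively onto the generators $\vec{q}_G$, using the factorization $\PP_T(\chi|\Theta_F)=\pi_{\chi_l}\PP_{T_l}(\chi'|\Theta_{F|T_l})$ for the singleton case (resp.\ the fact that $\vec{q}_F$ is supported on $\{\chi:\chi_{n-1}=\chi_n\}$ for the cherry case), together with the explicit forest correspondences $\mathcal{F}_l\leftrightarrow\mathcal{F}_{T_l}$ and $\mathcal{F}_c\leftrightarrow\mathcal{F}_{T'}$. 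The only cosmetic differences are that for the first isomorphism the paper projects onto a single slice $\x_{\chi_1\cdots\chi_{n-1}s}$ (obtaining $\pi_s\vec{q}_{F|T_l}$) instead of marginalizing over the state at $l$, and it deduces injectivity from the direct-sum structure of $\RR^{\kappa^n}$ rather than by exhibiting a two-sided section.
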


\begin{proof} We start with the first isomorphism. For simplicity we assume $l=n$ (and for this isomorphism $n$ is not necessarily a leaf in a cherry). Let $V_n$ be the vector space $\langle  { \vec{q}}_F | \,  \, F \in \mathcal{F}_n \rangle$.
For any state $s \in S$ we denote by $f^{s}$ the projection map from  $\mathbb{R}^{\kappa^n}$
to the subspace $R_s$ corresponding to coordinates coordinates $\x_{\chi_{1} \dots\chi_{n-1} s}$, so that we can view $\mathbb{R}^{\kappa^n}$ as the direct sum $R_{s_1}\oplus \dots\oplus R_{s_{\kappa}}.$
For a vector $v \in \mathbb{R}^{\kappa^n}$ we denote by $(f^{s_1}(v),\dots,f^{s_{\kappa}}(v))$ the decomposition of $v$ according to this direct sum.
Note that if $F\in \mathcal{F}_n$, then $\PP_T(\chi_1\dots\chi_n|\Theta_F)=\pi_{\chi_n}\PP_{T'}(\chi_1\dots\chi_{n-1}|\Theta_{F|T_n})$.
In particular, we have $f^{s}({ \vec{q}}_F)=\pi_{s}{ \vec{q}}_{F|T'}$ for any $s\in S$ and ${ \vec{q}}_{F}=(\pi_{s_1}{ \vec{q}}_{F|T_n},\dots,\pi_{s_{\kappa}}{ \vec{q}}_{F|T_n})$.

We prove here that (for any $s\in S$) the linear map $f^s$ is an isomorphism between $V_n$ and the target vector space.
First of all, the linear map $f^s_{|V_n}$ is injective.
Indeed, if $f^s_{|V_n}(v)=0$ for a certain $v=\sum_{F \in \mathcal{F}_n} \lambda_F{ \vec{q}}_F$, then $0=\sum_{F \in \mathcal{F}_n} \lambda_F f^s({ \vec{q}}_F)=\sum_{F \in \mathcal{F}_n} \lambda_F \pi_s { \vec{q}}_{F|T_n}$
and hence (assuming $\pi_s\neq0$) $\sum_{F \in \mathcal{F}_n} \lambda_F  { \vec{q}}_{F|T'}=0$.
This implies that $v=(0,\dots,0)$ in $R_{s_1}\oplus \dots\oplus R_{s_{\kappa}}$ and so $f^s_{|V_n}$ is an injective linear map.


We prove that the image of $f^s_{|V_n}$ is $\langle  { \vec{q}}_G | \,  \, G \in \mathcal{F}_{T_n} \rangle$. From the above, one can easily see that ${\rm Im} f^s_{|V_n}$ is contained in $\langle  { \vec{q}}_G | \,  \, G \in \mathcal{F}_{T_n} \rangle$. Now for any $G \in \mathcal{F}_{T_n} $ we shall find $\tilde{G}\in \mathcal{F}_{T} $ such that  $\tilde{G}_{|T_n}=G.$
If $n$ does not belong to a cherry, we consider $\tilde{G}$ to be the full subforest of $T$ defined by the singleton $\{n\}$, and the trees in $G$ (thinking of $T_n$ as a subtree of $T$).
If $n$ belongs to a cherry, we can think of $T_n$ as the tree $T'$ described above. Now for any  $G\in  \mathcal{F}_{T'} $, we consider $\tilde{G}$ the full subforest of $T$ defined by: the singleton $\{n\}$, $t$ for any  $t\in G$ not containing $e$ nor $u$,   $t\cup {e_{n-1}}$ if there is $t\in G$ containing $e$, and the singleton $\{n-1\}$ if
$G$ contains the singleton $\{u\}$.  In this way we have $\tilde{G}_{|T'}=G$ and ${ \vec{q}}_{G}=\frac{1}{\pi_s}f^s_{|V_n}{ \vec{q}}_{\tilde{G}}\in {\rm Im} f^s_{|V_n}$, so the other inclusion is proved.

As far as the second isomorphism is concerned, we consider the subspace $L\subset \mathbb{R}^{\kappa^n}$ given by coordinates of type $\x_{\chi_1\dots\chi_{n-2}s s}$ for any $\chi_1,\dots,\chi_{n-2},s$ in $S$.
We have $\mathbb{R}^{\kappa^n}=L\oplus L^{\perp}$ and if $f$ denotes the projection to $L$, then any vector $v$ can be decomposed as
$(f(v),v-f(v)).$ If $F\in \mathcal{F}_c$, then $\PP_T({\chi_1\dots\chi_{n-1}\chi_{n}}|\Theta_F)$
is zero if $\chi_{n-1}\neq \chi_n$ and is equal to $\PP_{T'}(\chi_1\dots\chi_{n-2}|\Theta_{F|T'})$ if $\chi_{n-1}=\chi_n=s$.
Hence, if $F\in \mathcal{F}_c$ we have ${ \vec{q}}_F=(f({ \vec{q}}),0)=({ \vec{q}}_{F|T'},0).$
Now we prove that $f_{|V_c}$ is injective. Let $v=\sum_{F \in \mathcal{F}_c} \lambda_F{ \vec{q}}_F$ and suppose that $f(v)=0$. Then $0~=~\sum_{F \in \mathcal{F}_c} \lambda_F f({ \vec{q}}_F)=\sum_{F \in \mathcal{F}_c} \lambda_F { \vec{q}}_{F|T'}$ and
$$v=\sum_{F \in \mathcal{F}_c} \lambda_F { \vec{q}}_F=\sum_{F \in \mathcal{F}_c} \lambda_F ({ \vec{q}}_{F|T'},0)=(\sum_{F \in \mathcal{F}_c} \lambda_F { \vec{q}}_{F|T'},0)=0. $$
This proves that $f_{|V_c}$ is injective. Moreover the image of this map is included in the subspace
$\langle  { \vec{q}}_G | \,  \, G \in \mathcal{F}_{T'} \rangle$. For any $G \in \mathcal{F}_{T'}$ we consider the full subforest $\bar{G}$ of $T$ defined by:
the trees in $G$ that do not contain $e$, $t\cup{c}$ if $t$ contains $e$, and the cherry $c$ if $G$ contains the singleton $\{u\}$.
Therefore we have $\bar{G}_{|T'}=G$ and ${ \vec{q}}_{G}=f_{|V_c}{ \vec{q}}_{\tilde{G}}\in {\rm Im} f^s_{|V_c}.$
\end{proof}

\begin{theorem}\label{thm_dimension}
Let $T$ a phylogenetic tree on $n$ leaves, $n\geq 3$, evolving under the $EI/RC$ model for any distribution $\pi$ on $\kappa \geq 3$ states. Then, $\left\{ q_F | \,  \, F \in \mathcal{F}_T \right\}$ are affine independent points that span the space of phylogenetic mixtures on $T$, $\mathcal{D}^{\pi}_T$.
In particular, the dimension of $\mathcal{D}^{\pi}_T$ is $|\mathcal{F}_T|-1$ and when $T$ is binary this dimension is equal to the Fibonacci number $F_{2n-1}$ minus 1.
\end{theorem}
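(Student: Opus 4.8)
The statement has two halves, and the spanning half is essentially already proved. Indeed, Lemma~\ref{lem_forests}(b) gives $E^\pi_T=\langle\vec q_F\mid F\in\mathcal F_T\rangle$, and since $\mathcal D^\pi_T=E^\pi_T\cap H$, the points $q_F$ span $\mathcal D^\pi_T$. As every $q_F$ lies in the hyperplane $H=\{\sum_\chi\x_\chi=1\}$, which misses the origin, affine independence of $\{q_F\}$ is equivalent to linear independence of the vectors $\{\vec q_F\}$: an affine dependence is a linear one, and conversely applying the functional $\sum_\chi\x_\chi$ to a linear relation $\sum_F\lambda_F\vec q_F=0$ forces $\sum_F\lambda_F=0$. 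So the whole theorem reduces to showing that $\{\vec q_F\mid F\in\mathcal F_T\}$ is linearly independent; the value of the dimension and the Fibonacci count then follow from $|\mathcal F_T|=|\co(T)|=F_{2n-1}$ for binary $T$.

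I would first reduce the general (possibly non-binary) tree to the binary case. Any phylogenetic $[n]$--tree $T$ is a contraction of some binary tree $\hat T$ on the same leaves, and contracting edges only removes convex partitions, so $\co(T)\subseteq\co(\hat T)$. Since $q_\sigma$ depends only on $\sigma$ and not on the tree it is convex on, $\{\vec q_F\mid F\in\mathcal F_T\}=\{\vec q_\sigma\mid\sigma\in\co(T)\}$ is a subset of $\{\vec q_\sigma\mid\sigma\in\co(\hat T)\}$, and linear independence for $\hat T$ yields it for $T$.

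For binary $T$ I would induct on $n$, the base case $n=3$ being the tripod of Example~\ref{casen=3}. For the step, fix a cherry $\{n-1,n\}$ and use $\mathcal F_T=\mathcal F_c\sqcup(\mathcal F_n\cup\mathcal F_{n-1})$ together with Lemma~\ref{lem_iso}. The engine is the grading of $\RR^{\kappa^n}$ by the state pair $(\chi_{n-1},\chi_n)$, writing a vector as blocks $v^{a,b}$: for $F\in\mathcal F_c$ the vector $\vec q_F$ is supported on the diagonal blocks, while $\vec q_F^{\,a,b}=\pi_b(\vec q_{F\mid T_n})^a$ for $F\in\mathcal F_n$ and $\vec q_F^{\,a,b}=\pi_a(\vec q_{F\mid T_{n-1}})^b$ for $F\in\mathcal F_{n-1}$. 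Given $\sum_F\lambda_F\vec q_F=0$, in the off-diagonal blocks $(a\neq b)$ the $\mathcal F_c$ terms drop out, leaving $\pi_bA^a+\pi_aB^b=0$, where $A^a,B^b$ gather the $\mathcal F_n$-- and $\mathcal F_{n-1}$--contributions. Positivity of $\pi$ and $\kappa\ge 3$ (there are two states distinct from any given one) then force $A^a=-\pi_aC$ and $B^b=\pi_bC$ for a single vector $C$. Substituting back into the diagonal blocks, the $A,B$--terms cancel and we obtain $\sum_{F\in\mathcal F_c}\lambda_F\vec q_{F\mid T'}=0$, so the induction hypothesis on $T'$ kills every $\lambda_F$ with $F\in\mathcal F_c$. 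It remains to see $C=0$: the surviving relation exhibits the ``leaf $n$ independent and $\pi$--distributed'' vector $\pi\otimes C$ as a combination, in the basis of $E^\pi_{T_{n-1}}$, of vectors in which leaf $n$ is \emph{non}--singleton, whereas applying the coordinate projection $f^s$ at leaf $n$ (as in Lemma~\ref{lem_iso}) shows $C\in E^\pi_{T^{\ast}}$ for the tree $T^{\ast}$ on $[n]\setminus\{n-1,n\}$, so $\pi\otimes C$ also lies in the span of the complementary, leaf--$n$--singleton, basis vectors; these two spans meet only in $0$, giving $C=0$. The induction hypotheses on $T_n$ and $T_{n-1}$ then annihilate the remaining coefficients.

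The only genuine obstacle is this inductive step, and within it the subtle point is the interaction of the three families, especially the overlap $\mathcal F_n\cap\mathcal F_{n-1}$ of forests in which both cherry leaves are singletons. The $(\chi_{n-1},\chi_n)$--grading is precisely what disentangles them, and both hypotheses---that $\pi$ is strictly positive and that $\kappa\ge 3$---are used essentially, the latter exactly in concluding that $A^a/\pi_a$ and $B^b/\pi_b$ are constant in $a$ and $b$; for $\kappa=2$ this fails, consistent with the known absence of linear topology invariants in the two--state case.
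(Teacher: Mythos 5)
Your proposal is correct and follows the same skeleton as the paper's proof: reduce affine to linear independence, reduce non-binary trees to a binary refinement, and induct on $n$ via the cherry decomposition $\mathcal{F}_T=\mathcal{F}_c\sqcup(\mathcal{F}_{n-1}\cup\mathcal{F}_n)$, Lemma~\ref{lem_iso}, and the $(\chi_{n-1},\chi_n)$--block computation that uses $\kappa\ge 3$ and the positivity of $\pi$. The one genuine difference is how the three families are recombined: the paper computes $\dim\langle \vec{q}_F\mid F\in\mathcal{F}_{n-1}\cup\mathcal{F}_n\rangle$ by Grassmann's formula, asserting that $\langle \vec{q}_F\mid F\in\mathcal{F}_{n-1}\rangle\cap\langle \vec{q}_F\mid F\in\mathcal{F}_n\rangle=\langle \vec{q}_F\mid F\in\mathcal{F}_{n-1}\cap\mathcal{F}_n\rangle$ without proof, whereas you run a direct linear-independence argument whose ``$C=0$'' step is in effect a proof of that assertion; your route is therefore more self-contained at exactly the point where the paper is thinnest. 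Two small repairs are needed: (i) the claim $C\in E^{\pi}_{T^{*}}$ does not follow from the coordinate projection $f^s$ alone (projecting $\pi\otimes C$ onto a fixed state at a cherry leaf just returns a multiple of $C$ with no tree structure attached); it follows by \emph{marginalizing} over that leaf as in Lemma~\ref{lem_margin}, which sends $\vec{q}_G$ to $\vec{q}_{G|T^{*}}$ and hence maps $E^{\pi}_{T_{n-1}}$ into $E^{\pi}_{T^{*}}$; (ii) your induction step at $n=4$ invokes the hypothesis for the two-leaf tree $T^{*}$, which lies outside the theorem's range $n\ge 3$, so you should either verify the two-leaf case directly (it is immediate: $\vec{q}_{\bullet}$ and $\vec{q}_{12}$ are independent for positive $\pi$) or, as the paper does, take $n=4$ as an additional base case.
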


\begin{proof} We proceed by induction on $n$. The statement of the theorem is equivalent to $\dim E^{\pi}_T=|\mathcal{F}_T|$.

The cases $n=3$ and $n=4$ are handled by Examples \ref{ex_points} and \ref{casen=3}.

For  $n\geq 5$,  suppose first that $T$ is a binary tree. We may assume that the statement is true for trees with strictly less than $n$ leaves.
We suppose that $n$ and $n-1$ form a cherry and adopt the notation fixed above. Then we have that
$$E^{\pi}_T=\langle  { \vec{q}}_F | \,  \, F \in \mathcal{F}_T \rangle=\langle  { \vec{q}}_F | \,  \, F \in \mathcal{F}_{n-1} \cup  \mathcal{F}_{n}
\rangle+\langle  { \vec{q}}_F | \,  \, F \in \mathcal{F}_c \rangle.$$
Note that $\langle  { \vec{q}}_F | \,  \, F \in \mathcal{F}_{n-1} \cup  \mathcal{F}_{n}
\rangle$ equals $\langle  { \vec{q}}_F | \,  \, F \in \mathcal{F}_{n-1}\rangle +\langle  { \vec{q}}_F | \,  \, F \in \mathcal{F}_{n}
\rangle$. We know that $\langle  { \vec{q}}_F | \,  \, F \in \mathcal{F}_{n-1}\rangle$ and $\langle  { \vec{q}}_F | \,  \, F \in \mathcal{F}_{n}
\rangle$ have dimension  $|\mathcal{F}_{T'}|$ by Lemma \ref{lem_iso} and the induction hypothesis.
These subspaces intersect in $\langle  { \vec{q}}_F | \,  \, F \in \mathcal{F}_{n-1}\cap \mathcal{F}_{n} \rangle.$
By Lemma \ref{lem_iso} (applied twice) and the induction hypothesis, this linear space has dimension $|\mathcal{F}_{T''}|$ where $T''$ is a tree on $n-2$ leaves.
Therefore, using Grassmann's formula  ($\dim(U + W) = \dim U + \dim W - \dim(U \cap W)$ for subspaces $U, W$ of a vector space)  we have that $\dim (\langle  { \vec{q}}_F | \,  \, F \in \mathcal{F}_{n-1}\rangle +\langle  { \vec{q}}_F | \,  \, F \in \mathcal{F}_{n}
\rangle)=|\mathcal{F}_{T'}|+|\mathcal{F}_{T'}|-|\mathcal{F}_{T''}|$.
As all of these trees are binary, this dimension equals the Fibonacci number $F_{2n-2}$ since $F_{2n-2}=F_{2n-3}+F_{2n-3}-F_{2n-5}$.

On the other hand, by Lemma \ref{lem_iso} and the induction hypothesis, $\langle  { \vec{q}}_F | \,  \, F \in \mathcal{F}_c \rangle$ has dimension $|\mathcal{F}_{T'}|=F_{2n-3}$.
Let us prove now that $\langle  { \vec{q}}_F | \,  \, F \in \mathcal{F}_c \rangle$ and $\langle  { \vec{q}}_F | \,  \, F \in \mathcal{F}_{n-1} \cup  \mathcal{F}_{n}
\rangle$ only intersect in the zero vector.
Let $v$ be a vector in the intersection,
$$v=\sum_{F\in \mathcal{F}_{n-1} \cup  \mathcal{F}_{n} }\lambda_F { \vec{q}}_F =
\sum_{G\in F_c}\mu_G { \vec{q}}_G.$$
Looking at the right-hand side we see that all the coordinates of $v$ of type
$\x_{\chi_{1} \dots\chi_{n-2} s s'}$ for $s\neq s'$ are zero. Let us fix $\chi_{1}, \dots,\chi_{n-2},s\in S$ and we shall prove that
the coordinate $\x_{\chi_{1} \dots\chi_{n-2} s s}$ of $v$, $\x_{\chi_{1} \dots\chi_{n-2} s s}(v)$, is 0.
Let us split the sum $\sum_{F\in \mathcal{F}_{n-1} \cup  \mathcal{F}_{n}}$ into two terms (although this decomposition may not be unique):
$\sum_{F\in \mathcal{F}_{n-1}}\lambda_F { \vec{q}}_{F}+
\sum_{H\in \mathcal{F}_{n} }\lambda_H { \vec{q}}_{H}$.
We denote by $F'$ the restriction of a forest $F$ to $T'$. Note that
$$\x_{\chi_{1} \dots\chi_{n-2} s s}(v)=\pi_s\x_{\chi_{1} \dots\chi_{n-2} s}\left(\sum_{F\in \mathcal{F}_{n-1}}\lambda_F { \vec{q}}_{F'}\right)+
\pi_s\x_{\chi_{1} \dots\chi_{n-2} s}\left(\sum_{H\in \mathcal{F}_{n} }\lambda_H { \vec{q}}_{H'}\right).$$

For each $\alpha \in S$ we denote by $a(\alpha)$
the value of the coordinate $\x_{\chi_{1} \dots\chi_{n-2} \alpha}$ of $\sum_{F\in \mathcal{F}_{n-1}}\lambda_F { \vec{q}}_{F'}$ and by $b(\alpha)$ the
value of this coordinate at $\sum_{H\in \mathcal{F}_{n} }\lambda_H { \vec{q}}_{H'}$. We want to prove that $a(s)+b(s)=0$.
Consider $s'$ and $s''$ states in $S$ different from  $s$ (this is possible because $\kappa\geq 3$). As
$$0=\x_{\chi_{1} \dots\chi_{n-2} s s'}(v)=\pi_{s'}a(s)+\pi_s b(s'),$$
$$0=\x_{\chi_{1} \dots\chi_{n-2} s' s}(v)=\pi_{s}a(s')+\pi_{s'} b(s),$$
$$0=\x_{\chi_{1} \dots\chi_{n-2} s' s''}(v)=\pi_{s''}a(s')+\pi_{s'} b(s''), \mbox{ and } $$
$$0=\x_{\chi_{1} \dots\chi_{n-2} s'' s'}(v)=\pi_{s'}a(s'')+\pi_{s''} b(s'),$$ we have
$$a(s)+b(s)=-\frac{\pi_s}{\pi_{s'}}(b(s')+a(s'))=\frac{\pi_s'}{\pi_{s''}}\frac{\pi_s}{\pi_{s'}}(a(s'')+b(s'')).$$
But now we use the analogous relations between $a(s), a(s'')$, $b(s)$, $b(s'')$:
$$0=\x_{\chi_{1} \dots\chi_{n-2} s s''}(v)=\pi_{s''}a(s)+\pi_s b(s'') \mbox{ and } $$
$$0=\x_{\chi_{1} \dots\chi_{n-2} s'' s}(v)=\pi_{s}a(s'')+\pi_{s''} b(s),$$ in order to obtain that $a(s)+b(s)=-\frac{\pi_s}{\pi_{s''}}(b(s'')+a(s''))$.
Therefore, $a(s)+b(s)=-a(s)-b(s)$ and this quantity vanishes.

Applying Grassmann's formula again, we have $\langle  { \vec{q}}_F | \,  \, F \in \mathcal{F}_{n-1}\cup \mathcal{F}_{n}\rangle \cap  \langle  { \vec{q}}_F | \,  \, F \in \mathcal{F}_c \rangle= 0$ and
$$\dim E^{\pi}_T=\dim(\langle  { \vec{q}}_F | \,  \, F \in \mathcal{F}_{n-1}\rangle +\langle  { \vec{q}}_F | \,  \, F \in \mathcal{F}_{n}
\rangle)+\dim \langle  { \vec{q}}_F | \,  \, F \in \mathcal{F}_c \rangle.$$
We have already seen that the first term is equal to $F_{2n-2}$. The second term is equal to $F_{2n-3}$ by Lemma \ref{lem_iso} and the induction hypothesis.
Therefore $\dim E^{\pi}_T=F_{2n-1}=|\mathcal{F}_T|.$

Let us assume now that $T$ is not binary. We already know that $E^{\pi}_T=\langle  { \vec{q}}_F | \,  \, F \in \mathcal{F}_T \rangle$ and we only need
to check that the vectors ${ \vec{q}}_F $, $F \in \mathcal{F}_T$, are linearly independent. As the forests in $T$ are also subforests of any
binary tree that refines $T$, these vectors are linearly independent by the binary tree case proved above. This finishes the proof.

%
%
\end{proof}

Recall that $L^{\pi}=(E^{\pi})^{\perp}$ and $L^{\pi}_T=(E^{\pi}_T)^{\perp}$ and therefore the quotient space $L^{\pi}_T / L^{\pi}$ of linear
\emph{topology} invariants
is isomorphic to $E^{\pi}/E^{\pi}_T$. As an immediate consequence of Theorems \ref{thm_model} and \ref{thm_dimension} we have:
\begin{corollary}\label{cor_topoinvar}
The dimension of the space of linear topology invariants is $|\Sigma_k|-|co(T)|$ if $\pi$ is either a generic distribution
or the uniform distribution, or $\kappa\geq n$ (and in this last case the dimension equals $|Inc(T)|$).
\end{corollary}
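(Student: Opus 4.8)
The plan is to read the dimension off directly from the isomorphism $L^{\pi}_T/L^{\pi}\cong E^{\pi}/E^{\pi}_T$ recorded just above the statement, combined with the two main theorems. First I would check that the quotient is genuinely defined: since every linear model invariant is in particular a linear phylogenetic invariant for the fixed tree $T$, we have $L^{\pi}\subseteq L^{\pi}_T$, and passing to orthogonal complements gives $E^{\pi}_T\subseteq E^{\pi}$. Hence $\dim(E^{\pi}/E^{\pi}_T)=\dim E^{\pi}-\dim E^{\pi}_T$, and it remains only to evaluate the two summands.

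Next I would convert the affine dimensions of the mixture spaces into the dimensions of the linear spaces $E^{\pi}$ and $E^{\pi}_T$. Writing $H$ for the affine hyperplane $\sum_{\chi}\x_{\chi}=1$, we have the equalities $\mathcal{D}^{\pi}=E^{\pi}\cap H$ and $\mathcal{D}^{\pi}_T=E^{\pi}_T\cap H$. Because each of $E^{\pi}$ and $E^{\pi}_T$ is spanned by genuine distribution vectors, on which the coordinate-sum functional takes the value $1$, this functional is not identically zero on either space; therefore the slice by $H$ drops the dimension by exactly one, giving $\dim E^{\pi}=\dim\mathcal{D}^{\pi}+1$ and $\dim E^{\pi}_T=\dim\mathcal{D}^{\pi}_T+1$. (The proofs of Theorems~\ref{thm_model} and~\ref{thm_dimension} in fact establish these linear dimensions directly, so this transversality remark is only a convenience.) This is the one place that needs a word of care; everything else is bookkeeping.

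Finally I would substitute the values. Under the stated hypotheses on $\pi$, Theorem~\ref{thm_model} gives $\dim\mathcal{D}^{\pi}=|\Sigma_{\kappa}|-1$, so $\dim E^{\pi}=|\Sigma_{\kappa}|$; and Theorem~\ref{thm_dimension}, valid for every positive $\pi$ on $\kappa\geq 3$ states, gives $\dim\mathcal{D}^{\pi}_T=|\mathcal{F}_T|-1$, so $\dim E^{\pi}_T=|\mathcal{F}_T|=|\co(T)|$ by the bijection between full subforests and convex partitions. Subtracting yields $\dim(L^{\pi}_T/L^{\pi})=|\Sigma_{\kappa}|-|\co(T)|$. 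For the case $\kappa\geq n$ every partition of $[n]$ has at most $\kappa$ blocks, so $|\Sigma_{\kappa}|=B_n$ and the dimension becomes $B_n-|\co(T)|=|\Inc(T)|$ by the definition of $\Inc(T)$, completing the argument.
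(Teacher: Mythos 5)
Your proposal is correct and is exactly the argument the paper intends: the paper states the corollary as an ``immediate consequence'' of Theorems~\ref{thm_model} and~\ref{thm_dimension} via the isomorphism $L^{\pi}_T/L^{\pi}\cong E^{\pi}/E^{\pi}_T$, and you have simply filled in the routine bookkeeping (passing from affine to linear dimensions through the hyperplane $H$, and using $|\mathcal{F}_T|=|\co(T)|$). Your explicit justification that slicing by $H$ drops each dimension by exactly one, because both spaces are spanned by vectors whose coordinates sum to $1$, is a worthwhile detail the paper leaves implicit.
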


As a consequence of Theorem \ref{thm_dimension}, we are able to provide an algorithm to obtain a basis of the space of linear {topology}
invariants for any tree $T$, $L^{\pi}_T / L^{\pi}$. To do so, note that if  $proj$ is the orthogonal projection from $E^{\pi}$
to the subspace $L^{\pi}_T=(E^{\pi}_T)^{\perp}$,
then $proj$ provides an isomorphism between $E^{\pi}/E^{\pi}_T$ and $L^{\pi}_T / L^{\pi}$ and therefore we have:

\vspace*{1mm}

\noindent\textbf{Algorithm.}
\begin{enumerate}
\item For each $F \in \mathcal{F}_T$ compute the coordinates of the vector $  \vec{q}_F  \in E^{\pi}_T$.
\item Complete the basis $\left\{  \vec{q}_F | \,  \, F \in \mathcal{F}_T \right\}$ by vectors $v_1,\dots,v_d$ from $E^{\pi}$
in order to obtain a basis of $E^{\pi}.$
\item Then the classes of $proj(v_1),\dots,$ $proj(v_d)$ form a basis of the space of linear topology invariants $L^{\pi}_T / L^{\pi}$.
\end{enumerate}

\noindent
Note that step $2$ can be done using the Steinitz exchange lemma and the spanning set of vectors of $E^{\pi}$ provided in
Theorem \ref{thm_model}.
\vspace*{1mm}

We prove now that Lake--type invariants  suffice to define the space of linear topology invariants of a tree when $\kappa\geq n$ and $\pi$ is the uniform distribution. We first need a combinatorial lemma.
\begin{lemma}\label{combinatorial} For any phylogenetic tree $T$ on $[n]$ and any partition $\sigma$ that is incompatible with $T$ there exist
two blocks $B, B'$ of $\sigma$ and leaves $x \in B$, $x' \in B'$ and an interior vertex $v$ of $T$ in the path connecting $x$ and $x'$ for which
the following holds:
\begin{itemize}
\item[]
For each leaf $l$ of $T$  in the same connected component of $T-v$ as $x$,  $l \in B$ or $\{l\} \in \sigma$.
\item[]
For each leaf $l$ of $T$  in the same connected component of $T-v$ as $x'$,  $l \in B'$ or $\{l\} \in \sigma$.
\end{itemize}
\end{lemma}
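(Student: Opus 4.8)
The plan is to recast the conclusion in a convenient form and then argue by induction on $n$. Say that a component $C$ of $T-v$ is \emph{pseudo-monochromatic of colour $B$} if every non-singleton leaf of $\sigma$ lying in $C$ belongs to the block $B$ and $C$ contains at least one such leaf; call an interior vertex $v$ \emph{good} if two of the components of $T-v$ are pseudo-monochromatic of distinct colours $B\neq B'$. The two conditions in the statement say exactly that the components of $T-v$ containing $x$ and $x'$ are pseudo-monochromatic (singleton leaves are free to appear on either side), so it suffices to prove that every incompatible $\sigma$ admits a good vertex. Note that since $T[B]\cap T[B']\neq\emptyset$ forces $|B|,|B'|\geq 2$, incompatibility requires at least two non-singleton blocks; in particular no incompatible $\sigma$ exists for $n\leq 3$, which starts the induction.

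For the inductive step I would distinguish three cases according to the ``reducibility'' of $(T,\sigma)$. First, if $\sigma$ has a singleton block $\{l\}$, I would delete the leaf $l$ (suppressing the resulting degree-two vertex). Singleton leaves never lie on any $T[B]$ with $|B|\geq 2$, so the restricted partition is still incompatible with $T-l$; the induction hypothesis yields a good vertex there, and re-attaching the singleton $l$ only adds a singleton leaf to one component, preserving pseudo-monochromaticity, so the good vertex lifts back to $T$. Second, if $\sigma$ has no singletons but some branch (one side of an edge) with at least two leaves is monochromatic, I would pick a \emph{maximal} such monochromatic branch $M$, of colour $c$, and contract it to a single new leaf of colour $c$. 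The crucial point is that the intersection $T[B]\cap T[B']$ witnessing incompatibility cannot lie strictly inside $M$ (any block other than $c$ has all of its leaves outside $M$, hence on a single side of every vertex of $M$), so incompatibility survives the contraction; applying the induction hypothesis gives a good vertex whose two pseudo-monochromatic components lift correctly, because re-expanding $M$ only reintroduces leaves of the single colour $c$.

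The remaining case is when $(T,\sigma)$ has no singletons and no monochromatic branch with at least two leaves; here I would produce a good vertex directly. Consider the subtree induced on the interior vertices of $T$ and take a leaf $w$ of it, so that $w$ has at most one interior neighbour and therefore, having degree at least three, at least two leaf-neighbours in $T$. The branch consisting of these leaf-neighbours has at least two leaves, so by hypothesis it is not monochromatic, which means two of the leaf-neighbours of $w$ carry different colours; the corresponding singleton components of $T-w$ are then pseudo-monochromatic of distinct colours, so $w$ is good (for a star one argues identically using the centre and the ``all-but-one-leaf'' branch). I expect the main obstacle to be the bookkeeping in the two reduction steps: proving carefully that deleting a singleton and, especially, contracting a branch both preserve incompatibility, and checking that the recovered components remain pseudo-monochromatic when the deleted or contracted piece is reinserted. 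The key insight that makes the contraction step work is to contract a \emph{maximal} monochromatic branch, since then the reinserted leaves all share one colour and cannot spoil whichever component they land in.
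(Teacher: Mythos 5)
Your first and third cases are fine, but the branch--contraction step (your second case) has a genuine gap: the good vertex produced by the induction hypothesis need not lift back to $T$. The problem arises exactly when the colour class $c$ of the contracted branch $M$ is entirely contained in $M$: then the new leaf $m$ is a \emph{singleton} of the reduced partition, so the inductive good vertex may place $m$ inside a distinguished component that is pseudo-monochromatic of some colour $B\neq c$; re-expanding $M$ then injects the non-singleton leaves of block $c$ into that component and destroys pseudo-monochromaticity. Concretely, take the caterpillar on $[6]$ with cherries $\{1,2\}$ and $\{5,6\}$ at the ends and leaves $3,4$ in between (with $3$ nearer the $\{1,2\}$ cherry), and $\sigma=\{\{1,2\},\{3,5\},\{4,6\}\}$, which is incompatible. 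The unique maximal monochromatic branch with at least two leaves is $M=\{1,2\}$, and its colour class is contained in $M$, so after contraction $m$ is a singleton. In the reduced $5$-leaf instance the interior vertex $v_4$ adjacent to leaf $4$ is a legitimate output of your induction hypothesis (its components $\{m,3\}$ and $\{4\}$ are pseudo-monochromatic of the distinct colours $\{3,5\}$ and $\{4,6\}$), yet in $T$ the component of $T-v_4$ containing $3$ also contains the non-singleton leaves $1,2\notin\{3,5\}$, so $v_4$ is not even a good vertex of $T$. The correct witness here is the vertex adjacent to leaf $3$, with blocks $\{1,2\}$ and $\{3,5\}$, which your recursion never finds. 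Your justification for maximality (``the reinserted leaves all share one colour and cannot spoil whichever component they land in'') is exactly where the argument breaks: leaves of a single colour $c$ do spoil a component meant to be pseudo-monochromatic of a colour $B\neq c$, since they are non-singleton leaves of $\sigma$ outside $B$.

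To repair this you would need either to strengthen the inductive statement, or to treat the sub-case $c\subseteq M$ separately. The paper sidesteps the difficulty by not recursing at all: it prunes \emph{all} maximal monochromatic pendant subtrees of $T$ simultaneously, keeping each one's block as a label on the resulting leaf, and then takes a cherry of the pruned tree $T'$; the two distinguished components are then the pruned pieces themselves (so they are monochromatic by construction), rather than components found in a smaller instance in which a pruned piece has been demoted to a harmless singleton. Your case 3 is essentially the paper's cherry argument in the special situation where no pruning is needed; the defect is only in how you propagate the answer back through the contraction.
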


\begin{proof}
First suppose that $\sigma$ has no singleton blocks. Let us say that an edge  $e= \{u,v\}$ of $T$ is {\em terminating} if:
\begin{itemize}
\item[(i)] all the leaves of $T$ that are in the  subtree $t_e$ of $T-v$ containing $u$  are contained in a single block of
$\sigma$ (say, $B_i$), and
\item[(ii)] at least two of the other subtrees of $T-v$ contain elements of $[n]$ not in $B_i$.
\end{itemize}
For each such terminating edge $e$ delete the pendant subtree $t_e$ from $T$ and label $u$ by $B_i$.
Let $T'$ be the resulting tree.  This tree $T'$ has at least four leaves
(since $\sigma$ is incompatible with $T$) and so $T'$ has a cherry (two leaves that are adjacent to a shared
vertex $v$).  This vertex $v$ and the label sets of the incident leaves ($B$ and $B'$) then satisfies the property claimed in the lemma.
The extension to allow $\sigma$ to have singleton blocks is now straightforward -- we can simply delete them first, repeat the argument above, and add them in afterwards.
\end{proof}

\begin{corollary}\label{cor_Lake} If $\pi_U$ is the uniform distribution and $\kappa\geq n$, then the Lake--type invariants of Proposition
\ref{lake} and model invariants generate the space of linear phylogenetic invariants for $T$.
\end{corollary}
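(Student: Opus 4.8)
The plan is to work entirely on the dual side. By the duality set up in Section~\ref{Phylogenetic_mixtures}, proving that the Lake--type invariants of Proposition~\ref{lake} together with the model invariants generate $L^{\pi_U}_T$ is equivalent to proving that the images of the Lake--type invariants span the quotient $L^{\pi_U}_T/L^{\pi_U}\cong (E^{\pi_U}/E^{\pi_U}_T)^{*}$. Since $\kappa\ge n$, Theorems~\ref{thm_model} and~\ref{thm_dimension} give $E^{\pi_U}=\langle q_\sigma\mid \sigma\in\Sigma_n\rangle$ and $E^{\pi_U}_T=\langle q_\sigma\mid \sigma\in\co(T)\rangle$, so $\{[q_\sigma]\mid \sigma\in\Inc(T)\}$ is a basis of $E^{\pi_U}/E^{\pi_U}_T$ and the quotient has dimension $|\Inc(T)|$ (cf. Corollary~\ref{cor_topoinvar}). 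Hence it suffices to produce, for each $\sigma\in\Inc(T)$, a Lake--type invariant $\ell_\sigma\in L^{\pi_U}_T$ so that the matrix $\big(\ell_\sigma(q_{\sigma'})\big)_{\sigma,\sigma'\in\Inc(T)}$ is invertible; I will make it triangular with nonzero diagonal with respect to a linear extension of the refinement order $\preceq$ on partitions (finer $\preceq$ coarser).

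Fix $\sigma\in\Inc(T)$ and apply Lemma~\ref{combinatorial} to obtain blocks $B,B'$, leaves $x\in B$, $x'\in B'$, and an interior vertex $v$ on the path from $x$ to $x'$; let $t,t'$ be the components of $T-v$ containing $x,x'$ and let $R$ collect the remaining leaves. Because $\pi_U$ is uniform the fully symmetric model satisfies (PS) for the (not necessarily single--leaf) subtrees $t,t'$, so Proposition~\ref{lake} applies. I pick four distinct states $a_1,a_2,b_1,b_2$ and take the event $\mathcal{E}$ to fix every leaf of $R$ to a state dictated by its $\sigma$--block: the $R$--parts of $B$ and $B'$ receive $a_1$ and $b_1$ respectively, while every other block of $\sigma$ (and every $\sigma$--singleton in $R$) receives its own distinct state. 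Writing $\chi_{ij}$ for the character that agrees with $\mathcal{E}$ on $R$ and assigns $a_i$ to all of $t$ and $b_j$ to all of $t'$, the resulting invariant is $\ell_\sigma=\sum_{i,j}(-1)^{i+j}\,\x_{\chi_{ij}}/(\pi(a_i)\pi(b_j))$.

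The heart of the argument is the evaluation $\ell_\sigma(q_{\sigma'})$. For uniform $\pi$ one has $q_{\sigma'}(\chi)=\kappa^{-|\sigma'|}$ when $\chi$ is constant on every block of $\sigma'$, and $0$ otherwise; hence $\ell_\sigma(q_{\sigma'})$ is a nonzero scalar multiple of the alternating sum $A(\sigma')=\sum_{i,j}(-1)^{i+j}[\chi_{ij}\ \text{is constant on every block of }\sigma']$, where $[\,\cdot\,]$ is the indicator. A block--by--block analysis shows $A(\sigma')$ factors over the two sides of $v$: any $\sigma'$--block meeting both $t$ and $t'$, or meeting $R$ in two distinct $\sigma$--blocks, kills every term; for the rest each block either leaves an index free or forces $i=1$ (resp. $j=1$), giving $A(\sigma')=\big(\sum_{i\in I}(-1)^{i}\big)\big(\sum_{j\in J}(-1)^{j}\big)$ with $1\in I$ and $1\in J$. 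This product is nonzero exactly when $I=J=\{1\}$, which forces $\chi_{11}$ to be constant on $\sigma'$, i.e. $\sigma'$ refines the coarsening $\hat\sigma$ of $\sigma$ obtained by absorbing the $\sigma$--singletons of $t,t'$ into $B,B'$; at $\sigma'=\sigma$ the value is $(1-[B\subseteq t])(1-[B'\subseteq t'])\neq0$, since the witnessing blocks $B,B'$ straddle $v$ (each has leaves on both sides). When $t,t'$ contain no $\sigma$--singletons we have $\hat\sigma=\sigma$, so $\ell_\sigma(q_{\sigma'})\neq0$ implies $\sigma'\preceq\sigma$, the matrix is triangular with nonzero diagonal, and the Lake--type invariants span the quotient.

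The step I expect to be the main obstacle is precisely the presence of $\sigma$--singletons inside $t$ or $t'$. There the Lake structure forces all of $t$ (resp. $t'$) into the single state $a_i$ (resp. $b_j$), merging those singletons with $B$ (resp. $B'$); the detected partition coarsens to some $\hat\sigma$ with $\sigma\prec\hat\sigma$, and $\ell_\sigma$ then takes spurious nonzero values on partitions lying strictly between $\sigma$ and $\hat\sigma$, destroying the triangularity. To remove this I first delete all $\sigma$--singleton leaves, obtaining a tree $T^{-}$ and the singleton--free partition $\sigma^{-}$, which is still incompatible and for which Lemma~\ref{combinatorial} yields subtrees meeting only $B$ and $B'$; I build $\ell_{\sigma^-}$ there and then reattach the deleted leaves using the Extension Lemma (Lemma~\ref{lem_ext}), assigning each a fresh distinct state, which is equivalent to absorbing them into $\mathcal{E}$. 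These distinct states force any $\sigma'$ with $\ell_\sigma(q_{\sigma'})\neq0$ to keep the reattached leaves as singletons and to refine $\sigma^{-}$ elsewhere, i.e. $\sigma'\preceq\sigma$, restoring the triangular pattern and completing the proof.
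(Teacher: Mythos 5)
Your proposal is correct and follows essentially the same route as the paper: the same dimension count $|\Inc(T)|$ for the quotient, the same construction of one Lake--type invariant per incompatible partition via Lemma~\ref{combinatorial} and Proposition~\ref{lake} with the event $\cE$ fixing the states of the remaining blocks, and the same removal-and-reattachment of singleton blocks via the Extension Lemma. The only difference is packaging: the paper proves spanning by induction on partition size (each invariant equals $\x_\sigma$ plus terms indexed by strictly larger, eventually compatible, partitions), whereas you dualize this into a triangular evaluation matrix against the points $q_{\sigma'}$ with respect to the refinement order --- the same triangularity seen from the other side.
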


\begin{proof} We omit the superscript $\pi_U$ for the spaces of linear invariants in this proof.
By Lemma \ref{lem_pointsmodel}(b) the  space of model invariants $L$ is spanned by the polynomials $\x_{\chi}-\x_{\chi'}$ for
$\sigma(\chi)=\sigma(\chi')$ and has dimension $\kappa^n-|\Sigma_n|$ (because $\kappa\geq n$).
We also have that $\dim L_T=\kappa^n- \dim E^{\pi_U}_T=\kappa^n-|\mathcal{F}_T|=\kappa^n -(|\Sigma_n|-|{\rm Inc}(T)|)$ and
$\dim L =\kappa^n-\dim E^{\pi_U}=\kappa^n-|\Sigma_n|$. Hence, we have $\dim L_T/L= \dim L_T -\dim L= |{\rm Inc}(T)|$.
So we need to prove that Lake's invariants give a set of $|{\rm Inc}(T)|$ linearly independent vectors in $L_T/L$.

Note that in $L_T/L$ we can work with polynomials in indeterminates $\x_{\sigma}$, $\sigma \in \Sigma_n$.

Let us prove that, if $\sigma$ is an incompatible partition on $T$, then $\x_{\sigma}$ is a linear combination of $\x_{\sigma'}$
for compatible partitions $\sigma'$ of size $>|\sigma|$. To this end, we proceed by induction on $m=n-|\sigma|$.

If $m=0$ or $1$, then $\sigma$ is convex on $T$ and there is nothing to prove.
Let $m\geq 2$ and assume that we have proved the statement when $n-|\sigma|$ is smaller than $m$.
Let $\sigma=\{B_1,\dots,B_r\}$ and we call $s_1,\dots,s_r$ the states associated to $\sigma$.
Assume first that  $\sigma$ has no singletons. Then, according to Lemma \ref{combinatorial} we can find two blocks of $\sigma$, say $B_1$, $B_2$, and an interior vertex $v$ for which all leaves in one of the subtrees $T'_1$ of $T-v$ are in $B_1$, and all leaves in one of the other subtrees $T'_2$ of $T-v$ are in $B_2$. We write $l_i'$ for the set of leaves in $T'_i$ so that $B_i$ is the disjoint union of $l_i'$ and another set $l_i$. We let $\mathcal{E}$ be the event that leaves $B_i$ are in state $s_i$ for $i\geq 3$, leaves in $l_1$ are in state $s_1$ and leaves in $l_2$ are in state $s_2$.
As the fully symmetric model satisfies the partial separability property (PS) and as $|\sigma|\leq n-2 \leq \kappa-2$, we can consider two new
states $s_1'$, $s_2'$ to apply Proposition \ref{lake} (with $t=T'_1$ and $t'=T'_2$). Thus we obtain the following linear invariant (written in
terms of partitions because the states do not matter, as soon as they are different):
$$\x_{\sigma}+\x_{l_1|l_1'|l_2|l_2'|B_3|\dots |B_r}-\x_{l_1|l_1'|B_2|B_3|\dots |B_r}-\x_{B_1|l_2|l_2'|B_3|\dots |B_r}.$$

Note that all partitions involved in this expression, except for $\sigma$, have size larger than $|\sigma|$
and we can apply the induction hypothesis to any $\x_{\sigma'}$ appearing here with $\sigma'$ incompatible, to write $\x_\sigma$ as a linear
combination of $\x_{\sigma'}'s$ using only compatible $\sigma'$.

If $\sigma$ has singletons, we remove these singletons in $T$ and $\sigma$ obtaining a tree $T_0$ and a partition $\sigma_0$ without singletons
on $T_0$. We apply the previous argument to $\sigma_0$ and $T_0$ to obtain a linear invariant.
Then we apply the Extension Lemma \ref{lem_ext}(a) recursively to add singletons and we end up also with a linear polynomial
that involves $\sigma$ and partitions of larger size. Hence, we can apply the induction hypothesis again.

The linear invariants obtained in this way for each incompatible partition $\sigma$ are of Lake--type and form a set of
linearly independent vectors in $L_T/L$ because they involve partitions of larger size.
\end{proof}

\begin{remark}\rm
\textbf{Case $\kappa=2$.} For $\kappa=2$, Theorem \ref{thm_dimension} and Corollary \ref{cor_Lake} do not apply.
In this case it is already known (see \citet{mat08}) that there are no  linear topology invariants for the uniform distribution
$\pi_U$ and hence $\mathcal{D}^{\pi_U}_T=\mathcal{D}^{\pi_U}$ for any tree $T$ (see \citet{mat08}).
One can actually prove that this also holds for any generic distribution $\pi$ and this space has dimension  $|\Sigma_2|= 2^{n-1}-1$, see \cite{mat08}.
\end{remark}

\begin{remark}\rm
\textbf{Case $\kappa=3$.} For $\kappa=3$ and $n=4$, we cannot apply Corollary \ref{cor_Lake} either. But in this case we can provide another topology invariant. We describe it in the following lemma
for $n=4$ but can be easily generalized for the uniform distribution to any tree by using a similar argument as in Proposition \ref{lake}.
Moreover, it is  not difficult to see that for $\kappa\geq 4$ it can be derived form Lake--type invariants.
\end{remark}

\begin{lemma}\label{lem_3states}
For the tree $12|34$ and any positive distribution $\pi$ on a set $S$ of  $\kappa \geq 3$ states, the polynomial
\begin{equation}\label{triple}
\tx_{xyxy}+\tx_{xyyz}+\tx_{xyzx}-\tx_{xyyx}-\tx_{xyxz}-\tx_{xyzy},
\end{equation}
for any three different states $x,y,z \in S$,
is a topology invariant if $T$ evolves under the EI/RC model.
\end{lemma}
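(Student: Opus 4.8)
The plan is to apply the forest criterion of Lemma~\ref{lem_forests}(b): the linear form $\Delta$ in \eqref{triple} is a phylogenetic invariant for $T=12|34$ if and only if it vanishes on $\PP_{T,\Theta_F}$ for each of the thirteen full subforests $F\in\mathcal{F}_T$, and separately I must check that $\Delta$ is \emph{not} a model invariant. First I would record the partitions of the six characters occurring in \eqref{triple}: in order, $xyxy$ and $xyyx$ induce the partitions $\{1,3\}\{2,4\}$ and $\{1,4\}\{2,3\}$, both \emph{incompatible} with $12|34$, whereas $xyyz,\,xyzx,\,xyxz,\,xyzy$ each induce a partition consisting of one pair ($\{2,3\}$, $\{1,4\}$, $\{1,3\}$, $\{2,4\}$ respectively) together with two singletons. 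The useful bookkeeping device is the formula extracted from the proof of Lemma~\ref{lem_equal}: for a full subforest $F$ one has $\tp_\chi(\Theta_F)=\prod_i \pi_{s_i}^{-(n_i-1)}$ when $\sigma(F)$ refines $\sigma(\chi)$ (with $s_i$ the state and $n_i$ the number of leaves of the $i$-th tree of $F$), and $\tp_\chi(\Theta_F)=0$ otherwise.

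With this in hand, the verification is a finite case analysis over the convex partitions of $12|34$. I would first dispose of the forests whose blocks are too coarse: any $F$ containing a block of size $3$ or $4$, or containing the block $\{1,2\}$ or $\{3,4\}$, refines the partition of none of the six characters (each has $\chi_1=x$ while $\chi_2=y$, and none is constant on $\{3,4\}$ or on any triple), so $\Delta(\PP_{T,\Theta_F})=0$ trivially. The trivial (all-singleton) forest refines every character and contributes $+1+1+1-1-1-1=0$. The substance is in the four remaining pair-forests $F_{\{1,3\}},F_{\{2,4\}},F_{\{1,4\}},F_{\{2,3\}}$: for each, exactly two of the six characters survive, and the key point is that they survive \emph{with opposite signs and equal value}. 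Concretely, $F_{\{1,3\}}$ leaves only $xyxy$ (sign $+$) and $xyxz$ (sign $-$), each evaluating to $1/\pi_x$; $F_{\{2,4\}}$ leaves $xyxy$ and $xyzy$, each $1/\pi_y$; $F_{\{1,4\}}$ leaves $xyzx$ and $xyyx$, each $1/\pi_x$; and $F_{\{2,3\}}$ leaves $xyyz$ and $xyyx$, each $1/\pi_y$. In every case the two surviving terms cancel, so $\Delta$ vanishes on all thirteen forests and is a phylogenetic invariant for $12|34$.

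Finally I would show $\Delta$ is a genuine topology invariant by evaluating it on a distribution from a competing quartet. Taking $T'=13|24$ and the full subforest $G=\{1,3\}\{2,4\}$ (which is convex on $T'$), the only surviving character is $xyxy$, giving $\Delta(\PP_{T',\Theta_G})=1/(\pi_x\pi_y)\neq 0$ since $\pi$ is positive; hence $\Delta$ is not a phylogenetic invariant for $13|24$, and so is not a model invariant. (Evaluating instead on $\{1,4\}\{2,3\}$ in $14|23$ yields $-1/(\pi_x\pi_y)$, confirming the asymmetry with respect to the third quartet as well.) The argument needs only $\kappa\geq 3$, to guarantee three distinct states $x,y,z$. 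There is no deep obstacle here beyond careful bookkeeping; the one point that must not be fumbled is matching, for each pair-forest, the \emph{signs} of the two surviving characters in \eqref{triple} against their equal values, since it is exactly this sign pairing that forces the cancellation.
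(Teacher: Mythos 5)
Your proof is correct and follows essentially the same route as the paper's: the forest criterion of Lemma~\ref{lem_forests}(b) followed by a case check over the thirteen full subforests of $12|34$, with the cancellation happening in sign-paired terms on the four surviving pair-forests (the paper compresses your explicit check by invoking the symmetry of leaves $3$ and $4$). Your final step, evaluating on the forest $\{1,3\}\{2,4\}$ of $13|24$ to confirm the polynomial is not a model invariant, is a worthwhile addition that the paper's proof leaves implicit.
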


\begin{proof} According to Lemma \ref{lem_forests} we need to prove that  \eqref{triple} vanishes when we evaluate it at the points
$q_F$, $F \in \mathcal{F}_T$. If $F$ is a forest such that $\sigma(F)$ does not refine any of the partitions $\{\{1,3\},\{2,4\}\}$,
$\{\{1,4\},\{2,3\}\}$, 
then the coordinates that appear in \eqref{triple} are all zero. If $\sigma(F)$ refines $\{\{1,3\},\{2,4\}\}$,
then $\sigma(F)$ is either $\{\{1,3\},\{2\},\{4\}\}$, or $\{\{2,4\},\{1\},\{3\}\}$ or the trivial forest.
In the first two cases \eqref{triple} evaluated at $q_F$ vanishes.
As the evaluation of any coordinate $\tx$ at the point associated to the trivial forest is one,
it also vanishes in this case. The remaining cases follow from the symmetry of leaves $3$ and $4$ in \eqref{triple}.
\end{proof}

\begin{remark}\rm
\textbf{Case $\kappa=4$.} For $n=5$ not all linear topology invariants are of Lake--type. In \citet{fu95} a complete list of 17 ($=|\Sigma_4|-|co(T)|=61-34$)
linear invariants that  generate  the space of linear topology invariants is given.
For example, for the fully  symmetric model on the set of states
$\{x,y,z,w\}$ (i.e. Jukes-Cantor model),
$$\x_{xyyxy}+\x_{xyzwz}-\x_{xyyzy}-\x_{xyzxz}$$
is a topology linear invariant that cannot be described by Proposition  \ref{lake}.
\end{remark}

\section{Explicit linear invariants for quartet trees}
\label{Explicit_linear_invariants_for_quartet_trees}
In this section we assume that $\kappa\geq 4$ and we shall deal with quartet trees and the star tree on four leaves. Note that in the previous section we gave an explicit description of linear phylogenetic invariants only when the distribution was uniform. For a generic distribution $\pi$ we managed to compute the dimension of the space of linear phylogenetic invariants, but we did not provide a explicit set of generators. We do it in this section for the case $n=4$, $\kappa\geq 4$, and any distribution $\pi$.

\begin{remark}\label{rmk_modelinvar}\rm
In the case of quartet trees on the set of taxa $X=[4]$, the possible tree topologies are $12|34$, $13|24$, $14|23$, and the star tree $T_*$. As the star tree is a subtree of the others, the vector space of phylogenetic mixtures is
$$E^{\pi}=\langle  \vec{q}_F \mid F \in  \mathcal{F}_{12|34} \rangle +\langle  \vec{q}_F \mid F \in  \mathcal{F}_{12|34} \rangle+\langle  \vec{q}_F \mid F \in  \mathcal{F}_{13|24} \rangle +\langle q_F \mid F \in  \mathcal{F}_{14|23} \rangle .$$
By Theorem \ref{thm_dimension} we know that the vectors $ \vec{q}_F$ are linearly independent if we let  $F$ move in the set of full subforests of the tree $A|B$. As $\mathcal{F}_{12|34}, \mathcal{F}_{13|24} $ and $\mathcal{F}_{14|23}$ intersect at the set of forests for the star tree $T_*$, in order to check whether a homogeneous linear polynomial vanishes at the vectors of $E^{\pi}$ one needs to check whether it vanishes at the 15 vectors of Tables 1, 2 and 3 that correspond to $12$ subforests of $T_*$ and one forest $ \vec{q}_{A|B}$ for each refined quartet).
\end{remark}

\begin{proposition}\label{prop_n4} Let $x,y,z,w$ be four different states and define
$$\beta_{x,y}=\pi_x^2\tx_{xxxy}+\pi_x\pi_y[\tx_{xxyy}+\tx_{xyxy}+\tx_{yxxy}]+\pi_x\pi_y[\tx_{zwxx}+\tx_{zxwx}+\tx_{xzwx}] +\pi_y^2\tx_{xyzw} ,$$
\begin{multline*}
\delta_{x,y}=\pi_x^2[\pi_x\tx_{xxxx}+\pi_y\tx_{xxxy}+\pi_z\tx_{xxxz}+\pi_w\tx_{xxxw}]+\\
+\pi_x\pi_y[\pi_x\tx_{xxyx}+\pi_y\tx_{xxyy}+\pi_z\tx_{xxyz}+\pi_w\tx_{xxyw}]+\\
+\pi_x\pi_y[\pi_x\tx_{xyxx}+\pi_y\tx_{xyxy}+\pi_z\tx_{xyxz}+\pi_w\tx_{xyxw}]+\\
+\pi_x\pi_y[\pi_x\tx_{yxxx}+\pi_y\tx_{yxxy}+\pi_z\tx_{yxxz}+\pi_w\tx_{yxxw}]+\\
+\pi_y^2[\pi_x\tx_{xyzx}+\pi_y\tx_{xyzy}+\pi_z\tx_{xyzz}+\pi_w\tx_{xyzw}].
\end{multline*}

Then following are linear model invariants for  quartet trees evolving under the $EI/RC$ model:
\begin{eqnarray}
\pi_y \tx_{xxyy}+\pi_z\tx_{xxyz}&-&\pi_y\tx_{xxzy} +\pi_z \tx_{xxzz} \label{eq_xxyy}\\
\pi_x \tx_{xxyz}+\pi_{w}\tx_{xwyz}&-&\pi_{w}\tx_{wwyz} +\pi_x \tx_{wxyz} \label{eq_xxyz}\\
\beta_{x,y}&-& \beta_{y,x} \label{eq_beta}\\
\delta_{x,y}&-& \delta_{y,x} \label{eq_delta}
\end{eqnarray}
One obtains analogous linear model invariants by considering any permutation of the set of leaves.

\end{proposition}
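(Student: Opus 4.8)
The plan is to use the fact that a model invariant is exactly a linear form that vanishes on the whole space $E^{\pi}$ of phylogenetic mixtures across the four trees on four leaves, together with the reduction recorded in Remark~\ref{rmk_modelinvar}: $E^{\pi}$ is spanned by only fifteen explicit vectors, namely the twelve star‑tree subforest points $\vec{q}_{\bullet}$, $\vec{q}_{ij}$, $\vec{q}_{ijk}$, $\vec{q}_{1234}$ together with $\vec{q}_{12|34}$, $\vec{q}_{13|24}$, $\vec{q}_{14|23}$. Thus each of (\ref{eq_xxyy})--(\ref{eq_delta}) is a model invariant precisely when it evaluates to $0$ on these fifteen vectors, and the whole proof reduces to a finite computation. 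The only input needed is the formula (from the discussion preceding Example~\ref{ex_points}) that the coordinate $\tx_{\chi}$ of $\vec{q}_{\sigma}$ equals $\prod_{B}\pi_{s_B}^{-(|B|-1)}$, with $s_B$ the state of $\chi$ on the block $B$, whenever $\sigma$ refines $\sigma(\chi)$, and equals $0$ otherwise. In particular every coordinate equals $1$ on $\vec{q}_{\bullet}$, and $\tx_{\chi}$, $\tx_{\chi'}$ take the same value on each $\vec{q}_{\sigma}$ whenever $\chi\equiv\chi'$ in the sense of Lemma~\ref{lem_equal}.

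For (\ref{eq_xxyy}) and (\ref{eq_xxyz}) I would simply substitute. Every coordinate appearing in them forces a prescribed pair of leaves to share a state, so the refinement condition annihilates all but a handful of the fifteen points: on the quartet points $\vec{q}_{13|24}$, $\vec{q}_{14|23}$ and on $\vec{q}_{123}$, $\vec{q}_{1234}$ the relevant coordinates vanish outright, while on the surviving points $\vec{q}_{\bullet}$, $\vec{q}_{ij}$ and $\vec{q}_{12|34}$ the $\pi$‑weighted terms cancel in pairs once the evaluation formula is applied. This is a short case check.

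The antisymmetric invariants (\ref{eq_beta}) and (\ref{eq_delta}) I would attack by recognizing a marginalization wherever possible. For (\ref{eq_delta}), note that $\sum_{u\in S}\pi_u\tx_{abcu}$ is exactly the normalized coordinate of the distribution obtained by marginalizing leaf $4$; grouping the twenty terms of $\delta_{x,y}$ by their first three leaves then gives $\delta_{x,y}-\delta_{y,x}=g\circ M_4$, where $M_4$ is the marginalization map at leaf $4$ and
\[ g=\pi_x^2\tx^{(3)}_{xxx}+\pi_x\pi_y(\tx^{(3)}_{xxy}+\tx^{(3)}_{xyx}+\tx^{(3)}_{yxx})+\pi_y^2\tx^{(3)}_{xyz}-[x\leftrightarrow y] \]
is a linear form on characters of the tripod. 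Evaluating $g$ on the five tripod subforest points of Example~\ref{casen=3} shows it vanishes on each — the $x\leftrightarrow y$ symmetry disposes of $\vec{q}_{\bullet}$ and $\vec{q}_{123}$, and each one‑cherry forest contributes $(\pi_x+\pi_y)-(\pi_y+\pi_x)$ — so by Lemma~\ref{lem_forests}(b) $g$ is a phylogenetic invariant of the tripod. Since removing leaf $4$ from any of the four trees on $[4]$ yields the tripod, Lemma~\ref{lem_margin} then makes $\delta_{x,y}-\delta_{y,x}=g\circ M_4$ a phylogenetic invariant of all four trees, i.e. a model invariant. For (\ref{eq_beta}), which does not reduce to a marginalization, I would fall back on direct evaluation on the fifteen points: the $x\leftrightarrow y$ antisymmetry forces the value $0$ on the fully symmetric point $\vec{q}_{\bullet}$ automatically, and the refinement condition again kills most coordinates on the remaining fourteen.

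I expect the bookkeeping for $\beta_{x,y}-\beta_{y,x}$ to be the main obstacle: unlike $\delta$ it does not descend to a lower‑leaf invariant, so one must track all sixteen terms through the evaluation formula on each point where several coordinates survive — principally the $\vec{q}_{ij}$ and the three quartet points — and verify cancellation block by block. Once (\ref{eq_xxyy})--(\ref{eq_delta}) are checked for the fixed labelling, the statement for an arbitrary permutation of the leaves is immediate, since relabelling permutes the four trees on $[4]$ among themselves and carries the spanning set of fifteen vectors to itself, so the identical computation applies.
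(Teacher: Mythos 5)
Your overall framework --- reducing ``linear model invariant'' to vanishing on the fifteen spanning vectors of $E^{\pi}$ recorded in Remark~\ref{rmk_modelinvar}, and then computing coordinatewise via the refinement rule for $\vec{q}_{\sigma}$ --- is exactly the paper's, and your treatment of \eqref{eq_xxyy}, \eqref{eq_xxyz} and \eqref{eq_beta} is sound. The only methodological difference there is that the paper obtains \eqref{eq_xxyy} and \eqref{eq_xxyz} by applying the Extension Lemma~\ref{lem_ext}(b) to a two--leaf invariant such as $\pi_y\tx_{yy}+\pi_z\tx_{yz}-\pi_y\tx_{zy}-\pi_z\tx_{zz}$, leaving only the two points $\vec{q}_{13|24}$, $\vec{q}_{14|23}$ to check by hand, whereas you substitute into all fifteen points directly; both routes work, yours with a little more bookkeeping. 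For \eqref{eq_beta} the paper does precisely the point--by--point evaluation you describe, organised by showing that $\beta_{x,y}$ evaluates at each of the fifteen points to an expression symmetric in $x$ and $y$. (A side remark your substitution would surface: as printed, \eqref{eq_xxyy} and \eqref{eq_xxyz} do not vanish at $\vec{q}_{\bullet}$; the paper's own derivation shows the intended final terms are $-\pi_z\tx_{xxzz}$ and $-\pi_x\tx_{wxyz}$.)

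The genuine gap is in your treatment of \eqref{eq_delta}. The identity $\delta_{x,y}-\delta_{y,x}=g\circ M_4$ holds only when $\kappa=4$: the marginal coordinate is $\tx^{(3)}_{abc}=\sum_{u\in S}\pi_u\tx_{abcu}$, whereas each bracket of $\delta_{x,y}$ sums only over $u\in\{x,y,z,w\}$. Section~6 assumes $\kappa\ge 4$, not $\kappa=4$, and Theorem~\ref{prop_linspace} needs these invariants for all such $\kappa$. For $\kappa>4$ your decomposition therefore leaves an error term, namely $\bigl(1-\pi_x-\pi_y-\pi_z-\pi_w\bigr)$ times the $x\leftrightarrow y$ antisymmetrization of
\begin{equation*}
\pi_x^2\tx_{xxxy}+\pi_x\pi_y\bigl(\tx_{xxyz}+\tx_{xyxz}+\tx_{yxxz}\bigr)+\pi_y^2\tx_{xyzw}
\end{equation*}
(after using Lemma~\ref{lem_equal} to replace the coordinates $\tx_{abcu}$ with $u\notin\{x,y,z,w\}$ by these representatives). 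That error term does in fact vanish on $E^{\pi}$, but establishing this requires its own evaluation on the spanning vectors --- a computation of essentially the same size as the one the marginalization was meant to avoid --- and you neither notice the discrepancy nor supply that check. The clean repair is to treat $\delta_{x,y}-\delta_{y,x}$ exactly as you treat $\beta_{x,y}-\beta_{y,x}$: evaluate $\delta_{x,y}$ at the fifteen points and verify symmetry in $x,y$ at each, which is what the paper does. Your closing argument for arbitrary leaf permutations is fine.
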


\begin{proof}
From the extension Lemma \ref{lem_ext}(b) it follows that \eqref{eq_xxyy} and \eqref{eq_xxyz} are model invariants.
Indeed, if we consider the star tree $T_2$ on two leaves,  then it is easy to check that
$$\pi_y\tx_{yy}+\pi_{z}\tx_{yz}-\pi_y\tx_{zy}-\pi_z\tx_{zz}$$
is a linear phylogenetic invariant. By identifying $T_2$ with the star tree  $T_{3,4}$ on leaves $3,4$ we can apply
Lemma \ref{lem_ext}(b) with $\mu=xx$ to obtain \eqref{eq_xxyy} for the quartet tree $T=12|34$ (because $T$ can be obtained by
attaching the tripod tree $T_{1,2,l}$ to the edge leading to leaf $3$ of $T_2$). 
In particular, $\eqref{eq_xxyy}$ vanishes for the star tree $T_*$ on four leaves.
 Similarly, in order to see that \eqref{eq_xxyz} is a phylogenetic invariant for the star tree $T_*$, we use the phylogenetic invariant
$$\pi_x\tx_{xx}+\pi_{w}\tx_{xw}-\pi_w\tx_{ww}-\pi_x\tx_{wx}$$
for the tree $T_2=T_{1,2}$ and apply Lemma \ref{lem_ext}(b) with $\mu=yz$.
By Lemma \ref{lem_ext}(c) we see that \eqref{eq_xxyz} is a phylogenetic invariant for the quartet tree $12|34$ (and hence also for the star tree  $T_*$).

In order to prove that \eqref{eq_xxyy} and \eqref{eq_xxyz} are model invariants, it only remains to check that these expression vanish when evaluated at $ \vec{q}_{13|24}$ and $ \vec{q}_{14|23}$,
which is straight forward because all coordinates involved in the expressions are 0 for these vectors.


We check now that $\eqref{eq_beta}$ and $\eqref{eq_delta}$ are model invariants having Remark \ref{rmk_modelinvar} in mind.
Looking at Table 1, we observe that $\beta_{x,y}$ (respectively $\delta_{x,y}$)  evaluated at $ \vec{q}_{\bullet}$ is $\pi_x^2+6\pi_x\pi_y+\pi_y^2$ (resp. $\pi_x^2+3\pi_x\pi_y+\pi_y^2(\pi_x+\pi_y+\pi_z+\pi_w)$).
As these expressions are symmetric for $x$ and $y$, \eqref{eq_beta} and $\eqref{eq_delta}$ vanish in this case.

Now we consider the other vectors in Table 1, $ \vec{q}_{B}$, where $B$ is a block of $m$ leaves, $m\geq 2$, and the partition associated to this point is $B$ and singleton blocks.

\noindent
We start with $m=2$. Using the equalities of lemma \ref{lem_equal}, we can see that $\beta_{x,y}$ and $\delta_{x,y}$ are symmetric under the permutation of leaves 1,2, and  3.
Thus we only need to consider that  $B$ is formed either by  $\{1,2\}$ or by $\{3,4\}$. In the first case,
$\beta_{x,y}$ evaluated at $ \vec{q}_B$ is $\pi_x+\pi_y$ and $\delta_{x,y}$ is $(\pi_x+\pi_y)(\pi_x+\pi_y+\pi_z+\pi_w)$. As these expressions are symmetric in $x$ and $y$, \eqref{eq_beta} and \eqref{eq_delta} also vanish in this case.
If $B=\{3,4\}$, then
the evaluation of $\beta_{x,y}$ at $ \vec{q}_B$ equals
  $\pi_x+\pi_y$ and the evaluation of \eqref{eq_delta} gives $\pi_x^2+3\pi_x\pi_y+\pi_y^2$. Again, these are symmetric in $x,y$ and \eqref{eq_beta}, \eqref{eq_delta} vanish.


Now we consider $m=3$. Let us assume first that $B=\{1,2,3\}$.  In this case,
the evaluation of $\beta_{x,y}$ at $ \vec{q}_B$ equals $1$ and the evaluation of $\delta{x,y}$ is $\pi_x+\pi_y+\pi_z+\pi_w$. Therefore \eqref{eq_beta} and \eqref{eq_delta} vanish at $ \vec{q}_B$.
If $B$ contains the leaf $4$, then all terms in the evaluation of  $\beta_{x,y}$ at $ \vec{q}_B$ are zero and the evaluation of $\delta{x,y}$ at $ \vec{q}_B$ is $\pi_x+\pi_y$. Therefore \eqref{eq_beta} and $\eqref{eq_delta}$ also hold for these vectors.

If $m=4$, then \eqref{eq_beta} vanishes trivially because all its terms are 0. Moreover $\delta{x,y}$ is equal to 1 when evaluated at $ \vec{q}_{1234}$ and there fore both equations hold for this vector.

The only remaining cases to check correspond to the vectors $ \vec{q}_{12|34}$, $ \vec{q}_{13|24}$ and $ \vec{q}_{14|23}$ of Tables 2 and 3.
As $\beta_{x,y}$ is equal to 1 and $\delta_{x,y}$ is equal to $\pi_x+\pi_y$ when these expressions are evaluated at these vectors, both equations \eqref{eq_beta} and \eqref{eq_delta} vanish on these vectors.
%
%

Note that when we apply a permutation of the set of leaves, the resulting polynomials are phylogenetic invariants because we have just proven that the original ones are  linear model invariants.
\end{proof}

\begin{theorem}\label{prop_linspace} For any distribution $\pi$, the space of linear model invariants $L^{\pi}$ for $n=4$ and $\kappa\geq 4$
is generated by the phylogenetic invariants of Proposition \ref{prop_n4} together with  $\tx_{\chi}-\tx_{\chi'}$ for any $\chi \equiv \chi'$
and has dimension $\kappa^4-B_4=\kappa^4-15$.
\end{theorem}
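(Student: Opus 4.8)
The plan is to separate the dimension count from the generation claim, since the dimension follows almost immediately from results already available. Because $n=4$ and $\kappa\geq 4$, we are in the case $\kappa\geq n$ of Theorem~\ref{thm_model}, so for \emph{any} positive $\pi$ the points $\{q_\sigma \mid \sigma\in\Sigma_\kappa\}$, with $\Sigma_\kappa$ the full set of $B_4=15$ partitions of $[4]$, are affine independent and span $\mathcal{D}^\pi$; hence $\dim\mathcal{D}^\pi=14$. Since the $q_\sigma$ are genuine distributions they lie in $H$, so the functional $\sum_\chi\x_\chi$ is nonzero on $E^\pi$ and $\dim E^\pi=\dim\mathcal{D}^\pi+1=15$. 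Therefore $\dim L^\pi=\dim (E^\pi)^\perp=\kappa^4-15$, which is the asserted dimension; moreover the same argument identifies $E^\pi=\langle \vec q_\sigma \mid \sigma\in\Sigma_\kappa\rangle$ as the span of these $15$ linearly independent vectors.

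For the generation claim let $W$ denote the span of the listed forms. By Proposition~\ref{prop_n4} the invariants \eqref{eq_xxyy}--\eqref{eq_delta} and their leaf-permutations are model invariants, and by Lemma~\ref{lem_equal}(a) so are the forms $\tx_\chi-\tx_{\chi'}$ with $\chi\equiv\chi'$; thus $W\subseteq L^\pi$. Since $\dim L^\pi=\kappa^4-15$ is now known, it suffices to prove $\dim W^\perp\leq 15$. As $W\subseteq L^\pi$ we automatically have $E^\pi=(L^\pi)^\perp\subseteq W^\perp$ with $\dim E^\pi=15$, so the real goal is the reverse inclusion $W^\perp\subseteq E^\pi$; establishing it forces $W=L^\pi$.

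To analyze $W^\perp$ I would first exploit the subfamily $\tx_\chi-\tx_{\chi'}$, $\chi\equiv\chi'$. A vector $v$ annihilates all of these exactly when the rescaled vector $\tilde v$, with $\tilde v_\chi=v_\chi/(\pi_{\chi_1}\pi_{\chi_2}\pi_{\chi_3}\pi_{\chi_4})$, is constant on every $\equiv$-class; so each $v\in W^\perp$ is determined by its finitely many class-values, indexed by a partition $\sigma(\chi)$ of $[4]$ together with the (distinct) states on the non-singleton blocks. The remaining generators then impose linear relations among these class-values. Crucially, each $\vec q_\sigma$ already lies in $W^\perp$ and is class-constant: in the rescaled coordinates $\tilde q_{\sigma,\chi}=\prod_{i:\,|B_i|\geq 2}\pi_{s_i}^{1-|B_i|}$ when $\sigma$ refines $\sigma(\chi)$ and $0$ otherwise. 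I would then show that the relations coming from \eqref{eq_xxyy}--\eqref{eq_delta} and their leaf-permutations pin down every class-value of an arbitrary $v\in W^\perp$ as a linear combination of the $15$ vectors $\tilde q_\sigma$, i.e. $\tilde v\in\langle \tilde q_\sigma\rangle$, whence $v\in E^\pi$.

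The main obstacle is precisely this last rank computation. Organizing the class-values by partition type of $[4]$ --- all-singletons, one pair plus two singletons, two pairs, one triple plus a singleton, and the full block --- one finds $3\kappa^2+8\kappa+1$ class-values, and one must verify that \eqref{eq_xxyy}--\eqref{eq_delta} (with all leaf-permutations) cut this down to exactly $15$ free parameters. The bookkeeping is delicate because the relations carry the state-dependent coefficients $\pi_s$: reducing \eqref{eq_xxyy} modulo the $\equiv$-relations collapses the two singleton coordinates and leaves a relation with coefficient $\pi_z-\pi_y$ tying the two-pair values to the pair-plus-singletons values, while \eqref{eq_beta} and \eqref{eq_delta} are needed to control the class-values supported on three or four states. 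I expect the cleanest route is to fix the $15$ representative class-values corresponding to the partitions used in Example~\ref{ex_points} and to show, type by type, that all other class-values are forced by the invariants; this simultaneously reproves the independence of the $\tilde q_\sigma$ and yields $W^\perp=E^\pi$, completing the proof.
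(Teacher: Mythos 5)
Your overall architecture is sound, and your dimension count is clean and correct: since $\kappa\geq n=4$, Theorem~\ref{thm_model} gives $\dim\mathcal{D}^{\pi}=B_4-1=14$ for any positive $\pi$, hence $\dim E^{\pi}=15$ and $\dim L^{\pi}=\kappa^4-15$. You have also correctly reduced the generation claim to the single inequality $\dim W^{\perp}\leq 15$, and your bookkeeping of the $\equiv$-classes (the count $3\kappa^2+8\kappa+1$) is accurate. This matches the paper's strategy: the paper defines $F^{\pi}=W^{\perp}$, shows every coordinate $\tilde{\x}_{\chi}$ of a vector in $F^{\pi}$ is a linear combination of $15$ explicit representatives, and then sandwiches $E^{\pi}\subseteq F^{\pi}$ between the lower bound $\dim E^{\pi}\geq 15$ (from Lemma~\ref{lem_pointsmodel}(a)) and the upper bound $\dim F^{\pi}\leq 15$.

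The genuine gap is that you never carry out that rank computation; you explicitly defer it (``the main obstacle is precisely this last rank computation,'' ``I expect the cleanest route is\dots''), and it is the entire substance of the theorem. The paper's proof consists precisely of executing this reduction, and the order of operations matters: first the $\equiv$-relations collapse coordinates within a class; then \eqref{eq_xxyz} expresses $\tilde{\x}_{x'x'y'z'}$ in terms of $\tilde{\x}_{xxyz}$ and $\tilde{\x}_{xyzw}$; then \eqref{eq_xxyy} handles $\tilde{\x}_{xxy'y'}$ for $y'\neq y$, while $\tilde{\x}_{yyxx}$ requires chaining \eqref{eq_xxyy} \emph{three} times (through $\tilde{\x}_{yyzz}$ and $\tilde{\x}_{xxzz}$) --- a step that is easy to miss if one only counts equations against unknowns; then \eqref{eq_beta} controls $\tilde{\x}_{x'x'x'y'}$ and \eqref{eq_delta} controls $\tilde{\x}_{x'x'x'x'}$, each in terms of coordinates already reduced; finally leaf-permutations cover the remaining partition types. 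Without exhibiting such an elimination order (or otherwise verifying the rank of the system with its $\pi$-dependent coefficients, where cancellations like the $\pi_z-\pi_y$ factor you mention could in principle cause a rank drop for special $\pi$), the claim ``for any distribution $\pi$'' is not established. Your proposal is a correct plan, not a proof.
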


For the fully symmetric model we have already seen in Remark \ref{rem_symmetric} that $\x_{\chi}-\x_{\chi'}$ are linear phylogenetic invariants
if $\sigma(\chi) =\sigma(\chi')$. In this case this set of invariants defines the same
vector space as the phylogenetic invariants in Theorem \ref{prop_linspace}.

\begin{remark}\rm Although one could replace \eqref{eq_beta}  by other phylogenetic invariants obtained from marginalization
from a phylogenetic invariant relating $\tx_{xxy}$ and $\tx_{yyx}$ on the tripod, this expression would have less symmetries than
\eqref{eq_beta} and therefore we decided to use \eqref{eq_beta} instead (similarly for \eqref{eq_delta}).
\end{remark}

\begin{proof}
We let  $F^{\pi}$  be the space of vectors where all the linear polynomials in the statement vanish.
Then we shall prove that for the vectors in $F^{\pi}$, any coordinate $\tx_{\chi}$ can be expressed as a linear combination of the
following 15 coordinates:
\begin{gather*}
\tx_{xxxx}\\
\tx_{xxxy},\tx_{xxyx},\tx_{xyxx},\tx_{yxxx}\\
\tx_{ xxyy},\tx_{xyxy},\tx_{xyyx}\\
\tx_{ xxyz},\tx_{xyxz},\tx_{xyzx},\tx_{yxzx},\tx_{yxxz},\tx_{yzxx}\\
 \tx_{xyzw}
\end{gather*}

This will prove that $F^{\pi}$ is a vector space of dimension 15 or lower.
By Lemma \ref{lem_pointsmodel} we know that $\dim \mathcal{D}^{\pi}$ is $\geq |\Sigma_{\kappa}|-1$, which is $B_4-1=14$ for $n=4$. As we have
the inclusion $ \mathcal{D}^{\pi} =E^{\pi}\cap H \subseteq F^{\pi}\cap H$ this will finish the proof.

First note that by Lemma \ref{lem_equal} we have $\tx_{xxxy'}=\tx_{xxxy}$, $\tx_{xxy'z'}=\tx_{xxyz}$, $\tx_{x'y'z'w'}=\tx_{xyzw}$ for any $y'\neq x,x'$, $z'\neq y,y',x,x'$, $w'\neq x,y,z,x',y',z'$.

Using the equation \eqref{eq_xxyz}=0 one can write $\tx_{x'x'y'z'}$ as a linear combination of $\tx_{xxyz}$ and $\tx_{xyzw}$.
The equation \eqref{eq_xxyy}=0 allows us to put $\tx_{xxy'y'}$  as a linear combination of $\tx_{xxyy}$ if $y'\neq y$.
In order to write $\tx_{yyxx}$ (or similarly $\tx_{yxxy}$) in terms of the allowed coordinates we need to do two steps. We use expression $\eqref{eq_xxyy}$ three times to put first $\tx_{yyxx}$ in terms of $\tx_{yyzz}$ first, then $\tx_{yyzz}$ in terms of $\tx_{xxzz}$ and finally $\tx_{xxzz}$ in terms of $\tx_{xxyy}$.
Interchanging the role of leaves 1,2 with  3,4 we also obtain $\tx_{x'x'yy}$ as a linear combination of $\tx_{xxyy}$ if $x'\neq x$.
In the same way, we can use the equation \eqref{eq_beta}=0 to put $\tx_{x'x'x'y'}$ as a linear  combination of $\tx_{xxxy}$ and other coordinates
which we now know that are linear combinations of the allowed coordinates. 
Finally, we use the equation \eqref{eq_delta}=0 to put $\tx_{x'x'x'x'}$ for $x'\neq x$ as a linear combination of $\tx_{xxxx}$ and other allowed coordinates.

By considering these relations above and all permutations of the leaves, we end up with every coordinate written as a linear combination of  the allowed list of 15 coordinates.
\end{proof}

We now consider the two linear topology invariants
that we obtained in Example~\ref{ex_lake}: in terms of the $\tx's$ above, the corresponding equations for the quartet tree $12|34$ these are
$$H_1: \quad \tx_{xyxy}+\tx_{xyzw}=\tx_{xyzy}+\tx_{xyxw}$$
$$H_2: \quad \tx_{xyyx}+\tx_{xywz}=\tx_{xyyz}+\tx_{xywx}.$$

Equations $H_1$ and $H_2$ are linearly independent and drop the dimension by two.
In total, we have that  $\mathcal{D}_{12|34}^{\pi}$ is contained in an affine space $E^{\pi}\cap H  \cap H_1\cap H_2$ of dimension 12. As the dimension
of $\mathcal{D}_{12|34}^{\pi}$ is 12 and for the star tree $\dim \mathcal{D}_{T_*}^{\pi}=11$  we have:

\begin{corollary}
\label{corocor}
For $n=4$ and any distribution $\pi$ one has
$$\mathcal{D}^{\pi}=E^{\pi}\cap H$$
$$\mathcal{D}_{12|34}^{\pi}=E^{\pi}\cap H\cap H_1\cap H_2$$
$$\mathcal{D}_{T_*}^{\pi}=E^{\pi}\cap H\cap H_1\cap H_2\cap H_3$$
where $H_3:$ $\tx_{xxyy}+\tx_{xzyw}=\tx_{xzyy}+\tx_{xxyw}$ and ${T_*}$ denotes the star tree on four leaves. In particular, Lake--type invariants generate all linear topology invariants for quartet trees evolving under the $EI$ model.
\end{corollary}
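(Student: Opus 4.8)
The plan is to establish all three identities by comparing dimensions, using the values already computed in Theorems~\ref{thm_model} and~\ref{thm_dimension} together with the explicit affinely independent points $q_\sigma$ of Example~\ref{ex_points}. The first identity $\mathcal{D}^{\pi}=E^{\pi}\cap H$ requires no new argument: it is the instance $n=4$, $\kappa\geq 4$ of the equality $\mathcal{D}^{\pi}=E^{\pi}\cap H$ recorded in Section~\ref{Phylogenetic_mixtures}, and Theorem~\ref{thm_model} assigns it affine dimension $|\Sigma_\kappa|-1=B_4-1=14$.

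For the second identity I would first note the inclusion $\mathcal{D}_{12|34}^{\pi}\subseteq E^{\pi}\cap H\cap H_1\cap H_2$. Indeed, by Example~\ref{ex_lake} both $H_1$ and $H_2$ are Lake--type phylogenetic invariants of the quartet $12|34$, so they vanish on every $\PP_{T,\Theta}$ with $T=12|34$ and hence on the affine span $\mathcal{D}_{12|34}^{\pi}$. To promote this to an equality I would show that $H_1$ and $H_2$ restrict to independent functionals on $E^{\pi}$, so that imposing $H_1=0$ and $H_2=0$ lowers the dimension of $E^{\pi}\cap H$ from $14$ to exactly $12$; since Theorem~\ref{thm_dimension} gives $\dim\mathcal{D}_{12|34}^{\pi}=|\mathcal{F}_{12|34}|-1=F_7-1=12$, an inclusion of affine spaces of equal dimension is forced to be an equality.

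For the third identity I would proceed the same way after checking that $H_3$ is a genuine topology invariant of the star tree $T_*$. By Lemma~\ref{lem_forests}(b) it suffices to evaluate $H_3$ on the points $q_F$ as $F$ ranges over the $|\co(T_*)|=12$ full subforests of $T_*$; a direct check on the coordinate values of Tables~\ref{table_star}--\ref{table_other} (using the symmetry of leaves $1,2$ and of $3,4$) shows $H_3$ vanishes on each of them, whereas $H_3(q_{12|34})=1/(\pi_x\pi_y)\neq 0$, so $H_3\notin L^{\pi}$. The crux in both steps is the independence of $H_1,H_2,H_3$ as functionals on the $15$-dimensional space $E^{\pi}$, and I expect this to be the only real obstacle; the delicate requirement is that it hold for \emph{every} positive $\pi$, not merely generically. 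I would settle it cleanly by evaluating the three functionals on the three points $q_{13|24}$, $q_{14|23}$, $q_{12|34}$: one finds $H_1$ nonzero on $q_{13|24}$ but vanishing on the other two, $H_2$ nonzero on $q_{14|23}$ but vanishing on the other two, and $H_3$ nonzero on $q_{12|34}$, so the resulting $3\times 3$ matrix is triangular with nonvanishing diagonal entries $1/(\pi_x\pi_y)$ and hence invertible for all positive $\pi$. This yields $\dim(E^{\pi}\cap H\cap H_1\cap H_2\cap H_3)=14-3=11=\dim\mathcal{D}_{T_*}^{\pi}=|\mathcal{F}_{T_*}|-1$ (Theorem~\ref{thm_dimension}), forcing equality.

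Finally, for the closing assertion I would record that $H_3$ is itself of Lake type, arising from Proposition~\ref{lake} with $t,t'$ the leaves $2,4$, states $a_1=x$, $a_2=z$, $b_1=y$, $b_2=w$, and $\mathcal{E}$ the event that leaves $1,3$ read $x,y$. Then, since by Corollary~\ref{cor_topoinvar} (applicable for any $\pi$ because $\kappa\geq n=4$) the quotients $L_{12|34}^{\pi}/L^{\pi}$ and $L_{T_*}^{\pi}/L^{\pi}$ have dimensions $|\Inc(12|34)|=2$ and $|\Inc(T_*)|=3$, the independent Lake--type invariants $H_1,H_2$ (respectively $H_1,H_2,H_3$) already span them; the remaining quartets $13|24$ and $14|23$ follow by relabelling leaves, which shows that Lake--type invariants generate all linear topology invariants for quartet trees under the $EI$ model.
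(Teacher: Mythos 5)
Your proposal is correct and follows essentially the same route as the paper: the paper also deduces the corollary by a dimension count, noting that $H_1$ and $H_2$ are independent topology invariants cutting $E^{\pi}\cap H$ (of dimension $14$) down to dimension $12=\dim\mathcal{D}_{12|34}^{\pi}$, and similarly with $H_3$ for the star tree. Your explicit verification of independence via the triangular evaluation matrix on $q_{13|24}$, $q_{14|23}$, $q_{12|34}$, and your check that $H_3$ is a Lake--type invariant vanishing on all twelve star subforests, simply make rigorous what the paper asserts in passing.
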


\section{The infinite-state random cluster model $RC_{\infty}$}
\label{The_infinite-state_random_cluster_model}
Recall that in the random cluster model, each edge of $T$ is cut with some probability $\theta_e$ to obtain a resulting partition $\sigma$ of
the leaf set $X$. Each block is then assigned a state independently according to the distribution $\pi$.
However, we could just consider the partition $\sigma$ itself as the output of this process
(rather than assigning states, which has the effect of combining some blocks together when they receive the same state).
We call this the {\em infinite state $RC$ model} $RC_{\infty}$ since it has a natural interpretation as the limiting distribution on partitions induced by the $EI/RC$ model as the number of states $\kappa$ in $S$ tends to infinity when states have at least
roughly similar probabilities.

More precisely, under the $RC$ model, the probability that two blocks of $\sigma$ are assigned a same state in the equal input model is at
most $n\sum_{\alpha \in S} \pi_\alpha^2$, by Boole's inequality (note that there are at most
$n$ blocks in $\sigma$).  Suppose that  $\pi_\alpha \in [a/k, b/k]$ for some fixed $a,b$ then as $k=|S| \rightarrow \infty$ all blocks of
$\sigma$ receive distinct states with probability converging to 1 (this restriction on $\pi$ can be weakened a little further).
The $RC_{\infty}$ model is sometimes referred to as the `Kimura's infinite alleles' model in phylogenetics, and it was studied mathematically in \citet{mos04}.

\subsection{Linear invariants for $RC_{\infty}$}
The linear phylogenetic invariants for the infinite-state random cluster model are particularly easy to describe.


Let $p_\sigma = \PP_T(\sigma | \Theta)$ be the probability of generating partition $\sigma$ on $T$ under the
$RC_\infty$ model with edge cut probabilities $\Theta = (\theta_e)$, and recall the definitions of ${\co}(T)$ and ${\Inc}(T)$ from Section ~\ref{Combinatorial_concepts_and_terminology}.

\begin{proposition}
\label{proinf}
Under the $RC_\infty$ model:
\begin{itemize}
\item[(i)]  $\PP_T(\sigma | \Theta) =0$ for all $\Theta$ if and only if $\sigma \in \Inc(T)$.
\item[(ii)]  $\{\x_\sigma : \sigma \in {\Inc}(T)\}$ forms a basis for the vector space $L_T$ of
linear phylogenetic invariants for $T$ and of the space of linear topology invariants.
Consequently,  this space has dimension $|{\Inc}(T)| = B_n - |{\co}(T)|$.
\item[(iii)] The space of all phylogenetic mixtures on $T$ has dimension $|{\ co}(T)|-1$.
\item[(iv)] The space of all phylogenetic mixtures on all $n$--leaf trees under the $RC_\infty$ model has dimension $B_n -1$.
\end{itemize}
\end{proposition}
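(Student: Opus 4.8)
The plan rests on one clean observation: under $RC_\infty$ the observable is a partition, so the ambient space is $\RR^{B_n}$ with coordinates $\x_\sigma$ indexed by the partitions $\sigma$ of $[n]$, and the distribution vector $\PP_{T,\Theta}$ has entries $p_\sigma = \PP_T(\sigma\mid\Theta)$. Exactly as in the finite case (see the remark following Eqn.~(\ref{param$RC$})), each $p_\sigma$ is multilinear in the cut parameters $\Theta=(\theta_e)_{e\in E(T)}$, hence satisfies the hypotheses of Lemma~\ref{lem1}; so $p_\sigma\equiv 0$ if and only if $p_\sigma$ vanishes at every $0/1$ parameter choice $\Theta_F$, $F\in\mathcal{F}_T$. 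The point that makes everything fall out is that at such a $0/1$ choice every edge is cut deterministically, so the process outputs the single partition $\sigma(F)$ with probability $1$; writing $e_\tau$ for the standard basis vector indexed by a partition $\tau$, this says $\PP_{T,\Theta_F}=e_{\sigma(F)}$.

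For (i), I would use that $\sigma(F)$ is convex on $T$ for every $F$, and conversely that every convex partition arises as $\sigma(F_T(\sigma))$. Thus if $\sigma\in\Inc(T)$ then $\sigma\neq\sigma(F)$ for all $F$, so $p_\sigma(\Theta_F)=0$ for all $F$ and therefore $p_\sigma\equiv 0$; while if $\sigma\in\co(T)$ then taking $F=F_T(\sigma)$ gives $p_\sigma(\Theta_F)=1\neq 0$, so $p_\sigma\not\equiv 0$. This settles (i).

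For (ii), part (i) shows that each coordinate function $\x_\sigma$ with $\sigma\in\Inc(T)$ lies in $L_T$, and these are manifestly linearly independent. To show they span $L_T$, let $E_T=\langle\,\PP_{T,\Theta}\,\rangle$ be the linear span of the distribution vectors, so $L_T=E_T^{\perp}$ (by Lemma~\ref{lem_forests}, $E_T=\langle\,\PP_{T,\Theta_F}\mid F\in\mathcal{F}_T\,\rangle$). By the observation above this spanning set is precisely $\{e_\sigma\mid\sigma\in\co(T)\}$, so $E_T$ is the coordinate subspace supported on $\co(T)$, of dimension $|\co(T)|$, and hence $L_T$ has basis $\{\x_\sigma\mid\sigma\in\Inc(T)\}$ and dimension $B_n-|\co(T)|=|\Inc(T)|$. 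Finally, since every partition of $[n]$ is convex on at least one $n$--leaf tree, taking the union over all $T$ shows that the distribution vectors across all trees span the whole of $\RR^{B_n}$; consequently the space of linear model invariants is $\{0\}$, and so $L_T$ coincides with the space of linear topology invariants.

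Parts (iii) and (iv) then follow by intersecting with the hyperplane $H=\{\sum_\sigma\x_\sigma=1\}$. Because $E_T$ contains genuine distributions (the vectors $e_\sigma$ have coordinate sum $1$), the functional $\sum_\sigma\x_\sigma$ is not identically zero on $E_T$, so the space of mixtures $E_T\cap H$ has dimension $\dim E_T-1=|\co(T)|-1$, giving (iii); and since the span over all trees is all of $\RR^{B_n}$, the space of all mixtures is $H$ itself, of dimension $B_n-1$, giving (iv). I expect the only step requiring genuine care to be the spanning claim in (ii): everything hinges on the $0/1$ evaluations producing \emph{precisely} the standard basis vectors $e_\sigma$, so that the inclusion $E_T\subseteq\langle\,e_\sigma\mid\sigma\in\co(T)\,\rangle$ coming from (i) is in fact an equality rather than a proper containment.
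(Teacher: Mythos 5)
Your proposal is correct and follows essentially the same route as the paper: both rest on the fact that at a $0/1$ parameter choice the $RC_\infty$ process deterministically outputs the convex partition $\sigma(F)$, so the spanning vectors of $E_T$ are exactly the standard basis vectors indexed by $\co(T)$, from which (ii)--(iv) follow by taking orthogonal complements and intersecting with $H$. The only divergence is in the forward direction of (i), where the paper argues directly for arbitrary $\Theta$ that an incompatible partition forces two intersecting uncut paths (hence merged blocks), while you instead invoke multilinearity (Lemma~\ref{lem1}) to reduce to the $0/1$ evaluations; both arguments are valid and yield the same conclusion.
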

\begin{proof}
(i) Suppose that $\sigma \in \Inc(T)$. Then there exists two blocks $B, B'$ of $\sigma$ and leaves $x,y \in B$ and
$x', y' \in B'$ for which the paths $P(T; x,y)$ and $P(T; x',y')$ share at least one vertex.
Now since $x,y \in B$ and $x',y' \in B'$ the only way to generate $\sigma$ under $RC_\infty$ is if none
of the edges in the two paths $P(T; x,y)$ and $P(T; x',y')$ is cut. Since these paths intersect on a vertex this implies that $x$ and $x'$ must be
the same block, i.e. that $B = B'$. Thus $\sigma$ cannot be generated with positive probability under the $RC_\infty$ model.
Conversely, suppose that $\sigma$ is convex on $T$. Then set $\theta_e = 0$ for
all edges in $\{T[B]: B \in \sigma\}$ and set $\theta_e=1$ for all other edges. Then $p_\sigma = 1$.

\noindent
(ii)   If $\sum\lambda_{\sigma}\x_{\sigma}$ is a linear phylogenetic invariant, then for any $\sigma$
convex on $T$ we can choose a set of parameters $\Theta$ such that $p_{\sigma}=1$ (see above).
This implies that $\lambda_{\sigma}=0$ for any $\sigma \in {\co}(T)$.
This and (i) show that the set spans the space of all linear phylogenetic invariants, and linear independence
follows immediately from the observation that each polynomial involves a variable not present in any other polynomial in this set.
Note that all these polynomials are topology invariants.

\noindent
(iii) The space of phylogenetic mixtures $\mathcal{D}_T$ on $T$ is equal to $E_T \cap H$ where $E_T$ is the space of vectors on which the
linear phylogenetic invariants vanish and $H$ is the hyperplane defined by the trivial equation $\sum_{\sigma} \x_{\sigma}=1$
(the sum is over all partitions of $[n]$).  By (ii), $E_T$ has dimension $B_n-{\Inc}(T)=|{\co}(T)|$ and we are done.

\noindent
(iv) Note that in the basis $\{\x_\sigma : \sigma \in {\rm Inc}(T)\}$ of (ii) there are no model invariants.
Therefore, the set $\mathcal{D}$ of phylogenetic mixtures on all trees coincides with the trivial hyperplane $H$ and has dimension $B_n-1$.
\end{proof}

The construction of  certain quadratic phylogenetic invariants for $RC_\infty$ is also quite easy.
Let $x\sim y$ denote the event that $x$ and $y$ are in the same block of the partition generated by a phylogeny under the $RC_\infty$ model, and let $p(x,y)$ denote the probability of that event.  Note that $p(x,y)$ is a sum of
$p_\sigma$ values over all $\sigma$ for which $x$ and $y$ are in the same block.
Then $p(x,y) = \prod_{e \in P(T; x,y)} (1-\theta_e)$, where $P(T; x,y)$ is the path in $T$ between $x$ and $y$.
It follows (from the four point condition) that if the quartet tree obtained by restricting $T$ to $x,y,w,z$ is either
$xy|wz$ or the star tree, then
$$p(x,w)p(y,z)-p(x,z)p(y,w)=0.$$

\section{Future work}
\label{Future_work}
It would be interesting to generalize Lake--type invariants in such a way that they generate the space of linear topology invariants
for $\kappa< n$ ({\em cf.} Corollary \ref{cor_Lake}).
On the other hand, it also would be useful to give explicit linear model invariants (with many symmetries) for any number of leaves,
as  was done in Section 4 for $n=3,4$. These model invariants could be used for model selection as it was done in \citet{ked12}
for the uniform distribution. Extending the work of Section 4 to other models is also of interest because this would increase the range of
models that can be considered in certain model selection software such as SPIn (http://genome.crg.es/cgi-bin/phylo\_mod\_sel/AlgModelSelection.pl).


\section*{Acknowledgements}
We thank the two anonymous reviewers for their helpful comments on an earlier version of this manuscript.
Part of this research was performed while MC was visiting the Biomathematics Research Center of the University of Canterbury.
MC would like to thank the Biomathematics Research Center (and specially its director) for the invitation, the support provided, and the great working atmosphere. MC is partially supported by MTM2012-38122-C03-01, MTM2015-69135-P (MINECO/FEDER) and Generalitat de Catalunya 2014 SGR-634.

\bibliographystyle{jtbnew}

\end{document}